\theoremstyle{definition}
\newtheorem{definition}{Definition}[section]
\theoremstyle{theorem}
\newtheorem{lemma}{Lemma}[section]
\newtheorem{theorem}{Theorem}[section]
\newtheorem{proposition}{Proposition}[section]
\newtheorem{observation}{Observation}[section]
\newtheorem{remark}{Remark}[section]
\tikzset{
	0 my edge/.style={my edge, my edge},
	my edge/.style={-{Stealth[]}},
}
\title{Power of Decision Trees with Monotone Queries}
\author{Prashanth Amireddy\thanks{Harvard University. This work was done while the author was an undergraduate student at IIT Madras.} \and Sai Jayasurya\thanks{This work was done while the author was an undergraduate student at IIT Madras.} \and
Jayalal Sarma\thanks{Indian Institute of Technology Madras.}}
\newcommand{\introthm}[2]{\vspace{1mm}\noindent \textbf{Theorem~\ref{#1}.} \textit{#2} \vspace{1mm}}
\def\movetoappendix{1}
\newenvironment{aproof}[2]
  { \@nameuse{collect}{appendix}
  { \subsection{#1} \label{#2} \begin{proof} } {\end{proof}}
  }{\@nameuse{endcollect}}
\newenvironment{appsection}[2]
  { \@nameuse{collect}{appendix}
  { \subsection{#1} \label{#2} }
  {}
  }{\@nameuse{endcollect}}
        \renewenvironment{aproof}[2]{\begin{proof} \color{gray}  } {\end{proof} }
\newcommand{\bigO}{\mathcal{O}}
\newcommand{\zo}{\{0,1\}}
\newcommand{\zon}{\zo^n}
\renewcommand{\bar}[1]{\overline{\vphantom{1}#1}}
\newcommand{\DT}{\mathsf{DT}}
\newcommand{\DTm}{\mathsf{DT}_m}
\newcommand{\Cm}{\mathsf {C}_m}
\newcommand{\DTmna}{\mathsf{DT}_m^{na}}
\newcommand{\DTmr}{\mathsf{DT}_m^{r}}
\newcommand{\DTmrbag}[1]{{\sf DT}^{R,#1}_m}
\newcommand{\Cmna}{\mathsf{C}_m^{na}}
\newcommand{\DLm}{\mathsf{DL}_m}
\newcommand{\Th}{\mathsf{Th}}
\newcommand{\fdim}{\mathsf{fdim}}
\newcommand{\MDT}[1]{{\mathsf{DT}}(\textit{mon-}#1)}
\newcommand{\MDTT}[1]{{\mathsf{DT}}^{#1}}
\newcommand{\MDTTT}[2]{{\mathsf{DT}}^{#1}(\textit{mon-}#2)}
\newcommand{\RMDT}[1]{{\sf{RDT}}(\textit{mon-}#1)}
\newcommand{\MDL}[1]{{\mathsf{DL}}(\textit{mon-}#1)}
\newcommand{\alt}{\mathsf{alt}}
\newcommand{\calC}{{\cal{C}}}
\begin{document}
\maketitle	

\begin{abstract}
In this paper, we initiate study of the computational power of adaptive and non-adaptive monotone decision trees -- decision trees where each query is a monotone function on the input bits. In the most general setting, the monotone decision tree height (or size) can be viewed as a \textit{measure of non-monotonicity} of a given Boolean function. We also study the restriction of the model by restricting (in terms of circuit complexity) the monotone functions that can be queried at each node. This naturally leads to complexity classes of the form $\MDT{\calC}$ for any circuit complexity class $\calC$, where  the height of the tree is $\bigO(\log n)$, and the query functions can be computed by monotone circuits in the class $\calC$. In the above context, we prove the following characterizations and bounds.
\begin{itemize}
\item For any Boolean function $f$, we show that the minimum monotone decision tree height can be exactly characterized (both in the adaptive and non-adaptive versions of the model) in terms of its {\em alternation} ($\alt(f)$ is defined as the maximum number of times that the function value changes, in any chain in the Boolean lattice). We also characterize the non-adaptive decision tree height with a natural generalization of certification complexity of a function. Similarly, we determine the complexity of non-deterministic and randomized variants of monotone decision trees in terms of $\alt(f)$.
%We also provide natural generalization of certificate complexity of a Boolean function in this setting, denoted by $\C_m$ (and its non-adaptive version) and derive characterizations of non-adaptive monotone decision tree depth using the same. 
%This also gives a new characterization for the alternation of a Boolean function in terms of the generalized certificate complexity.
\item We show that $\MDT{\calC} = \calC$ when $\calC$ contains monotone circuits for the threshold functions (for e.g., if $\calC = \TC^0$). For $\calC = \AC^0$, we are able to show that any function in $\AC^0$ can be computed by a sub-linear height monotone decision tree with queries having monotone $\AC^0$ circuits.
\item To understand the logarithmic height case in case of $\AC^0$ i.e., $\MDT{\AC^0}$, we show that for any $f$ (on $n$ bits) in $\MDT{\AC^0}$, and for any positive constant $\epsilon\le 1$, there is an $\AC^0$ circuit for $f$ with $\bigO(n^\epsilon)$ negation gates. 
%In contrast, it can be derived from \cite{SW93} that for every $f \in \AC^0$ with $\alt(f) = \Omega(n)$, and for every $\epsilon > 0$,
% is a Boolean function in  $AC^0$ with alternation $n$, 
%any $\AC^0$ circuit computing it with $\bigO(n^\epsilon)$ negations will need at least $\frac{1}{\epsilon}$ depth.
\end{itemize}
En route our main proofs, we study the monotone variant of the decision list model, and prove corresponding characterizations in terms of $\alt(f)$ and also derive as a consequence that $\MDT{\calC} = \MDL{\calC}$ if $\calC$ has appropriate closure properties (where $\MDL{\calC}$ is defined similar to $\MDT{\calC}$ but for {\em decision lists}).
\end{abstract}

\tableofcontents

\section{Introduction}
\label{sec:intro}
The {\em decision tree} model is a fundamental abstraction that captures computation appearing in various scenarios, ranging from query based decision making procedures to learning algorithms for Boolean functions. The model represents the algorithmic steps in order to compute a Boolean function $f:\{0,1\}^n \to \{0,1\}$, as a sequence of branching operations based on queries to the input bits and the branching depends on the result of the query. It is quite natural to view the branching as a rooted binary tree where the leaves of the tree are labeled with 0 or 1 to represent value of the function if the computation reaches that leaf.

The simplest form studied is when the queries are directly to bits of the input\cite{Juk12,BW02} -- and hence the nodes of a decision tree (except for leaves) are labeled with input variables which it queries. For a Boolean function $f$, the deterministic decision tree complexity, $\DT(f)$, is the minimum height of any decision tree computing $f$. By \emph{height}, we always refer to the maximum number of internal nodes in a path from root to a leaf. The size of the decision tree, which is defined as the number of leaves in the tree is an independently interesting measure of complexity of $f$, and indeed, since the tree is binary, the size cannot be more than exponential in $\DT(f)$. Generalization of the model of decision trees in the algorithmic setting has been studied -- namely randomized and quantum decision trees (see \cite{BW02}). Decision trees can be adaptive and non-adaptive depending on whether, in the algorithm, the next query depends on the Boolean result of the previous queries or not. In the interpretation of the tree, this translates to whether the tree queries the same variable at all nodes in the same level.

The (adaptive) decision tree height, $\DT(f)$ is related to many fundamental complexity measures of Boolean functions. It is known to be polynomially related to degree of $f$ over $\mathbb{R}$, block sensitivity, certificate complexity (see survey \cite{BW02}) and with the recent resolution of sensitivity conjecture \cite{huang2019induced}, even to sensitivity of the Boolean function $f$. Non-adaptive decision trees are not as powerful. 

An important way of generalizing the decision tree model is by allowing stronger queries than the individual bit queries. One of the well-studied models in this direction is that of parity decision trees where each query is a parity of a subset of input bits~\cite{KM93}. Each node in the tree is associated with a subset $S \subseteq [n]$
\footnote{We denote the set $\{1,2,\dots ,n\}$ by $[n]$.} and the query to the input at the node is the function $\oplus_{i \in S} x_i$, where $x_i$ stands for the $i^{th}$ bit of $x$. The model of parity decision trees received a lot of attention due to its connection to a special case of log-rank conjecture known as the XOR-log-rank conjecture \cite{SZ10}. The conjecture, in particular, implies that the non-adaptive $(\DT_{\oplus}^{\sf na}(f))$ and adaptive $(\DT_{\oplus}(f))$ parity decision complexity measures of functions are not polynomially related in general\footnote{If ${\sf supp}(f) = \{S \subseteq [n] \mid \hat{f}(S) \ne 0\}$, ${\sf sps}(f) = |{\sf supp}(f)|$ and ${\sf fdim}(f) = \dim({\sf supp}(f))$, then by\cite{SZ10}, $\log \mathsf{sps}(f)/2 \le \mathsf{DT_{\oplus}}(f) \le$ {\sf fdim}$(f) = \DT_{\oplus}^{\sf na}(f)$\cite{POSSW11,San19}. The XOR-logrank conjecture\cite{SZ10} states that $\mathsf{DT_{\oplus}}(f) \le \poly\left(\log \mathsf{sps}(f)\right)$, and $\exists f$ for which $\fdim(f)$ and $\log({\sf sps}(f))$ are exponentially far apart.}.
%An important parameter that lower bounds the parity decision tree complexity of a Boolean function is the Fourier sparsity, denoted by {\sf sps}$(f)$ of the function. Given any Boolean function $f: \{+1,-1\}^n \rightarrow \{+1,-1\}$, the Fourier representation $f$ is $f(x) = \sum_{S \subseteq [n]} \hat{f}(S) \chi_S(x)$ where $\chi_S(x) = \oplus_{i \in S}$. The coefficients $\hat{f}(S)$ are called Fourier coefficients of $f$. The set $S$ for which $\hat{f}(S)$ is not zero is called the Fourier support of $f$ and the sparsity is the number of sets $S \subseteq [n]$ for which $\hat{f}(S)$ is non-zero. Fourier dimension of $f$, denoted by {\sf fdim}$(f)$ is the dimension of the space spanned by the characteristic vectors of the support of $f$. It is known that $\log \mathsf{sps}(f)/2 \le \mathsf{DT_{\oplus}}(f) \le$ {\sf fdim}$(f)$. Notice that the gap between these two quantities can be quite large and it is conjectured that $\mathsf{DT_{\oplus}}(f)$ is upper bounded by polynomial in $\log \mathsf{sps}(f)$. This is a consequence of the {\sc Xor Logrank Conjecture} and the parameter is polynomially related to the deterministic 2-party communication complexity of the $f \circ \oplus_n$ function\cite{SZ10}. In contrast, the non-adaptive parity decision tree complexity is known to be exactly characterized by the $\fdim(f)$\cite{POSSW11} (see also \cite{San19}).

Other well-studied generalizations of the standard decision tree model include \textit{linear decision trees} \cite{DL78,Sni81,YR80} (where each node queries a linear function of the form $\sum_i \alpha_i x_i + \beta > 0$)  and \textit{algebraic decision trees} \cite{SY80,BLW92,Ben83} (where each node queries the sign of a polynomial evaluation of degree at most $d$ in terms of the input variable). Polynomial size linear  decision trees can compute knapsack problem which is $\NP$-complete and the above studies prove exponential size lower bounds for explicit languages. Ben-Asher and Newman~\cite{BN95}, studied the decision trees when conjunction and disjunction of variables are allowed as queries on the internal nodes and showed lower bounds for the height of such decision trees required to compute the threshold functions.\\[-3mm]

\noindent{\bf Our results:} 

%Our starting point is a monotone We start with monotone decomposition of Boolean functions in Section \ref{sec:mon-decom}, which will be helpful for the following sections. Then, in Section \ref{sec:a-na-mdt}, 
\noindent We initiate the study of a new generalization of the decision tree model based on allowing more general queries. The most general version of our model allows the algorithm to query arbitrary {\em monotone} functions on the input\footnote{Indeed, this generalized model is still universal since in normal decision trees, the queries are monotone functions.}.  We define the deterministic monotone decision tree complexity of a function $f$, denoted by $\DTm(f)$ to be the minimum height of any decision tree with monotone queries at each node, that computes $f$. When the decision tree is non-adaptive (i.e., when query functions do not depend on the result of previous queries) we denote it by $\DTmna(f)$.\\[-3mm]
%Since the underlying decision tree is monotone, 

\noindent{\bf $\DTm$ and $\DTmna$ as measures of non-monotonicity:} Monotone decision tree complexity measures can also be interpreted as a measure of non-monotonicity of the function $f$. Our first result is an exact characterization of this measure in terms of a well-studied measure of non-monotonicity called {\em alternation}. Our first main result is the following connection between the monotone decision tree height and the alternation of the function in the case of adaptive and non-adaptive setting. They are exponentially far apart similar to what is conjectured in the case of {\em parity decision trees}.

\begin{theorem}\label{thm:intro-all-dtm}
For any Boolean function $f$, $\DTm(f) = \lceil \log(\mathsf{alt}(f) + 1) \rceil$, and $\DTmna(f) = \mathsf{alt}(f)$.
\end{theorem}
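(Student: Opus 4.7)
The plan is to prove both identities via matching upper and lower bounds built from a single family of monotone level sets. For $k \ge 0$, define $A_k \subseteq \{0,1\}^n$ to be the set of inputs $x$ for which some chain $\vec{0} = y^0 < y^1 < \cdots < y^m = x$ in the Boolean lattice witnesses at least $k$ alternations of $f$, and let $\alpha(x)$ be the largest such $k$. The indicator of $A_k$ is monotone, since any chain ending at $x$ extends to any $y \ge x$ without losing alternations. Moreover, along every chain from $\vec{0}$ to $x$ the $f$-values begin with $f(\vec{0})$ and end with $f(x)$, so the number of alternations along any such chain has parity $f(\vec{0}) \oplus f(x)$; consequently $f(x) = f(\vec{0}) \oplus (\alpha(x) \bmod 2)$. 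This reduces computing $f$ to determining the parity of $\alpha(x)$ using monotone queries.

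For the non-adaptive upper bound $\DTmna(f) \le \alt(f)$, I will simultaneously query the monotone functions $A_1, \ldots, A_{\alt(f)}$; the number of $1$-answers equals $\alpha(x)$, whose parity yields $f(x)$. For the matching lower bound, fix a non-adaptive monotone tree of height $h$ computing $f$ with queries $g_1, \ldots, g_h$ and an arbitrary chain $x^0 < \cdots < x^n$. Because each $g_i$ is monotone, the answer vector $(g_1(x^j), \ldots, g_h(x^j))$ is coordinate-wise non-decreasing in $j$, so it takes at most $h+1$ distinct values along the chain; since $f$ is determined by this vector, $f$ alternates at most $h$ times on the chain, giving $\alt(f) \le h$.

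For the adaptive upper bound $\DTm(f) \le \lceil \log(\alt(f)+1) \rceil$, I will binary search for $\alpha(x) \in \{0, 1, \ldots, \alt(f)\}$ using the monotone threshold queries ``$x \in A_k$?'' (using the equivalence $x \in A_k \iff \alpha(x) \ge k$), which terminates in $\lceil \log(\alt(f)+1) \rceil$ rounds, after which I output $f(\vec{0}) \oplus (\alpha(x) \bmod 2)$. For the adaptive lower bound I prove by induction on $h$ that any height-$h$ monotone decision tree for $f$ accrues at most $2^h - 1$ alternations on every chain: the root queries a monotone $g$, which by monotonicity flips at most once along the chain, splitting it into at most two subchains each handled entirely by one height-$(h{-}1)$ subtree (contributing $\le 2^{h-1}-1$ alternations by induction), and the single flip adds at most one more. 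Maximizing over chains yields $\alt(f) \le 2^{\DTm(f)} - 1$, which rearranges to the claimed bound.

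The main conceptual obstacle is the parity lemma: choosing the ``right'' monotone witnesses (the level sets $A_k$) and observing that every chain alternation count from $\vec{0}$ to $x$ has the same parity $f(\vec{0}) \oplus f(x)$. Once this is in hand, both upper bounds become explicit monotone constructions and both lower bounds reduce to short monotonicity/induction arguments that land exactly on the claimed constants, with no slack remaining on either side.
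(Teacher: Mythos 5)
Your proof is correct, and it reaches both identities by a route that is leaner than the paper's. The paper factors the adaptive statement through monotone decision lists: it first shows $\DTm(f) = \lceil \log \DLm(f)\rceil$ (binary search over the list in one direction, a Blum-style tree-to-list conversion in the other) and then proves $\DLm(f) = \alt(f)+1$ using the monotone decomposition lemma of Blais et al.\ (functions $g_i(x)=1$ iff the maximum alternation on chains starting at $x$ is less than $i$) together with a lower bound based on the non-increasing index of the activated list node along a chain. You instead work directly with the level sets $A_k = \{x : \alpha(x)\ge k\}$ defined from chains $\vec{0}$ up to $x$, observe the parity identity $f(x)=f(\vec{0})\oplus(\alpha(x)\bmod 2)$, binary-search $\alpha(x)$ with the monotone threshold queries $A_k$ for the adaptive upper bound, and prove the adaptive lower bound $\alt(f)\le 2^h-1$ by a direct induction on tree height (the root's monotone query flips at most once along a chain, splitting it into two subchains handled by height-$(h-1)$ subtrees). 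Your non-adaptive argument (query all $A_k$ at once; conversely, the sorted answer vector of monotone queries changes at most $h$ times along a chain) coincides with the paper's. The net comparison: your version is self-contained and avoids both the decomposition lemma and the decision-list machinery, which is a genuine simplification for this theorem alone; the paper's detour through $\DLm$ is what additionally yields $\DLm(f)=\alt(f)+1$ and the equality $\MDT{\calC}=\MDL{\calC}$, which it needs later for the query-restricted classes, so the extra scaffolding there is doing work beyond Theorem~\ref{thm:intro-all-dtm}.
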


En route to proving the above theorem, we also relate a similar generalization of a well-studied computational model called decision lists (see Section~\ref{sec:prelims} for a definition). If $\DLm(f)$ stands for the minimum length of any monotone decision list computing a Boolean function $f$, then we show that, $\DLm(f) = \mathsf{alt}(f)+1$. We also provide a natural generalization of certificate complexity of a Boolean function, denoted by $\Cm$ (and its non-adaptive version denoted by $\Cmna$) and show that for every function $f$, $\Cmna(f)=\DTmna(f)$ (Proposition~\ref{prop:cna-dtna}).\\[-2mm]

%En route to proving the above theorem, we also relate a similar generalization of a well-studied computational model of decision lists. Informally, a decision list is a sequence of instructions $L= (f_1,c_1)(f_2,c_2)\dots(f_k,c_k)(\textbf{1},\bar{c_k})$ where each $f_i$ is a function and $c_i \in \{0,1\}$. The value of the decision list on an input $x \in \{0,1\}^n$ is the $c_i$ where $i$ is the least number in the above sequence such that $f_i(x)=1$. In our context, the functions($f_i$) are allowed to be arbitrary monotone functions on $n$ bits. We call such decision lists as {\em monotone decision lists}. Note that, even in this setting, the model can compute all functions. In fact, even when the variables are restricted to terms, this model can simulate all DNFs. The minimum size (the number of instructions or sometimes called the length) of any monotone decision list computing a function $f$ is called the monotone decision list complexity, denoted by $\DL_m(f)$. We will prove the following theorem.
%\begin{theorem}
%For any Boolean function $f$, $\DL_m(f) = \mathsf{alt}(f) $ or  $\mathsf{alt}(f)+1$.
%\end{theorem}

\noindent{\bf Non-deterministic and randomized monotone decision trees:}
We study non-deterministic and randomized monotone decision trees (see Sections~\ref{sec:ndt} and~\ref{sec:rdt}) and consider variants of the definitions, and show equivalences and bounds. In particular, we show constant upper bounds for the height of non-deterministic monotone decision trees (Theorem~\ref{prop:nmdt}) and show characterizations for the height of the randomized version in terms of deterministic monotone decision tree complexity, and thus alternation (Theorem~\ref{thm:rmdt-main}). \\[-3mm]

\noindent{\bf Decision trees with  restricted (monotone) queries:}
While the above models provide a measure of non-monotonicity, one of the main drawbacks of the above decision tree model is that, the computational model is not succinctly representable. It is natural to see if we can restrict the query functions to circuit complexity classes which allow succinct representation for functions. An immediate direction is to understand the power of the model if the query functions are restricted to circuit complexity classes; studied in Section~\ref{sec:mdt-query-restrict}.
%More abstractly, let $\mathcal{C}$ be a class of Boolean functions, we define {\sf MDL}$(\mathcal{C})$ to be the class of functions which can be computed by a polynomial length decision list whose query functions are allowed to be functions that have monotone circuits in class $\mathcal{C}$\footnote{We refer to such a decision list as \textit{monotone-$\mathcal{C}$ decision lists}}.
%A version of $\mathsf{MDL}(\mathcal{C})$ is when we allow the query functions to be just AND of variables (called monotone terms). This is a  well-studied model of decision trees called \textit{monotone term decision lists} and the corresponding class of functions is denoted by {\sf MTDL}\cite{GLR01}. When the query functions are allowed to be general (non-monotone) terms, then they are called \textit{term decision lists} ({\sf TDL})\cite{Bsh96}.
More formally, we define $\MDT{\calC}$ to be the class of functions that can be computed by monotone decision trees of height $\bigO(\log n)$ where each query function has a monotone circuit in $\calC$, or equivalently all the queries belong to $\textit{mon-}\calC$. 

To justify the bound of $\bigO(\log n)$ on the height of monotone decision trees, we show that if we allow the upper bound on height to be asymptotically different from $\Theta(\log n)$, then the class of functions computed by the model will be different from $\calC$ \footnote{We assume that in$\calC$, all the circuits are polynomial sized, and that there is at least one function with $\Omega(n)$ alternation. This is true for the  complexity classes $\AC^0, \TC^0, \NC^1$ etc.}. More precisely, if $\MDTT{d(n)}$ denotes the class of functions computed by monotone decision trees of height at most $d(n)$ (thus $\MDT{\calC} \equiv \MDTTT{\bigO(\log n)}{\calC}$), we show that, for any $g(n) = o(\log n)$, and $h(n) = \omega(\log n)$, $\MDTTT{g(n)}{\calC} \subsetneq \calC$ and $\MDTTT{h(n)}{
\calC} \nsubseteq \calC$. This justifies the question of $\MDTTT{\bigO(\log n)}{\calC}$ vs $\calC$, which we answer (in some cases) in the following theorem.
\begin{theorem}
\label{introthm:mdtc-c}
For any circuit complexity class $\calC$ such that $\textit{mon-}\TC^0 \subseteq \textit{mon-}\calC$, 
$\MDT{\calC} = \calC$.
\end{theorem}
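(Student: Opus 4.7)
I would prove the two containments $\calC \subseteq \MDT{\calC}$ and $\MDT{\calC} \subseteq \calC$ separately. The substantive direction is $\calC \subseteq \MDT{\calC}$, for which my strategy is a classical slice-function reduction combined with a binary search on Hamming weight.

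Fix $f \in \calC$. For each $k \in \{0, 1, \ldots, n\}$ define the slice function $f_k$ by $f_k(x) := 1$ if $|x| > k$, $f_k(x) := 0$ if $|x| < k$, and $f_k(x) := f(x)$ when $|x| = k$; each $f_k$ is monotone. My decision tree first performs binary search on $|x|$ using threshold queries $\mathsf{Th}_j(x) = [|x| \geq j]$, taking $\lceil \log(n+1) \rceil$ queries, and at each resulting leaf---where $|x| = k$ is pinned down---issues a single query $f_k(x)$ and outputs its value at a child leaf. The total height is $O(\log n)$, and every threshold query lies in monotone-$\mathsf{TC}^0 \subseteq$ monotone-$\calC$ by hypothesis.

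The remaining task is to place each $f_k$ itself in monotone-$\calC$. For this I would invoke a Berkowitz-type slicing construction: starting from a $\calC$-circuit for $f$ with negations pushed to the inputs, replace each negated literal $\neg x_i$ by the monotone function $\mathsf{Th}_k(x_{-i})$, where $x_{-i}$ denotes $x$ with the $i$-th coordinate deleted. The replacement is monotone in $x$, and on the slice $|x|=k$ one checks that $\neg x_i = \mathsf{Th}_k(x_{-i})$, so the resulting monotone circuit $g$ agrees with $f$ whenever $|x|=k$. Setting
\[
f_k(x) \;=\; \mathsf{Th}_{k+1}(x) \;\vee\; \bigl(\mathsf{Th}_k(x) \wedge g(x)\bigr)
\]
is monotone, matches $f$ on the slice, is forced to $1$ above it and $0$ below it, and lies in monotone-$\calC$ since $g$ and the thresholds do.

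For $\MDT{\calC} \subseteq \calC$: given a monotone decision tree $T$ of height $d = O(\log n)$ with queries $g_v$ in monotone-$\calC$, $T$ has at most $2^d = \mathrm{poly}(n)$ leaves, and
\[
f(x) \;=\; \bigvee_{\ell:\,\text{label}(\ell)=1} \;\bigwedge_{v\text{ on path to }\ell} \bigl(g_v(x) \equiv b_v^{(\ell)}\bigr)
\]
is a polynomial-size DNF whose literals are (possibly negated) monotone-$\calC$ functions; substituting the $\calC$-circuits for the $g_v$ produces a $\calC$-circuit for $f$, using standard closure of $\calC$ under a polynomial-size DNF of $\calC$-subcircuits. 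The main obstacle I anticipate is verifying that the Berkowitz substitution preserves membership in $\calC$ in size and depth---each $\neg x_i$ blows up to a threshold subcircuit and the whole thing is wrapped by two further thresholds and one $\wedge/\vee$ layer---which is precisely where the assumption monotone-$\mathsf{TC}^0 \subseteq$ monotone-$\calC$ is essential, and for concrete classes such as $\mathsf{TC}^0$ and $\mathsf{NC}^1$ the needed closures are immediate.
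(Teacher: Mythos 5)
Your proposal is correct and is essentially the paper's own argument: your slice functions $f_k(x)=\Th_{k+1}(x)\vee(\Th_k(x)\wedge g(x))$, obtained by pushing negations to the inputs and replacing $\neg x_i$ by $\Th_k$ on the remaining variables, are exactly the odd-indexed components of the paper's monotone decomposition in Lemma~\ref{lem:tc0-decomp}, and your binary search on Hamming weight is the same mechanism by which the paper turns the resulting decision list into an $\bigO(\log n)$-height tree (Theorem~\ref{thm:mdt-mdl}). The only difference is packaging: the paper routes both directions through its decision-list normal form (proving $\MDT{\calC}\subseteq\calC$ via the expression $\overline{f_1}f_2\vee\overline{f_3}f_4\vee\cdots$ rather than your direct DNF over root-to-leaf paths), while you build the tree and the converse circuit directly, which is equivalent in substance.
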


Hence, in particular, $\MDT{\TC^0} = \TC^0$. The situation when $\calC$ does not contain $\TC^0$ is less understood. We start by arguing that all functions in $\AC^0$ can be computed by monotone decision trees in sub-linear height. More specifically, for any constant $r$, $\AC^0 \subseteq \MDTTT{d(n)}{\AC^0}$ when $d(n) = \Omega\left(\frac{n}{\log^r n}\right)$ (Theorem~\ref{thm:sublinear-ac0}).
It is natural to ask whether the sub-linear height can be improved further.
%Noting that, $\MDT{\AC^0} = \MDL{\AC^0}$, it is an intriguing question 
In particular, whether $\MDT{\AC^0}$ is equal to $\AC^0$ or not.
%That is, whether there is a Boolean function in $\AC^0$ which cannot be computed by any $\bigO(\log n)$ height monotone decision trees which can make monotone queries (computable in monotone $\AC^0$). It is intriguing question of whether $\MDT{\AC^0} = \AC^0$. 
Towards this, by using a technique from \cite{SW93}, we first show a negation limited circuit for functions in $\MDT{\AC^0}$:

\begin{theorem}
\label{introthm:ne-ac0}
If a Boolean function $f$ on $n$ variables is in $\MDT{\AC^0}$, then for any positive constant $\epsilon\le 1$, there is an $\AC^0$ circuit for $f$ with $\bigO(n^\epsilon)$ negation gates.
\end{theorem}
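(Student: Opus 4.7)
The plan is to simulate the monotone decision tree as an $\AC^0$ circuit while deferring all negations to a single efficient inversion sub-circuit from \cite{SW93}, which computes $\bar{x}_1, \ldots, \bar{x}_n$ from $x_1, \ldots, x_n$ in constant depth and polynomial size using only $O(n^\epsilon)$ negation gates (by choosing the depth of the inverter large enough as a function of $\epsilon$). Let $T$ be a monotone decision tree for $f$ of height $h \le c \log n$ whose queries $\{q_v\}_v$ are each computed by a monotone $\AC^0$ circuit. The tree has at most $N := 2^h = n^{O(1)}$ internal nodes, and
\[
f(x) \;=\; \bigvee_{\ell\,:\,\text{leaf labelled } 1}\ \ \bigwedge_{v \in \mathrm{path}(\ell)} q_v(x)^{b(v,\ell)},
\]
where $b(v,\ell) \in \{0,1\}$ marks whether the path to $\ell$ descends into the yes- or no-child at $v$ and $q_v^0 := \neg q_v$. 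Read naively, this formula uses one negation per negated query, totalling $n^{O(1)}$ negations, which is far too many.

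The key observation is that since each $q_v$ is monotone in $x$, De Morgan's laws give $\neg q_v(x) = q_v^D(\bar{x})$, where the \emph{dual} circuit $q_v^D$ is obtained from the monotone $\AC^0$ circuit for $q_v$ by swapping $\wedge \leftrightarrow \vee$ at every gate and re-indexing the leaves to $\bar{x}$; $q_v^D$ is itself a monotone $\AC^0$ circuit. Hence, once $\bar{x}$ is available, every $\neg q_v$ can be produced by a monotone $\AC^0$ sub-circuit with \emph{no additional} negations beyond those used to create $\bar{x}$.

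The final circuit composes (i) the \cite{SW93} inverter producing $\bar{x}$ in constant depth with $O(n^\epsilon)$ negations, (ii) parallel monotone $\AC^0$ sub-circuits computing, for every internal node $v$, both $q_v(x)$ (on inputs $x$) and $\neg q_v(x) = q_v^D(\bar{x})$ (on inputs $\bar{x}$), with $0$ negations, and (iii) the outer OR-of-ANDs of fan-in $N = n^{O(1)}$ implementing the displayed DNF, again with $0$ negations. The total size is polynomial, the depth is constant, and the total negation count is $O(n^\epsilon)$, as required. The main obstacle is identifying and correctly invoking the inversion construction from \cite{SW93}: its depth may grow with $1/\epsilon$, so the constant-depth bound of the composed circuit depends on $\epsilon$, but since the theorem quantifies over each positive constant $\epsilon \le 1$ separately, allowing the depth to depend on $\epsilon$ is permissible.
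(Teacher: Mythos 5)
There is a genuine gap, and it sits exactly at step (i) of your construction. The circuit you invoke --- an inverter that takes an \emph{arbitrary} input $x\in\{0,1\}^n$ and outputs $\bar{x}$ in constant depth, polynomial size, and $\bigO(n^\epsilon)$ negations --- does not exist for small $\epsilon$. The $\bigO(n^\epsilon)$-negation inverter in \cite{SW93} (their Theorem~3.6) works only for \emph{sorted} inputs, i.e.\ strings of the form $0^j1^{m-j}$; its recursion crucially exploits that there is a single switch point to locate. For general inputs, Theorem~4.3 of \cite{SW93} (quoted in the remark right after the theorem statement in this paper) shows that the negation map ${\sf Neg}(x)=x\oplus 1^n$ cannot be computed by constant-depth circuits with $\bigO(\sqrt{n})$ negations, so for any $\epsilon<1/2$ the sub-circuit your plan relies on is provably impossible. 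The rest of your argument (dualizing each monotone query so that $\neg q_v(x)=q_v^D(\bar{x})$, and the outer OR-of-ANDs over paths) is fine, but it is downstream of this nonexistent component; note also that the query values along a path of an arbitrary MDT are not sorted, so you cannot rescue the plan by inverting those bits directly either.

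The paper's proof is designed precisely to manufacture sortedness before inverting. It passes from the MDT to an equivalent polynomial-size monotone decision list in normal form (forward firing), so that by Proposition~\ref{prop:decomp-dl} one can write $f=\overline{f_1}f_2\vee\overline{f_3}f_4\vee\dots\vee\overline{f_{\ell-1}}f_\ell$ with $f_1\Rightarrow f_2\Rightarrow\dots\Rightarrow f_\ell$ and each $f_i$ in $\textit{mon-}\AC^0$. The implication property guarantees that on every input the string $\langle f_1(x),\dots\rangle$ of bits to be complemented is of the form $0^j1^{m-j}$, and only then is the sorted-input inverter of \cite{SW93} applicable, yielding the $\bigO(n^\epsilon)$ negation bound. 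To repair your write-up you would need to replace step (i) by this decision-list/implication-property reduction (or some other device that makes the to-be-negated bits sorted); as stated, the proof does not go through.
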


\begin{remark}
	By Theorem 4.3 of \cite{SW93}, we know that the negation function ${\sf Neg}:\{0,1\}^n \to \{0,1\}^n$ defined as ${\sf Neg}(x)=x \oplus 1^n$ cannot be computed by circuits of constant depth and $\bigO(\sqrt{n})$ negations. We note that this does not immediately show that $\MDT{\AC^0} \ne \AC^0$  as the output of ${\sf Neg}$ is not single-bit.
\end{remark}

In a tight contrast to Theorem~\ref{introthm:ne-ac0}, it can be derived using \cite{SW93} that if $f \in \AC^0$ with $\alt(f) = \Omega(n)$, then any $\AC^0$ circuit computing it must have at least $\Omega(n^\epsilon)$ negations for some constant $\epsilon > 0$ (see Theorem~\ref{thm:neg-ac0-lb}). Thus, an asymptotic improvement to this, with respect to the number of negations, would imply $\MDT{\AC^0} \ne \AC^0$.

En route these main results, we also note that the analogously defined class of functions $\MDL{\calC}$ for decision lists (defined in Section~\ref{sec:prelims}) is exactly equal to $\MDT{\calC}$. Defining $\RMDT{\calC}$ similar to $\MDT{\calC}$ but for randomized decision trees, we show $\RMDT{\calC}=\MDT{\calC}=\MDL{\calC}=\calC$ if $\textit{mon-}\TC^0 \subseteq \textit{mon-}\calC$, and $\MDL{\AC^0}=\MDT{\AC^0}\subseteq \RMDT{\AC^0} \subseteq \AC^0$. 
%The relations regarding randomized MDT's are discussed in Appendix \ref{appsec:rmdt-details}.

%Moving back to the decision lists model and noting that a restricted version of $\mathsf{MDL}(\AC^0)$ are the monotone term decision lists, which can also be seen as restricted term decision lists, it is natural to compare the $\mathsf{MDL}(\AC^0)$ and $\mathsf{TDL}$. 

\section{Preliminaries}
\label{sec:prelims}
In this section, we define basic terms and notations along with the main monotone decision tree complexity measures. The definitions of non-deterministic, randomized MDTs, and other modifications are deferred to the corresponding sections. Unless mentioned otherwise, Boolean functions discussed in this paper are from $\{0,1\}^n$ to $\{0,1\}$.
For standard definitions of Boolean circuits and related complexity classes, we refer the reader to \cite{Juk12}. \\[-3mm]

\noindent{\bf Monotonicity and alternation:} For $x\ne y \in \{0,1\}^n$, we say $x \prec y$ if $\forall i \in [n]$, $x_i \le y_i$. A \emph{chain} $\mathcal{X} $ on $\{0,1\}^n$ is a sequence $\langle x^{(1)},x^{(2)},\ldots, x^{(\ell-1)},x^{(\ell)}\rangle $ such that $\forall i\in [\ell], x^{(i)}\in\{0,1\}^n$ and $x^{(1)} \prec x^{(2)}\prec \ldots\prec x^{(\ell)}$. The alternation of a function $f$ for a chain $\mathcal{X}$, denoted as $alt(f,\mathcal{X})$ is the number of bit flips (or alternations) in the sequence $\langle f(x^{(1)}), f(x^{(2)}), \dots f(x^{(\ell)}) \rangle$. We define the \emph{alternation} of $f$ as, $\alt(f):=\max_{\text{ chain }\mathcal{X}} alt(f,\mathcal{X})$. \\[-3mm]

We say that a Boolean function is {\em monotone} if for all $x, y \in \{0,1\}^n$, $x \prec y \Rightarrow f(x) \le f(y)$. We say that a Boolean circuit is monotone if all the gates in it compute monotone functions over the respective inputs.
For any circuit complexity class $\calC$, we define $\textit{mon-}\calC \subseteq \calC$ as the class of functions which can be computed by using monotone circuits in $\calC$.\\[-3mm]

\noindent{\bf Threshold functions:} For $x \in \{0,1\}^n$, we denote the number of 1's in $x$ by ${\sf wt}(x)$. We define the ($k$-th) threshold function as, $\Th_k(x) = 1$ if ${\sf wt}(x) \ge k$ and $\Th_k(x) = 0$ otherwise.\\[-3mm]

\noindent{\bf Monotone decision trees and lists:}  We now present our generalizations of the decision tree (and list) model. Further into the paper, we introduce and study more variants by allowing non-adaptivity, randomness, restricted queries etc.

\begin{definition}[{\bf Monotone Decision Tree}]
	
	A \textit{monotone decision tree} $T$ is a rooted directed binary tree. Each internal node $v$ is labeled by a monotone function $f_v: \{0,1\}^n \to \{0,1\}$, and each leaf is labeled by a $0$ or $1$, e. Each internal node has two outgoing edges, one labeled by $0$ and another by $1$. A computation of $T$ on input $x \in \{0,1\}^n$ is the path from the root to one of the leaves $L$ that in each of the internal vertices $v$ follows the edge that has label equal to the value of
	$f_v(x)$. The label of the leaf that is reached by the path is the output of the computation.
	A monotone decision tree $T$ computes a function $f:\{0,1\}^n
	\to \{0,1\}$ if and only if on each input $x \in \{0,1\}^n$ the output of $T$ is equal to $f(x)$.
	
	The monotone decision tree complexity of $f$ is the minimum height\footnote{\emph{Height} always refers to the max. no. of internal nodes in path from root to any leaf.} of such a tree computing $f$. We denote this value by $\DTm(f)$.
\end{definition}
 
\begin{definition}[{\bf Monotone Decision List}]
The \textit{monotone decision list} model, denoted by $L= (f_1,c_1)(f_2,c_2)\dots(f_k,c_k)$ is a series of tuples $(f_i,c_i)$ where each $f_i$ is a monotone function on $n$ variables, and each $c_i$ is a Boolean constant 0 or 1. Here, each $(f_i,c_i)$ is called as a node; $f_i$ the query function of that node and $c_i$ the value of the node. The last query $f_k$ may be often assumed to be the constant function \textbf{1} w.l.o.g.
An input $x\in \{0,1\}^n$ is said to \emph{activate}\footnote{Any input activates exactly one node by this definition.} the node $(f_i,c_i)$ if $f_i(x)=1$ and $\forall j<i, f_j(x)=0$. Here $L$ is said to represent/compute the following Boolean function $f_L$ defined as:
%$$
%f_L(x)=\left\{\begin{array}{ll}
%c_i & \text{ if $x$ activates the node $(f_i,c_i)$ in $L$ }\\
%0 & \text{ otherwise. }
%\end{array}
%\right.
%$$
$ f_L(x)=c_i, ~\text{where $i \in [k]$ is the unique index such that $x$ activates the $i^{th}$ node of $L$}$.

The monotone decision list complexity of a Boolean function $f$, denote by $\DLm(f)$, is the minimum size (i.e, number of nodes) of a monotone decision list computing it. 
\end{definition}

A version of decision list that has been considered in the literature is when we allow the query functions to be a simple AND of variables; called \textit{monotone term decision lists}\cite{GLR01}. When the query functions are allowed to be general (not necessarily monotone) terms, then they are called \textit{term decision lists}\cite{Bsh96} and the class of functions computed by TDLs of size at most $\poly(n)$ is denoted by {\sf TDL}.

\subsection{A Normal Form for MDLs}
\label{subsec:normal-forms}

We show, by standard arguments, that monotone decision lists can be assumed to have certain properties when the queries are allowed from any reasonably rich class of Boolean functions -- in particular, the set of  functions from $\textit{mon-}\AC^0$, $\textit{mon-}\TC^0$, $\textit{mon-}\NC^1$ etc.\\

\noindent \textsf{Property 1 - Alternating constants:} We can convert any decision list $L=(f_1,c_1)(f_2,c_2)\dots (f_k,c_k)$ computing a Boolean function $f$ on $n$ variables to an $\widetilde{L}=(\widetilde{f_1},\widetilde{c_1})(\widetilde{f_2},\widetilde{c_2})\dots (\widetilde{f_k},\widetilde{c_k})$ computing the same function where the constants $\widetilde{c_i}$'s are alternating between $0$ and $1$. 

The idea is to simply club the contiguous nodes with same $c_i$ into a single node using an OR operation over the corresponding queries (observed in e.g.~Theorem~3.1 in ~\cite{Ant02}).
\\[-4mm]

Suppose a maximal series of nodes $(f_i,c_i), (f_{i+1},c_{i+1}), \dots (f_j,c_j)$ have the same constant value (i.e, $c_i=c_{i+1}\dots c_j$). We can substitute a node $(f_i \vee f_{i+1} \dots \vee f_j,c_i)$ in place of this entire series of nodes. On application of this simplification to all the maximal contiguous nodes with identical constants, we finally get an equivalent monotone decision list with alternating constants. 

To observe that this transformation does not affect the output, we argue that the following simplification holds. If $(f_1,c)(f_2,c)$ are two consecutive nodes in a decision list, the both of them can be replaced with the single node $(f_1 \vee f_2,c)$ to get an equivalent decision list. For this, we note that if on an input $x$, both $f_1$ and $f_2$ fail (evaluate to 0), then since neither of the original two nodes would have been activated and neither does the new node $(f_1\vee f_2,c)$ the replacement does not alter the output returned by the decision list. On the other hand, say the node $(f_1,c)$ is the activated node. This means all the queries before $f_1$ would have evaluated to 0. Since these nodes remain unaltered, and $f_1\vee f_2$ would pass on $x$, the node $(f_1\vee f_2,c)$ is the activated one, and therefore returns $c$, which is the same value returned by the original decision list. The same argument holds when $(f_2,c)$ is the node that $x$ activates in the original decision list.\\

%Note that the depth of the query functions in the resulting list are at most increased by 1 (compared to the depth of the original query functions), because of the $\lor$ gate at the root level; and the size only blows up by a factor that is at most the size of $L$, which itself is only polynomial in $n$. Also as each $f_i$ is monotone, so is the disjunction $f_i \vee f_{i+1} \vee \dots \vee f_j$ Hence, the final decision list is a valid decision list with monotone $\AC^0$ queries and polynomial size.\\[-3mm]

%(2)~{\sf Forward Firing:} On any input $x \in \zon$, if certain query of a decision list passes (evaluates to 1), then so do all the queries that follow it. 
%We denote $\MDL{\calC}$ as the class of functions which can be computed by a decision list of polynomial length where all the query functions belong to $\textit{mon-}\calC$.\\[-4mm]

%\begin{appsection}{Normal form for Monotone Decision Lists}{appsec:normal-forms-for-dl}

\noindent \textsf{Property 2 - Forward firing:} By forward firing, we mean that on any input $x \in \zon$, if certain query of a decision list passes (i.e., evaluates to 1), then so do all the queries that follow it. 

We claim that the decision list $\widetilde{L}=(g_1,c_1)(g_2,c_2)\dots (g_k,c_k)$, where $g_i= \bigvee_{j=1}^{i} f_j$, represents $f$ and has the forward firing property. First, note that since $g_{i+1}\equiv g_i \vee f_{i+1}$, we immediately have that $g_{i} \Rightarrow g_{i+1}$ holds true. Now to show that $\widetilde{L}$ is equivalent to $L$, suppose that on an input $x$, the $t^{th}$ node is activated in $L$. Then note that $g_t(x)=\bigvee_{j=1}^{t} f_j(x)=1$, since $f_t(x)=1$ by the definition of $t$. For $i<t$, we have $g_i(x)=\bigvee_{j=1}^{i} f_j(x)=0$, since each $f_j$ is failed for $j<t$ as $x$ failed all the first $t-1$ queries of $L$. Therefore, $x$ indeed activates the $t^{th}$ node of $\widetilde{L}$ too, hence $\widetilde{L}$ on input $x$ outputs $c_t=f(x)$.

It is easy to note that the procedure for ensuring Property 2 does not disturb Property 1 if the decision list already has it.  That is, there exists a decision list with both the properties.
The above normal forms are invoked in a context that if the $f_i$'s are from a circuit complexity class that is closed under taking OR of polynomially many bits, and when $k$ is polynomial in $n$, then the new  query functions also belong to that class (they admit monotone circuits in that class).
\\[-3mm]

%\end{appsection}

%\textcolor{purple}{alternation was already defined in introduction}

Recalling the definition of \emph{alternation} of a function from Section~\ref{sec:prelims}, we state the following characterization of Boolean functions originally proved in \cite{BCOST15}.
\begin{lemma}[\textbf{Characterization of Alternation~\cite{BCOST15}}] 
\label{lem:alt-decomp}
{ For any $f:\zon \to \zo$ there exists $k = \alt(f)$ monotone functions $f_1, \dots, f_k$ each from $\{0,1\}^n$ to $\{0,1\}$ such that}
	
	\[f(x) = \begin{cases}
		 \oplus_{i=1}^k f_i(x) & \text{, if } f(0^n) = 0 \\
		 \neg \oplus_{i=1}^k f_i(x) & \text{, if } f(0^n) = 1.
		\end{cases}
	\]
\end{lemma}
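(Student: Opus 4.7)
The plan is to construct the $f_i$'s explicitly via a natural ``alternation-level'' function. For each $x \in \{0,1\}^n$, let $a(x)$ be the maximum number of alternations of $f$ along any chain in the Boolean lattice from $0^n$ to $x$; by definition $a(x) \le \alt(f) = k$ for every $x$. I would then set
\[
f_i(x) \;=\; [\,a(x) \ge i\,] \qquad \text{for } 1 \le i \le k.
\]
With this choice $\bigoplus_{i=1}^k f_i(x)$ is simply $a(x) \bmod 2$, because exactly the first $a(x)$ of the indicators are $1$ and the rest are $0$.

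First I would verify monotonicity of each $f_i$. If $x \prec y$, any chain witnessing $a(x) \ge i$ can be extended by appending a chain from $x$ up to $y$; this cannot decrease the number of alternations in the extended chain, so $a(y) \ge a(x)$. Hence $a$ is monotone, and every threshold $f_i$ is monotone.

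The main step is to show $f(x) \equiv a(x) + f(0^n) \pmod{2}$. Fix $x$ and consider any chain $0^n = x^{(0)} \prec x^{(1)} \prec \cdots \prec x^{(\ell)} = x$ on which $f$ alternates $j$ times. Starting from $f(0^n)$ and flipping the value exactly $j$ times along the chain yields $f(x) = f(0^n) \oplus (j \bmod 2)$. Since the left-hand side does not depend on the chain, the parity $j \bmod 2$ is an invariant of $x$; in particular, taking a chain that maximizes $j$ gives $a(x) \equiv f(x) \oplus f(0^n) \pmod{2}$. Combining with the previous paragraph:
\[
\bigoplus_{i=1}^k f_i(x) \;=\; a(x) \bmod 2 \;=\; f(x) \oplus f(0^n),
\]
which is exactly the two cases stated in the lemma, depending on whether $f(0^n) = 0$ or $f(0^n) = 1$.

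The only subtlety — and the step I would write most carefully — is the chain-independence of the parity of $j$. Everything else is bookkeeping: monotonicity of $a$ is immediate from chain extension, and the XOR identity for threshold indicators is by inspection. No auxiliary structural results are needed; the decomposition comes out of the definition of $\alt(f)$ itself.
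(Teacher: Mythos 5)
Your proposal is correct, and it is essentially a valid proof of the lemma, but it takes a genuinely different route from the paper. The paper (following \cite{BCOST15}, reproduced in its Appendix) argues by induction on $k=\alt(f)$: it defines, for each input $x$, the maximum alternation $a_f(x)$ over chains \emph{starting at} $x$ (i.e., going upward), sets $g_i(x)=1 \iff a_f(x)<i$, and peels off the last component by passing to $f' = f \vee \overline{g_k}$, which has alternation $k-1$. You instead work with the ``downward'' alternation level $a(x)$, the maximum alternation over chains from $0^n$ up to $x$, take the threshold indicators $f_i(x)=[\,a(x)\ge i\,]$, and avoid induction entirely by observing the chain-parity invariant: along any chain from $0^n$ to $x$ with $j$ alternations one has $f(x)=f(0^n)\oplus(j \bmod 2)$, so the parity of $j$ is chain-independent and equals $a(x) \bmod 2$; combined with $\bigoplus_i [\,a(x)\ge i\,] = a(x)\bmod 2$ and $a(x)\le k$ this gives exactly the stated identity. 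Your monotonicity argument (extend a witnessing chain from $x$ up to $y$) is sound, and you correctly flag the parity-invariance as the one step needing care. What each approach buys: yours is shorter, self-contained, and incidentally also yields the implication (nestedness) property of the components for free, since $f_{i+1}\Rightarrow f_i$ by construction; the paper's inductive formulation is chosen because it simultaneously substantiates the extra structural properties (implication and the connection to optimality/minimality of the length) that the paper needs later for its Monotone Decomposition Lemma, which your argument does not address (nor does the statement require it).
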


\section{Our Tool: Monotone Decomposition of Boolean Functions}
\label{sec:mon-decom}
Motivated by the characterization stated in previous section we define a  monotone decomposition of a Boolean function as follows. This notion will be helpful while analyzing the monotone decision tree complexity measures.

\begin{definition}[{\bf Monotone Decomposition of a Boolean function}]
For any Boolean function $f:\zon \to \zo$, a {\em monotone decomposition} of $f$ is a minimal set of monotone functions $M = \{ f_1, f_2, \ldots f_k \}$ such that $f = \oplus_{i \in [k]} f_i$ --
here $k$ is said to be the length of the decomposition. We call each $f_i$ to be a {\em monotone component} in the decomposition. 

We consider two variants of monotone decompositions obtained by imposing additional constraints:
\begin{description}
\item{\sf Implication property:} There exists an ordering of $M$ such that $\forall i \in [k-1],$ $f_{i} \Rightarrow f_{i+1}$ holds. In this case, the functions are called {\em boundary functions} of the decomposition of $f$. We call monotone decompositions that satisfy this property as {\em boundary decompositions} of $f$.
\item{\sf Optimality Property:} If the set $M$ is also of minimum size monotone decomposition of $f$. That is, there does not exist fewer set of monotone functions whose parity is the given function $f$. We call monotone decompositions that satisfy this property as {\em alternation decompositions} of the function $f$.
\end{description}
\end{definition}
%\begin{definition}[{\bf Monotone Decomposition of a Boolean function}]
%For any Boolean function $f:\zon \to \zo$, the monotone decomposition of $f$ is a minimal set of monotone functions $M = \{ f_1, f_2, \ldots f_k \}$ such that $f = \oplus_{i \in [k]} f_i$
%where $k$ is said to be the size/length of the decomposition. We call each $f_i$ to be the monotone components in the decomposition.
%\end{definition}
%We define two constraints on the such monotone decompositions:
%(1)~{\sf Implication property:} There exists an ordering of $M$ such that $\forall i \in [k-1],$ $f_{i} \Rightarrow f_{i+1}$ holds. In this case, the decompositions are called {\em boundary decompositions} and the $f_i$s are called {\em boundary functions}.~(2)~{\sf Optimality Property:} If the set $M$ is also of minimum size monotone decomposition of $f$ then we term it as {\em alternation decomposition} of the function $f$.
%\end{description}
%\end{definition}

The proof of decomposition of Boolean functions into parity of monotone functions in \cite{BCOST15} actually implies a monotone decomposition of length $\alt(f)$ (or $\alt(f)+1$ if $f(0^n)=1$) which has the optimality and implication properties. Their proof also implies that if optimality property holds for a monotone decomposition, then the length of the decomposition must necessarily be equal to $\alt(f)$ or $\alt(f)+1$. Because of this reason, we call decompositions with optimality property as an {\em alternation decomposition}. We now state the following lemma (already proved in \cite{BCOST15}) bringing out the details to substantiate the extra properties that we need -- we present its proof in Appendix~\ref{app:lem:alternation-decomp}.

\begin{lemma}[{\bf Monotone Decomposition Lemma}]
\label{lem:alternation-decomp}
For any Boolean function $f:\zon \to \zo$ there is a monotone decomposition with implication and optimality properties of length $\alt(f)$ (if $f(0^n) = 0$) and length $\alt(f)+1$ (if $f(0^n)=1$).
\end{lemma}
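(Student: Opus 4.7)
The plan is to construct an explicit boundary decomposition from alternation-level functions induced by $f$, and then to argue optimality by a short monotonicity-of-alternations counting argument.

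Define $a : \zon \to \mathbb{Z}_{\ge 0}$ by letting $a(x)$ be the maximum number of alternations in the sequence $f(x^{(0)}), f(x^{(1)}), \ldots, f(x^{(\ell)})$, over all chains $0^n = x^{(0)} \prec x^{(1)} \prec \cdots \prec x^{(\ell)} = x$. Two elementary facts will suffice. First, $a$ is monotone: a chain realizing $a(x)$ can be extended to any $y \succ x$ by appending $y$, and appending a single element never decreases the number of alternations, so $a(y) \ge a(x)$. Second, along any chain from $0^n$ to $x$ the parity of the number of alternations is completely determined by the endpoints, giving $a(x) \equiv f(x) \oplus f(0^n) \pmod{2}$; and clearly $a(x) \le \alt(f)$ by definition.

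Next, for $i \in [\alt(f)]$, define the threshold functions $h_i(x) := \mathbf{1}[\,a(x) \ge i\,]$. By monotonicity of $a$ each $h_i$ is monotone, and $h_{i+1} \Rightarrow h_i$ by construction. Reindexing in reverse as $g_i := h_{\alt(f)-i+1}$ yields the required implication chain $g_1 \Rightarrow g_2 \Rightarrow \cdots \Rightarrow g_{\alt(f)}$. Since $a(x) \bmod 2 = \bigoplus_{i=1}^{\alt(f)} h_i(x)$, the parity fact gives $f(x) = f(0^n) \oplus \bigoplus_i g_i(x)$. If $f(0^n) = 0$, this is already a boundary decomposition of length $\alt(f)$; if $f(0^n) = 1$, append the constant function $\mathbf{1}$ at the end of the chain (it is monotone and is trivially implied by every $g_i$), producing a boundary decomposition of length $\alt(f)+1$.

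For the optimality property we show that no monotone decomposition of $f$ can be shorter. Suppose $f = \bigoplus_{j=1}^m \phi_j$ with each $\phi_j$ monotone. Along any single chain each $\phi_j$ flips at most once, so the XOR flips at most $m$ times, giving $\alt(f) \le m$. When $f(0^n) = 1$, the only monotone function taking value $1$ at $0^n$ is the constant $\mathbf{1}$, so some $\phi_j$ must equal $\mathbf{1}$; removing it leaves a monotone decomposition of $f \oplus 1$, a function that vanishes at $0^n$ and has the same alternation as $f$, so by the bound just shown the residual decomposition has length at least $\alt(f)$, giving $m \ge \alt(f)+1$. The only delicate point in the whole argument is squeezing this extra ``$+1$'' out of the optimality bound in the $f(0^n)=1$ case; the observation that a monotone function evaluating to $1$ at $0^n$ must be the constant $\mathbf{1}$ forces a mandatory extra component alongside the $\alt(f)$ nontrivial ones, which is exactly what is needed.
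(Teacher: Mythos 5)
Your proof is correct and establishes both required properties, but it takes a genuinely different route from the paper's. The paper proves the lemma by induction on $\alt(f)$, using the level function $a_f(x)$ defined as the maximum alternation over chains \emph{starting at} $x$ and the components $g_i(x)=1 \iff a_f(x)<i$; its induction step passes to the auxiliary function $f'=f\vee\overline{g_k}$, shows $\alt(f')=\alt(f)-1$, and identifies the inductively obtained components with the $g_i$, while the optimality property is carried over from the argument of \cite{BCOST15} rather than re-derived in the appendix. You instead threshold the dual quantity $a(x)$, the maximum alternation over chains from $0^n$ up to $x$, and verify the decomposition directly and non-inductively: since $a(x)\le\alt(f)$, the number of thresholds passed is exactly $a(x)$, and since every chain from $0^n$ to $x$ has alternation of parity $f(0^n)\oplus f(x)$, the XOR of your components equals $f\oplus f(0^n)$ pointwise; together with the monotonicity and nesting of the thresholds this yields the boundary decomposition in a few lines (note the two constructions give, in general, different decompositions---boundary decompositions are not unique, as the paper's own example shows). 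You also spell out the optimality lower bound---a monotone component flips at most once along any chain, so every decomposition has length at least $\alt(f)$, and when $f(0^n)=1$ some component must be the constant $\mathbf{1}$, whose removal leaves a decomposition of $\overline{f}$ and forces length at least $\alt(f)+1$---which is exactly what the lemma asserts and which the paper's written proof leaves implicit. The trade-off: the paper's inductive argument stays close to \cite{BCOST15} and describes the components via the alternation remaining above $x$, which resonates with its later structural results, whereas your argument is shorter, self-contained, and makes the minimality claim explicit; if you want to match the paper's phrasing of a decomposition as a \emph{set} of the stated length, add one sentence observing that the thresholds $\mathbf{1}[a(x)\ge i]$ are pairwise distinct (every value in $\{0,\dots,\alt(f)\}$ is attained by $a$, or else a coincidence of two thresholds would contradict your own lower bound).
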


We will often use the fact that monotone decomposition with implication property corresponds to a monotone decision list:

\begin{proposition}
	\label{prop:decomp-dl}
	Let $k$ be an even number\footnote{This is a minor constraint as we can prepend or append constant functions to a monotone decomposition or an MDL.} and $f_1 \Rightarrow \dots \Rightarrow f_k$ be Boolean functions. The following functions are equivalent to one another:
	
	\begin{itemize}
		\item $f_1 \oplus f_2 \oplus \dots \oplus f_k$,
		\item $(f_1,0)(f_2,1)\dots(f_k,1)(\textbf{1},0)$,
		\item $\overline{f_1}f_2 \vee \overline{f_3}f_4 \vee \dots \vee \overline{f_{k-1}}f_k$,
		\item $(\textbf{0} \vee \overline{f_1})\wedge(f_2 \vee \overline{f_3})\dots \wedge (f_{k-2} \vee \overline{f_{k-1}})\wedge (f_k \vee \overline{\textbf{1}})$.
	\end{itemize}
\end{proposition}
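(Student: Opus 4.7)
The plan is to reduce all four expressions to a single discrete parameter. Since $f_1\Rightarrow f_2\Rightarrow\dots\Rightarrow f_k$, for every input $x\in\{0,1\}^n$ the sequence $(f_1(x),\dots,f_k(x))$ is nondecreasing in $\{0,1\}$ and therefore has the form $0^{t}1^{k-t}$ for a unique ``jump index'' $t=t(x)\in\{0,1,\dots,k\}$. It therefore suffices to evaluate each of the four expressions purely as a function of $t$ and show that all four yield the same value on every $t\in\{0,1,\dots,k\}$; the hypothesis that $k$ is even will enter only through the boundary cases.

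First, $f_1\oplus\dots\oplus f_k$ is the parity of $k-t$ ones, which equals $t\bmod 2$ because $k$ is even. For the decision list $(f_1,0)(f_2,1)\dots(f_k,1)(\mathbf{1},0)$, whose labels alternate $0,1,0,1,\dots,0,1$ and whose default is $0$, the activated node is the $(t+1)$-th whenever $t<k$, so the output is $0$ if $t$ is even and $1$ if $t$ is odd; when $t=k$ the default node fires and gives $0$, which still equals $t\bmod 2$ since $k$ is even. Thus the decision list also computes $t\bmod 2$.

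Third, the DNF $\overline{f_1}f_2\vee\overline{f_3}f_4\vee\dots\vee\overline{f_{k-1}}f_k$ is satisfied exactly when some pair $(f_{2j-1},f_{2j})$ equals $(0,1)$; by the monotone-chain structure such a pair exists iff the jump index $t$ lies in the odd set $\{1,3,\dots,k-1\}$, so this DNF again equals $t\bmod 2$. For the fourth expression, simplifying $\mathbf{0}\vee\overline{f_1}=\overline{f_1}$ and $f_k\vee\overline{\mathbf{1}}=f_k$ reduces the CNF to $\overline{f_1}\wedge(f_2\vee\overline{f_3})\wedge\dots\wedge(f_{k-2}\vee\overline{f_{k-1}})\wedge f_k$. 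A conjunct fails iff the jump occurs at an even index, with the end clauses $\overline{f_1}$ and $f_k$ accounting precisely for the boundary cases $t=0$ and $t=k$ (both even since $k$ is even). Hence the CNF is $1$ iff $t$ is odd, matching the other three expressions.

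The argument is essentially a bookkeeping case analysis driven by $t$; the only mildly delicate point is ensuring that the parity of $k$ is used consistently at both endpoints, so that the terminal clauses $\overline{f_1}$ and $f_k$ of the CNF and the default node of the decision list correctly absorb the cases $t=0$ and $t=k$. Once those boundaries are reconciled, all four expressions collapse to the single identity ``$t\bmod 2$'' and the proposition follows.
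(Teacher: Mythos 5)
Your proof is correct and follows essentially the same route as the paper: both exploit the implication property to reduce the query values to a sorted string $0^t1^{k-t}$ and then check that all four expressions compute the same parity (your explicit evaluation at each jump index $t$ versus the paper's argument of flipping the last $1$-bit and observing alternation are the same bookkeeping). No gaps; the boundary cases $t=0$ and $t=k$ are handled correctly using the evenness of $k$.
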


\begin{proof}
	Because of the implication property of $f_i$'s, for any input $x$, the sequence of bits $s:=\langle f_1(x),f_2(x),\dots,f_k(x) \rangle$ is sorted (from 0 to 1). All the above four functions compute whether the number of 1's in the sequence $s$ is odd. To see this, first check that all the functions evaluate to 0 if $s=1^k$. As we keep flipping the last 1-bit of $s$ to 0, the function value (of each of the above four functions) alternates between 0 and 1.
\end{proof}

\begin{aproof}{Proof of Monotone Decomposition Lemma (Lemma~\ref{lem:alternation-decomp})}{app:lem:alternation-decomp}
We reproduce the proof of Lemma 1 in \cite{BCOST15} bringing out the details to substantiate the extra properties that we need.
%A \textit{chain} $X$ is a series of inputs $X_1,X_2 \dots X_l$ such that for all $i\in [l-1]$ we have $X_i \le X_{i+1}$. 
Recall that number of alternations of a chain $\mathcal{X}$ w.r.t. a function $f$, $\alt(f,X)$ is the number of times the value of $f$ changes as we move along $\mathcal{X}$. 
	
	For an input $x\in \{0,1\}^n$, we define as $a_f(x):=max\{ \alt(f,\mathcal{X})\}$, where  $\mathcal{X}$ is any chain starting at $x$. Note that $\alt(f)=max_x \{a_f(x)\}=a_f(0^n)$. We use induction on $k=\alt(f)$ to prove the lemma. 
	
	The base case $\alt(f)=0$ means $f$ is either 0 or 1 for all inputs. When $f(0^n)=0$, we have $f(x)=0$ and when $f(0^n)=1$, we have $f(x)=1=\neg (0)$. We thus can trivially decompose $f$ into $k=0$ many monotone functions.
	
	Now we assume that the claim is true for all functions on $n$ variables with alternation less than $k$ and prove it for the function $f$ with $\alt(f)=k$. For $1 \le i \le k$, we define the following Boolean function on $n$ variables: $$g_i(x)=1 \iff a_f(x) < i.$$ We shall show that these $g_i$'s are actually the desired monotone components $f_i$'s. Firstly note that all $g_i$'s are monotone as for any pair of inputs $x \prec y$, we have $a_f(y)\le a_f(x)$ and hence $g_i(x)\le g_i(y)$. Also, the implication property holds because $g_i(x)=1 \Rightarrow a_f(x)<i<i+1 \Rightarrow g_{i+1}(x)=1$.
	
	We will argue that $f$ is indeed equal to $g_1 \oplus g_2 \oplus \dots \oplus g_k$ when $f(0^n)=0$. A similar derivation can be done for $f(0^n)=1$ case. Consider the function $f':=f\vee \overline{g_k}$. We will prove the following properties of $f'$.
	
	\begin{itemize}
	\item{$f'(0^n)=1$: As $\alt(f)=k$, there would be at least one chain (call $\mathcal{X}^*$) whose alternation w.r.t.~$f$ is $k$. Hence, $a_f(0^n)=a(f,\mathcal{X}^*)=k$ and $g_k(0^n)=0$, which implies $f'(0^n)=f(0^n) \vee \overline{g_k(0^n)}=1$. }
	
	\item{$\alt(f')=k-1$}: We will argue that for any chain $\mathcal{X}^*\equiv \langle x^{(1)},x^{(2)},\dots x^{(l)} \rangle$, if $\alt(f,\mathcal{X}^*)=k$, then $\alt(f',\mathcal{X}^*)=k-1$. Note that as $f(0^n)=0$, there exists a first index $p$ such that $f(x^{(p)})=1$ (as $k>0$). By definition of $g_k$, each of $x^{(1)},\dots ,x^{(p-1)}$ does not satisfy $g_k$ (for any $1\le i \le p-1$, the suffix-chain $\calC_i$ of $\mathcal{X}^*$ starting at $x^{(i)}$ has $\alt(f,\calC_i)=k$, thereby making $g_k(x^{(i)})=0$). 
	
	Thus the values of $f'=f\vee \overline{g_k}$ become 1 for the first $p-1$ inputs. Since we know $f(x^{(p)})=1$ and $\alt(f,\mathcal{X}^*)=k$, we get that $\alt(f',\mathcal{X}^*)=k-1$. We can also argue that there is no chain with alternations more than $k-1$ w.r.t.~$f'$. Therefore, $\alt(f')=k-1$.
	
	As $\alt(f')<k$, by induction hypothesis we obtain the functions $g'_1,g'_2,\dots ,g'_{k-1}$ with implication property and $f'=\neg (g'_1\oplus g'_2 \oplus\dots \oplus g'_{k-1})$.
	
	\item{$\forall i\in [k-1]$, $g'_i \equiv g_i$: If an input $x$ satisfies $g_k(x)=0$, it means $a_f(x)=k$, and hence $g_i(x)=0$ for $i\in [k-1]$. In the above part notice that we showed $\alt(f',\mathcal{X})=k-1$ when $\alt(f,\mathcal{X})=k$. This means $g'_i(x)=0$ too for all $i\in [k-1]$. 
		
	On the other hand, for inputs $x$ such that $g_k(x)=1$, observe that $f'(x)=f(x)$. As $a_f(x)$ and $a_{f'}(x)$ depend on the values of the corresponding functions only ``above'' $x$ and the functions $g_i$ and $g'_i$ are defined based on these values, they must be equal. }
	
	Hence for all $x$, we have $g'_i(x)=g_i(x)$. By our definition of $g_k$, it can be shown that $f\Rightarrow g_k$ is a tautology. This fact along with $f'=f\vee \overline{g_k}$ implies that $f\equiv \overline{f'} \oplus g_k=g'_1 \oplus \dots g'_{k-1} \oplus g_k=g_1 \oplus \dots \oplus g_k$.
	\end{itemize}
\end{aproof}

\subsection{Uniqueness of Alternation Decomposition in a Special Case}

For a given function, there could be multiple monotone decompositions with implication and optimality properties. 
%We provide an example in Appendix~\ref{app:prop:unique-alternation-decomp} where we also show a special case where the alternation decomposition is unique - that is when all chains in the Boolean hypercube have the same alternation.
%However this uniqueness is not guaranteed in the general case, when the function have different alteration in different chains. 
For example, consider the function $f = (x = 0^{n/2}1^{n/2})$ for any even $n > 2$. The function value is $1$ if and only if the input is $0^{n/2}1^{n/2}$. Clearly, the alternation of this function is $2$. We will give two different decompositions for this function. One is $f(x) = \Th_{n/2}(x)\oplus (\Th_{n/2}(x) \land (x \neq 0^{n/2}1^{n/2}))$. Another decomposition is $f(x) = \Th_{n/2 + 1}(x) \oplus (\Th_{n/2 + 1}(x) \vee (x = 0^{n/2}1^{n/2}))$. It is easy to verify that these two monotone decompositions satisfy the additional properties. However, we do have a unique monotone decomposition with implication and optimality properties in a special case:

\begin{proposition}
\label{prop:unique-alternation-decomp}
If a Boolean function $f$ exhibits uniform alternation for all chains of maximal length in the Boolean hypercube, then $f$ has a unique alternation decomposition with implication property.
\end{proposition}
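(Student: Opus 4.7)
The plan is to derive a single formula for each monotone component of an arbitrary alternation decomposition $f = g_1 \oplus \cdots \oplus g_l$ with the implication property that depends only on $f$, thereby forcing uniqueness. Let $k = \alt(f)$ and $c = f(0^n)$; from the discussion preceding Lemma~\ref{lem:alternation-decomp}, optimality gives $l = c + k$. Fix any maximal chain $\mathcal{X} = \langle 0^n = x^{(0)} \prec \cdots \prec x^{(n)} = 1^n \rangle$ and set $m_i = |\{j : g_j(x^{(i)}) = 1\}|$. The implication property forces the $1$-set at each $x^{(i)}$ to be the top $m_i$ indices $\{l - m_i + 1, \ldots, l\}$, and monotonicity of the $g_j$'s makes $(m_i)$ nondecreasing. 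A short optimality argument---if $m_0 \ge 2$, two of the top-indexed $g_j$'s are forced by monotonicity to be constant $1$ and can be cancelled to shorten the decomposition---pins down $m_0 = c$. By uniform alternation, $\mathcal{X}$ has exactly $k$ alternations of $f$, so exactly $k$ of the increments $m_i - m_{i-1}$ are odd; combined with $\sum_i (m_i - m_{i-1}) = m_n - c \le l - c = k$ and each odd increment being at least $1$, every odd increment must equal $1$ and every even one must equal $0$. Writing $\phi_{\mathcal{X}}(x^{(i)})$ for the number of $f$-alternations on the prefix of $\mathcal{X}$ ending at $x^{(i)}$, this gives $m_i = c + \phi_{\mathcal{X}}(x^{(i)})$, and hence
\[
g_j(x^{(i)}) = 1 \iff c + \phi_{\mathcal{X}}(x^{(i)}) \ge l - j + 1.
\]

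The second use of uniform alternation---and the main step of the proof---is to show that $\phi_{\mathcal{X}}(x)$ depends only on $x$, not on $\mathcal{X}$. Given any two maximal chains $\mathcal{X}_1, \mathcal{X}_2$ from $0^n$ to $x$, I would glue each to a common maximal chain $\mathcal{Z}$ from $x$ to $1^n$. The two resulting paths are chains of maximal length in $\{0,1\}^n$, so by hypothesis each has exactly $k$ alternations of $f$; since alternation counts add across the concatenation point, $\phi_{\mathcal{X}_1}(x) = k - \alt(f, \mathcal{Z}) = \phi_{\mathcal{X}_2}(x)$. Hence $\phi(x) := \phi_{\mathcal{X}}(x)$ is a well-defined function on $\{0,1\}^n$ determined by $f$ alone, and the formula above reads $g_j(x) = [\, c + \phi(x) \ge l - j + 1 \,]$, which is uniquely prescribed. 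Since this argument applies to every alternation decomposition of $f$ with the implication property, all such decompositions must coincide.

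The main obstacle is exactly this well-definedness of $\phi$: without uniform alternation, two maximal chains ending at the same $x$ may record different $f$-alternation counts, and then the chain-wise formulas for $g_j(x)$ obtained along the two chains can disagree---the two decompositions of $f = (x = 0^{n/2}1^{n/2})$ preceding the proposition illustrate this phenomenon. The gluing argument above is precisely what removes the ambiguity; once $\phi$ is well-defined, monotonicity of each derived $g_j$ follows from the fact that $\phi$ itself is nondecreasing along edges of the hypercube, and verifying that the resulting $g_j$'s indeed XOR to $f$ is a routine count (the number of indices $j$ with $g_j(x) = 1$ is $c + \phi(x)$, whose parity equals $f(x)$).
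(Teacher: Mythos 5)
Your argument is correct, but it takes a genuinely different route from the paper. The paper argues by contradiction on the largest index $i$ where two implication-ordered optimal decompositions disagree: it fixes a disagreement point $x$, chooses a maximal chain through $x$ on which $f_i$ (hence $f_{i-1}$) is $0$ below $x$, deduces that $f_{i-1}$ and $f_i$ would both flip from $0$ to $1$ at the same step, and concludes that this chain would then have fewer than $k=\alt(f)$ alternations, contradicting uniform alternation. You instead show that uniform alternation forces an explicit, chain-independent formula for every component: the counting along a maximal chain (using optimality to get $m_0=c$ and length $l=c+k$, then the pigeonhole on increments) gives $g_j(x)=\bigl[\,c+\phi(x)\ge l-j+1\,\bigr]$, and the gluing argument shows the prefix-alternation count $\phi$ is well defined exactly because all maximal chains carry $k$ alternations. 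Both uses of the hypothesis are sound (note that the uniform value is indeed $k$ since $\alt(f)$ is attained on some maximal chain), and your cancellation argument pinning $m_0=c$ and the bound $l\ge k+c$ are fine, with Lemma~\ref{lem:alternation-decomp} supplying $l\le k+c$. What your approach buys is more than uniqueness: it exhibits the unique decomposition explicitly (essentially recovering the canonical components $g_i(x)=1 \iff a_f(x)<i$ from the proof of Lemma~\ref{lem:alternation-decomp} and showing they are forced), and it cleanly handles both cases $f(0^n)=0$ and $f(0^n)=1$, whereas the paper's minimal-counterexample argument is shorter but non-constructive and is written only for $f(0^n)=0$. As you note, the final verification that your formulas form a valid decomposition is not even needed for uniqueness, since existence is already given by Lemma~\ref{lem:alternation-decomp}.
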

%\begin{aproof}{Proof of Proposition~\ref{prop:unique-alternation-decomp}}{app:prop:unique-alternation-decomp}
\begin{proof}
Suppose that there are two alternation decompositions for $f$, both having implication property. Denote them by $\{f_1,f_2, \ldots f_k\}$ and $\{f_1',f_2', \ldots f_k'\}$ such that $f_{1} \Rightarrow f_{2} \Rightarrow \ldots f_{k-1} \Rightarrow f_k$ and $f_{1}' \Rightarrow f_{2}' \Rightarrow \ldots f_{k-1}' \Rightarrow f_k'$ and $k=\alt(f)$ (assuming $f(0^n)=0$). We will argue by contradiction that the set of functions must be the same i.e., $f_i = f'_i$ for all $i$. Consider the largest $i$ for which there exists $x \in \zon$ such that $f_i(x) \ne f_i'(x)$. Without loss of generality, suppose $f_i(x) = 1$ and $f'_i(x)=0$. Fix such an $x$ and a chain $0^n = x^{(0)} \prec x^{(1)} \prec x^{(2)} \prec \dots \prec x^{(n)}=1^n$ containing $x$ such that all the inputs $y$ before $x$ in the above chain satisfy $f_i(y)=0$ (so that $f_{i-1}(y)=0$ as well).
By using the optimality property, we have $f = f_1 \oplus f_2 \dots \oplus f_k = f'_1 \oplus f'_2 \dots \oplus f'_k$. 

If $i = 1$, then the two decompositions do  not agree on the input $x$ as $f_i = f'_i$ for $i>1$, which is a contradiction. 

If $i>1$, we must necessarily have that $f_{i-1}(x) = 1$, as otherwise $1=f_1 \oplus \dots \oplus f_{i} = f'_1 \oplus \dots \oplus f'_{i}=0$ (as  $i$ is the largest index such that $f_i \ne f'_i$ and due to implication property). Therefore, $f_i$ and $f_{i-1}$ both change their evaluation from 0 to 1 on changing the input to $x$ from the input that is just before $x$ in the chain we considered. This contradicts the fact that this chain has $k$ alternations.
\end{proof}

\subsection{Constraints on Monotone Components for $\TC^0$ and beyond}
We continue the study in this section by imposing complexity constraints on the function $f$. A natural question to ask is if the monotone components of $f$ in its monotone decomposition  are necessarily harder than $f$ in terms of circuit complexity classes. %More formally:
%\vspace{-2mm}
%\paragraph{\bf Problem 1:}
%\textit{If $\calC$ is a circuit complexity class and a family of functions $\{f_n\}$ where $f_n :\zon \to \zo$ in $\calC$, then is it necessary that there is a monotone decomposition of $f$ as $f = f_1 \oplus  f_2 \oplus \ldots \oplus f_\ell$ where each $f_i$ is computable by monotone circuits in class $\calC$?}
%We do not answer this question in this paper, but the partial answers that we have for this already gives rise to interesting results about the model that we study as we will demonstrate in the next section.
We first answer this question for classes that contain monotone circuits for the threshold functions. In this case, we show that we can always find a monotone decomposition where the component functions are in $\textit{mon-}\calC$.

\begin{lemma}
\label{lem:tc0-decomp}
If $\textit{mon-}\TC^0 \subseteq \textit{mon-}\calC$, then for any $f$ computed by a circuit in the class $\calC$, there is a monotone decomposition $f_1 \oplus  f_2 \oplus \ldots \oplus f_{2n+1}$ with implication property such that each $f_i$ is in $\textit{mon-}\calC$.
\end{lemma}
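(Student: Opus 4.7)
The plan is to explicitly construct a monotone decomposition of length $2n+1$ based on the Hamming weight, and then use a Berkowitz-style simulation of negations restricted to a single weight slice to realize the nontrivial components as monotone $\calC$-circuits.

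First, I would define an antimonotone ranking $t : \{0,1\}^n \to \{0, 1, \ldots, 2n+1\}$ by
\[
t(x) \;:=\; 2\bigl(n - \mathsf{wt}(x)\bigr) + \bigl(1 - f(x)\bigr),
\]
which packages the Hamming weight and the function value together so that the parity of $t(x)$ encodes $f(x)$. Two facts are immediate: (i) $t$ is antimonotone in $x$, because flipping a $0$-coordinate of $x$ to $1$ decreases $2(n - \mathsf{wt}(x))$ by exactly $2$ while changing $1 - f(x)$ by at most $1$, so $t$ strictly decreases along every covering relation; and (ii) $t(x)$ is even iff $f(x) = 1$, since $2(n - \mathsf{wt}(x))$ is always even. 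Setting $f_i(x) := [t(x) < i]$ for $i = 1, 2, \ldots, 2n+1$, antimonotonicity of $t$ gives monotonicity of each $f_i$ and the implication chain $f_1 \Rightarrow f_2 \Rightarrow \cdots \Rightarrow f_{2n+1}$; the number of ones in the sequence $(f_1(x),\ldots,f_{2n+1}(x))$ is $(2n+1) - t(x)$, whose parity matches $f(x)$ by (ii), so $f = \bigoplus_{i=1}^{2n+1} f_i$.

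A short case analysis on the parity of $i$ and the value of $f$ gives the explicit forms
\[
f_{2k}(x) = \Th_{n-k+1}(x), \qquad f_{2k+1}(x) = \Th_{n-k+1}(x) \vee \bigl( \Th_{n-k}(x) \wedge f(x) \bigr).
\]
The even components lie in $\textit{mon-}\TC^0 \subseteq \textit{mon-}\calC$; the only nontrivial task is realizing the odd components in $\textit{mon-}\calC$, since the natural expression for $f_{2k+1}$ uses $f$, which need not be monotone. For this I would invoke a Berkowitz-style trick: on the slice $\mathsf{wt}(x) = n-k$, observe that $\overline{x_j} = 1 \Leftrightarrow x_j = 0 \Leftrightarrow \mathsf{wt}(x_{-j}) = n-k \Leftrightarrow \Th_{n-k}(x_{-j}) = 1$, where $x_{-j}$ denotes $x$ with coordinate $j$ set to $0$; each gate $\Th_{n-k}(x_{-j})$ is monotone in $x$ and lies in $\textit{mon-}\TC^0$. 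Starting from a circuit for $f$ in $\calC$, pushing negations to the leaves via De Morgan, and replacing each $\overline{x_j}$ at a leaf by $\Th_{n-k}(x_{-j})$, yields a monotone circuit $\widetilde{f}_k \in \textit{mon-}\calC$ that agrees with $f$ on the slice $\mathsf{wt} = n-k$. Substituting $\widetilde{f}_k$ for $f$ in the expression for $f_{2k+1}$ and verifying the three weight regimes (above, on, and below the slice) establishes
\[
f_{2k+1}(x) \;=\; \Th_{n-k+1}(x) \;\vee\; \bigl(\Th_{n-k}(x) \wedge \widetilde{f}_k(x)\bigr),
\]
which realizes $f_{2k+1}$ as a monotone $\calC$-circuit.

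The main obstacle I expect is the Berkowitz substitution step: verifying that $\overline{x_j} \mapsto \Th_{n-k}(x_{-j})$ yields an actual equality on the slice $\mathsf{wt} = n-k$ (and not merely an inequality in one direction), and then checking that $\widetilde{f}_k$, which may deviate from $f$ off the slice, still combines correctly with the two outer thresholds to reproduce $f_{2k+1}$ globally. Everything else reduces to the antimonotonicity check for $t$, a parity-of-count argument for the XOR, and a routine case split on $\mathsf{wt}(x)$.
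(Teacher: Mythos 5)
Your proposal is correct and is essentially the paper's own construction: your components $f_{2k}=\Th_{n-k+1}$ and $f_{2k+1}=\Th_{n-k+1}\vee(\Th_{n-k}\wedge f)$ are exactly the paper's weight-slice decomposition with the indices reversed (so the implication chain runs forward rather than backward), and the slice substitution $\overline{x_j}\mapsto \Th_{n-k}(x_{-j})$ is precisely the paper's trick of replacing $\overline{x_j}$ by a threshold on the remaining variables, valid because it is exact on the slice $\mathsf{wt}(x)=n-k$ and the outer thresholds absorb all other weights. The only cosmetic caveat is that pushing negations to the leaves should be stated for the gate types actually present in $\calC$ (e.g., via the dualization $\neg\Th_\ell(a_1,\dots,a_m)=\Th_{m-\ell+1}(\neg a_1,\dots,\neg a_m)$ for threshold gates, as the paper does), not just plain De Morgan for $\wedge/\vee$.
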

%\textcolor{purple}{Is it $2n+2$ fns or $2n$?}
\begin{proof}
Recall that $\Th_k$ denotes the function $\Th_k(x)=1$ iff the number of $1$'s in $x$ (denoted by $\mathsf{wt}(x)$) is at least $k$.
%Consider the threshold functions $Th_{i}$ and the ``intermediate'' functions $I_i = Th_{i+1} \vee (Th_i \wedge f^n)$ for $0 \le i \le n$. 
We directly write the decomposition.
For each $1 \le i \le 2n+1$,
$f_i(x) = \Th_{k+1}(x) \vee (\Th_{k}(x) \wedge f(x))$ when $i$ is $2k+1$ else $f_i(x) = \Th_{k}(x)$ when $i$ is $2k$.
%
%\[ f_i(x) = \left\{ \begin{array}{ll}
%\Th_{k+1}(x) \vee (\Th_{k}(x) \wedge f(x)) & \textrm{ when $i$ is $2k+1$ } \\
%\Th_{k}(x) & \textrm{ when $i$ is 2k }
%\end{array}
%\right.
%\]

\noindent{\bf Correctness:} Let $w = \mathsf{wt}(x)$. For any $x \in \zon$:

\begin{description}
\item{\sf For $i < w$:} we have $\Th_i(x)=1$ and $\Th_{i+1}(x) \vee (\Th_{i}(x) \wedge f(x)) = 1$.
\item{\sf For $i > w$:} we have $\Th_i(x)=0$ and $\Th_{i+1}(x) \vee (\Th_{i}(x) \wedge f(x)) = 0$.
\item{\sf For $i = w$:} we have $\Th_i(x)=1$ and $\Th_{i+1}(x) \vee (\Th_{i}(x) \wedge f(x)) = f(x)$
\end{description}
\noindent Using this we can compute the boundary functions as follows:
\begin{eqnarray*}
\forall i, 1\le i \le 2w:~f_i(x) & = & 1 \textrm{ and } \forall i, 2w+2\le i \le 2n+1:~ f_i(x) = 0\\[-1mm]
f_{2w+1}(x) & = & \Th_{w+1}(x) \vee (\Th_{w}(x) \wedge f(x)) = f(x)
\end{eqnarray*}
%$f_1 = \Th_{1}(x) \vee (\Th_{0}(x) \wedge f(x)) = 1$\\
%$f_2 = \Th_1(x) = 1$\\
%$\vdots$
%$f_{2w}(x) = \Th_{w}(x)= 1$
%$f_{2w+1}(x)= \Th_{w+1}(x) \vee (\Th_{w}(x) \wedge f(x)) = f(x)$,$\vdots$

\noindent Observing the above evaluations, note that the implication property holds for $f_i$'s as $f_{2n+1} \Rightarrow f_{2n} \Rightarrow \dots \Rightarrow f_2 \Rightarrow f_1$. The expression $f_1(x) \oplus f_2(x) \dots \oplus f_{2n+1}(x)$ evaluates to $1 \oplus 1 \dots 1 \oplus f(x) \oplus 0 \dots \oplus 0$ where number of $1$'s before $f(x)$ in the above expression is $2w$ (which is even). Thus, on a given input $x$, the decomposition evaluates to $f(x)$.\\[-3mm]

\noindent{\bf Complexity bound for $f_i$:}	Now we will prove that the query functions actually are in $\textit{mon-}\calC$. The $f_i$'s for even $i$, being threshold functions, satisfy this property. We now show that $f_{2k+1}=\Th_{k+1} \vee (\Th_k \wedge f)$ has a monotone circuit in $\calC$ circuit for $0\le k \le n$. 

Let $C$ be a circuit in $\calC$ computing $f$. We first push down all the negation gates to the input variables in $C$, by using the fact that $\neg({\Th_\ell(a_1,a_2,\dots a_m)})=\Th_{m-\ell+1}(\neg a_1,\neg a_2,\dots \neg a_m)$. This can be done with only polynomial increase in size of $C$, and same depth. In the resulting circuit, further remove the negations at the variables as follows: if $\bar{x_j}$ appears, replace it with $\Th_k(\{x_1,x_2,\dots x_n\}\setminus \{x_j\})$. Call the final circuit $C'$. Note that $C'$ does not have any negation gates, and therefore is a monotone circuit in $\calC$. We shall argue that the circuit $C'' = \Th_{k+1} \vee (\Th_k \wedge C')$ computes $f_{2k+1}$. It can be seen that $C''$ outputs $0$ for inputs of weight less than $k$ and outputs 1 for inputs of weight more than $k$, which is exactly what $f_{2k+1}$ evaluates to on these inputs. Therefore, it suffices to show that $C''$ correctly outputs $f_{2k+1}(x)=f(x)$ on inputs of weight exactly equal to $k$. To see this, note that the final transformation of $\bar{x_j}=\Th_k(\{x_1,x_2,,\dots x_n\}\setminus \{x_j\})$ holds when $\mathsf{wt}(x)=k$.
\end{proof}

By the above lemma, for any $f\in \TC^0$ we have given a decomposition into $2n+1$ monotone $\TC^0$ functions. However, the monotone decomposition can be far from having the optimality property. If the function has uniform alternation among all chains of length $n+1$, then this can be improved to $\alt(f)$ keeping the complexity of the monotone components to be within $\TC^0$, as show below.\\

\noindent {\bf Monotone decomposition for `special' functions in $\TC^0$:}
\begin{proposition}
\label{prop:equal-alt-decomp}
If $f\in \TC^0$ has uniform alternation across all chains of length $n+1$, say $\mathsf{alt}(f)=k$, then there is a monotone decomposition of $f$ of length $k$ or $k+1$ where all the components are in $\TC^0$.
\end{proposition}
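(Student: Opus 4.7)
The strategy is to pin down the unique alternation decomposition explicitly using the uniform alternation hypothesis, and then show that its boundary functions admit $\TC^0$ circuits obtained by parallel evaluation of $f$ along a canonical chain. Assume first that $f(0^n)=0$; the other case will reduce to this. By Proposition~\ref{prop:unique-alternation-decomp}, $f$ has a unique monotone decomposition $B_1 \Rightarrow B_2 \Rightarrow \cdots \Rightarrow B_k$ satisfying both implication and optimality, and by the proof of Lemma~\ref{lem:alternation-decomp}, these boundary functions admit the characterization $B_j(x)=1 \iff a_f(x) < j$, where $a_f(x)$ denotes the maximum alternation of $f$ along any chain starting at $x$.

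The uniform alternation hypothesis yields a closed form for $B_j$. For any $x$ and any two chains $\mathcal{X}_1,\mathcal{X}_2$ from $0^n$ to $x$, extending both by a common tail from $x$ to $1^n$ produces two maximal chains whose total alternations are each $k$; since the tails contribute equally, $\alt(f,\mathcal{X}_1)=\alt(f,\mathcal{X}_2)$. Call this common value $u(x)$; symmetrically, all chains from $x$ to $1^n$ share a fixed alternation $d(x)$, with $u(x)+d(x)=k$. Since every chain starting at $x$ extends to one reaching $1^n$ without decreasing its alternation, $a_f(x)=d(x)=k-u(x)$, and therefore $B_j(x) = \bigl[\, u(x) \ge k - j + 1 \,\bigr]$.

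The main technical step is exhibiting a $\TC^0$ circuit for $u$. Define the canonical approach $x_{[i]} := (x_1,\dots,x_i,0,\dots,0)$ for $i=0,\dots,n$; then $0^n=x_{[0]} \preceq x_{[1]} \preceq \cdots \preceq x_{[n]}=x$ is a chain from $0^n$ to $x$ (possibly with repetitions, which are harmless since coinciding consecutive terms contribute $0$ to the alternation count). By chain-independence,
\[
u(x) \;=\; \sum_{i=1}^{n}\bigl(f(x_{[i-1]}) \oplus f(x_{[i]})\bigr).
\]
Each $x_{[i]}$ is obtained from $x$ by ANDing with a fixed mask, so running $n+1$ parallel copies of the $\TC^0$ circuit for $f$, taking pairwise XORs, and then summing $n$ single bits (iterated addition being in $\TC^0$) computes $u(x)$. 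A final threshold comparing $u(x)$ with the constant $k - j + 1$ places $B_j$ in $\TC^0$.

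When $f(0^n)=1$, applying the above to $\neg f$ (which inherits uniform alternation $k$ and lies in $\TC^0$) gives $\TC^0$ components $B_1,\dots,B_k$ with $\neg f = B_1 \oplus \cdots \oplus B_k$; then $f = \mathbf{1} \oplus B_1 \oplus \cdots \oplus B_k$ is a length-$(k+1)$ monotone decomposition with components in $\TC^0$. The main conceptual hurdle is establishing the chain-independence of $u(x)$; once that is in hand the $\TC^0$ simulation is a routine parallel construction, and no delicate argument about negations is needed since the components are only required to admit (not-necessarily-monotone) $\TC^0$ circuits.
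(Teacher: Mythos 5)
Your proof is correct and follows essentially the same route as the paper's: both identify each boundary component by thresholding the number of alternations of $f$ along a canonical saturated chain from $0^n$ to the input, computed by evaluating parallel copies of the $\TC^0$ circuit for $f$ on the chain, XOR-ing consecutive values, and applying a threshold, with the $f(0^n)=1$ case handled by complementation. The only differences are cosmetic: you use the prefix-mask chain $x_{[i]}=x\wedge 1^i0^{n-i}$, which avoids the paper's prefix-sum ({\sc Bitcount}) step for its leftmost-one-removal chain, and you spell out the chain-independence argument (extension by a common tail) that the paper leaves implicit.
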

\begin{proof}	
%\begin{aproof}{Proposition~\ref{prop:equal-alt-decomp}}{app:prop:equal-alt-decomp}
		It is worthwhile to recall that the decomposition is unique when all the chains of length $n+1$ have same alternation (Proposition~\ref{prop:unique-alternation-decomp}). We will give the proof only when $f(0^n)=0$ (decomposition into $k$ monotone $\TC^0$ components). In the other case, we may just take the negation of the decomposition of the complement function (which would still be in $\TC^0$ and have uniform alternation). 
		
		Let $f\equiv f_1 \oplus f_2 \oplus \dots \oplus f_k$ where we have $k=\alt(f)$ and $\forall i\in[k-1]:~f_{i} \Rightarrow f_{i+1}$. We exploit the following structure of $f_i$ to obtain a $\TC^0$ circuit computing it. This observation comes from the uniform alternation feature of $f$.
		
		\begin{observation}
			For any $1\le i\le k$, $f_i(x)=1$ iff there is a chain from $0^n$ to $x$ with at least $k-i+1$ alternations. 
		\end{observation}
		
		Since all chains of length $\mathsf{wt}(x)+1$ from $0^n$ to $x$ have same alternations, we can take any arbitrary such chain to decide the value of $f_i(x)$. We will be considering the chain of inputs obtained by repeatedly making the leftmost 1 to 0.
		
		For input $x\in \{0,1\}^n$, we define $n$ new strings $y^{(i)}$ for $1\le i \le n$ inductively. The string $y^{(1)}$ is obtained by making the leftmost 1 of $x$ into a 0; and similarly for any other $i>1$, we obtain $y^{(i)}$ by making the leftmost 1 of $y^{(i-1)}$ into a 0. If the number of 1's in $x$ happens to be lesser than $i$ then we define $y^{(i)}$ to be $0^n$. We will now argue that $y^{(i)}$'s can be computed in $\TC^0$. Since {\sc Bitcount} $\in \TC^0$, we can obtain all the prefix sums $p^{(i)}\in \{0,1\}^{\log n}=\sum_{j=1}^{j=i} x_j$ in $\TC^0$. Then, the $j^{th}$ bit of $y^{(i)}$, namely $y^{(i)}_j:=x^{(i)}_j \wedge (p^{(j)} > i)$. This works because for the first $i$ 1's of $x$, the prefix sum $p^{(i)}$ is at most $i$, and it is greater than $i$ for all other bits. Also, this can be implemented in $\TC^0$ as we know that the relation $>$ is in $\AC^0$. 
		
		Once we have obtained $y^{(i)}$'s, we construct the Boolean sequence $s$ of length $n+1$, $s:=f(x).f(y_1).f(y_2).\dots f(y_n)$. This also can be done in $\TC^0$ as we know $f\in \TC^0$ and each of $y^{(i)}$'s can be parallelly evaluated in $\TC^0$. Notice that the sequence $\langle y^{(n)},y^{(n-1)},\dots y^{(1)},x \rangle$ represents the same input $0^n$ till a point and represents a chain from $0^n$ to $x$ from the next point. Hence by our observation and the fact that $f(0^n)=0$, we note that $f_i(x)=1$ iff $alternations(s)\ge i$. Now to count the number of alternations in the sequence $s$, we compare every two successive bits of $s$ and check whether the number of times it changes is at least $i$. That is, $f_i \equiv \Th_{k-i+1}(s_1\oplus s_2, s_2\oplus s_3, \dots ,s_n\oplus s_{n+1})$. Clearly this can also be done in $\TC^0$ as bit-parity and threshold gates are in $\TC^0$.
		
		It has to be noted that we only gave a decomposition with monotone functions in $\TC^0$. We do not know whether these functions have monotone $\TC^0$ circuits. The other limitation of course, is that it only works for functions with uniform alternation.
\end{proof}

\section{Deterministic Adaptive Monotone Decision Trees}
\label{sec:d-mdt}

The model that we study first is the deterministic adaptive monotone decision trees and the associated complexity measures like $\DTm $ and $\DLm$ defined in Introduction.

\subsection{Monotone Decision Trees and Monotone Decision Lists}
In this section, we relate complexity of monotone decision trees and monotone decision lists. In particular, we prove the following theorem:

\begin{theorem}
\label{thm:mdt-mdl}
For any Boolean function $f$, $\DTm(f) = \lceil\log\DLm(f)\rceil$ \footnote{Using the same proof, we also get for any circuit complexity class $\calC$ with appropriate closure properties that $\MDT{\calC} = \MDL{\calC}$ -- this will be used in Section~\ref{sec:mdt-query-restrict}.}. %\footnote{\textcolor{purple}{much stronger equality than Theta}}
\end{theorem}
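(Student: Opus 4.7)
The plan is to prove the two matching bounds separately; the heavier lifting is on the direction $\DLm(f) \le 2^{\DTm(f)}$.

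For the upper bound $\DTm(f) \le \lceil \log \DLm(f) \rceil$, start with an optimal monotone decision list of length $k = \DLm(f)$. Invoke Property~2 of Subsection~\ref{subsec:normal-forms} to pass to an equivalent forward-firing list $\widetilde L = (g_1,c_1)(g_2,c_2)\ldots(g_k,c_k)$, with $g_i = \bigvee_{j \le i} f_j$ monotone and (after appending a trailing $(\mathbf{1}, c)$ if needed) $g_k \equiv \mathbf{1}$. Since $g_1 \Rightarrow g_2 \Rightarrow \dots \Rightarrow g_k$, for every input $x$ the vector $(g_1(x),\ldots,g_k(x))$ has the sorted shape $0^{a-1}1^{k-a+1}$, and the activated index $a \in [k]$ is uniquely determined. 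Build the monotone decision tree by doing binary search on this index: at the root query $g_{\lceil k/2 \rceil}$, and recurse on the half of indices consistent with the answer, outputting $c_a$ at the unique leaf identifying $a$. All queries are the monotone $g_i$'s, and the tree height is $\lceil \log k \rceil$.

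For the lower bound $\DLm(f) \le 2^{\DTm(f)}$, induct on the height $h$ of a monotone decision tree $T$ to show that $T$ admits an equivalent monotone decision list of length at most $2^h$ whose terminal query is the constant $\mathbf{1}$. The base $h=0$ is a single leaf with value $c$, represented by $(\mathbf{1},c)$. For the inductive step, let $g_v$ be the root query and $T_0, T_1$ its $0$- and $1$-subtrees, with lists $L_0 = (h_1,d_1)\ldots(h_a,d_a)$ and $L_1 = (e_1,c_1)\ldots(e_b,c_b)$ of length at most $2^{h-1}$ each, satisfying $h_a \equiv e_b \equiv \mathbf{1}$ by hypothesis. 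Define
\[
L' \;=\; (g_v \wedge e_1,c_1)(g_v \wedge e_2,c_2)\ldots(g_v \wedge e_b,c_b)(h_1,d_1)\ldots(h_a,d_a).
\]
Each query is an AND of two monotone functions or an unmodified monotone $h_i$, the trailing query is still $\mathbf{1}$, and the length is at most $a + b \le 2^h$. For correctness: if $g_v(x)=1$ then $g_v \wedge e_i$ agrees with $e_i$ on $x$, so the first query of $L'$ that fires is precisely the activated query of $L_1$ (guaranteed to exist since $e_b\equiv\mathbf{1}$), producing output $L_1(x) = T(x)$ and preventing any $h_i$ from being reached; if $g_v(x)=0$ then every first-block query evaluates to $0$ and the second block replays $L_0$, producing $L_0(x) = T(x)$. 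Taking $h = \DTm(f)$ and applying $\log$ on both sides yields the desired inequality.

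The main potential obstacle in the lower bound is that we cannot use $\neg g_v$, so we cannot simply guard $L_0$'s queries with ``$g_v=0$''; the fix is the asymmetric construction above, which guards only the $L_1$-block with $g_v$, places it first, and relies on the $\mathbf{1}$-terminator of $L_1$ together with the first-fire semantics to simulate the two cases. Finally, since each new query is an AND or OR of $O(1)$ previously constructed monotone queries (polynomially many in total), the same construction keeps all queries inside $\textit{mon-}\calC$ whenever $\calC$ is closed under polynomial AND and OR, which is the footnoted conclusion $\MDT{\calC}=\MDL{\calC}$ used later in Section~\ref{sec:mdt-query-restrict}.
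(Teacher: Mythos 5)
Your proposal is correct and follows essentially the same route as the paper: the upper bound is the identical binary search over a forward-firing (normal-form) list, and your inductive tree-to-list construction, once unrolled, is exactly the paper's list (each query is the conjunction of the passed queries on the root-to-leaf path, with leaves ordered right to left), so the only difference is that you verify correctness by induction on height while the paper argues directly via the lexicographic ordering of leaf paths.
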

\begin{proof}
($\ge$): To show $\DLm(f)\le 2^{\DTm(f)}$, note that it suffices to show that, from a monotone decision tree we can construct a monotone decision list of the same size. As the number of leaves in the optimal tree is upper bounded by $2^{\DTm(f)}$, the inequality then follows. Now, going to proving this relation itself, it follows from a known construction given by Blum~\cite{Blu92}. Since we need it to work in the context of monotone queries as well, we provide a self-contained argument in the following lemma.
\begin{lemma}
\label{lem:mdt-to-mdl}
	Let $T$ be a monotone decision tree of size $k$ computing a function $f$ on $n$ variables. Then there is a monotone decision list of size $k$ computing $f$.
\end{lemma}
\begin{proof}
%The proof essentially follows from a construction due to Blum~\cite{Blu92} which converts term decision lists to decision trees, by just observing that the query functions are preserved. We reproduce the argument here for completeness.

	 For each leaf $\ell$ in $T$, denote the path from source/root to $\ell$ as a string $s_\ell \in \{0,1\}^*$ defined as $s_\ell[i]=1$ iff the $i^{th}$ edge in the path is positive labeled.
	
	Let $S=s_1,s_2,\dots ,s_k$ be the ordering of all such strings  lexicographically, from larger to smaller. This is equivalent to ordering the leaves of the tree from right to left (under the convention that the left-edges correspond to 0 and the right-edges to 1 at each query in the decision tree). 
	
	We claim that the following monotone decision list $L$ computes $f$:
	$$L = (t_1,label({\ell_1}))(t_2,label({\ell_2}))\dots (t_k,label({\ell_k})),$$
	
	where $label(\ell_i)\in \zo$ denotes the label of the leaf $\ell_i$ and $t_i$ is the function obtained by the conjunction of all \textit{passed queries} along the path from root of $T$ to $\ell_i$ (see Figure~\ref{fig:2}). 
	
	To show this, let an input $x$ activate the $i^{th}$ node of $L$. We have $t_i(x)=1$. Now it suffices to prove that the computation by $T$ on $x$ reaches the leaf $\ell_i$ and hence outputs the same value $label(\ell_i)$.
	
	To prove this by contradiction, suppose not; that is, let the leaf reached in the decision tree on input $x$ be $\ell_j \ne \ell_i$.  Let the first position in which corresponding paths (strings) $s_{j}$ and $s_{i}$ differ be at the $k$-th index. Such a position exists as otherwise one string is a prefix of the other, which is impossible as these binary strings `encode' the paths from root to leaves in the tree.
	
	\begin{description}
		\item{\textsf{Case(1) - $s_{j}[k]=0$ and $s_{i}[k]=1$}}: Since the strings are same till the $(k-1)^{th}$ position, so are the corresponding paths in the decision tree. Let the least-common-ancestor node of $\ell_i$ and $\ell_j$ be labeled by a (monotone) query $f_1$ (w.l.o.g). Notice that since $s_{i}[k]=1$ it makes $f_1$ a passed query along the path to $\ell_i$ and so $f_1\in t_{i}$ (i.e., $t_i$ is an AND of $f_1$ and other functions, by the definition of $t_i$). Since we have that $t_i(x)=1$, it implies $f_1(x)=1$ which means $s_{j}[k]=1$, a contradiction.
		\item{\textsf{Case(2) - $s_{j}[k]=1$ and $s_{i}[k]=0$}}: This means $s_{j}>s_{i}$. Note that since the computation in the decision tree reaches $\ell_j$ on $x$ and $t_{j}$ is a conjunction of passed queries in the path to $\ell_j$, we must have $t_{j}(x)=1$. And since $s_{j}>s_{i}$, it will happen that the query $t_{j}$ appears before $t_{i}$ in $L$. Hence, the node $(t_i, label(\ell_i))$ would not be activated on $x$, contradicting the definition of $i$ itself.
	\end{description}
	Hence, the proof of the lemma follows.

\end{proof}

\noindent ($\mathsf{\le}$): To now show that $\DTm(f)$ is at most $\lceil\log\DLm(f)\rceil$, we construct a monotone decision tree of height $\lceil\log k\rceil$ given an equivalent monotone decision list of length $k$. This is proved in the following lemma.
\begin{lemma}
	Let $L=(f_1,c_1)(f_2,c_2)\dots (f_{k-1},c_{k-1})(\textbf{1},c_k)$ be a monotone decision list for the Boolean function $f$ on $n$ variables. Then there is a monotone decision tree $T$ of height $\lceil\log k\rceil$ computing $f$.
\end{lemma}
\begin{proof}
Without loss of generality, we assume that the decision list $L$ is in the normal form (Sub-section~\ref{subsec:normal-forms}). On an input $x \in \zon$, there exists exactly one node where the corresponding query and all the queries to its right pass, and all the queries to its left fail. A natural approach to search for this switch point (the activated node) algorithmically by querying the corresponding functions, is to do a binary search and return the $c_i$ corresponding to the index resulting from the search. We shall use this idea to construct a decision tree $T$ for $f$. 

At the root of $T$, we query $f_{\lceil(1+k)/2\rceil}$. If it is zero (go left in the tree), it means the activated node is to the right of it, otherwise (go right in the tree) to its left (or itself). 
This way we can give a recursive construction for $T$ (Use the right half of $L$ on the left branch and the left half of $L$ on the right branch). To obtain the labels for the leaves of $T$, we write down $c_1,c_2\dots c_k$ from the right to left.

\begin{figure}
\begin{center}	
\begin{forest}
	BDT
	[$f_4$
	[$f_6$
	[$f_7$
	[$c_8$]
	[$c_7$]
	]
	[$f_5$
	[$c_6$]
	[$c_5$]
	]
	]
	[$f_2$
	[$f_3$
	[$c_4$]
	[$c_3$]
	]
	[$f_1$
	[$c_2$]
	[$c_1$]
	]
	]
	]
\end{forest}
\end{center}
\caption{MDT corresponding to the MDL $(f_1,c_1)(f_2,c_2)\dots(f_7,c_7)(1,c_8)$}
\label{fig:1}
\end{figure}
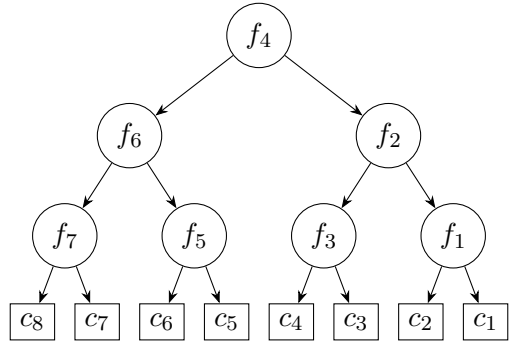

\begin{figure}
	\begin{center}	
		\begin{forest}
			BDT
			[$f_1$
			[$f_2$
			[$f_4$
			[$c_1$]
			[$c_2$]
			]
			[$f_5$
			[$c_3$]
			[$c_4$]
			]
			]
			[$f_3$
			[$f_6$
			[$c_5$]
			[$c_6$]
			]
			[$f_7$
			[$c_7$]
			[$c_8$]
			]
			]
			]
		\end{forest}
	\end{center}
	\caption{MDL corresponding to the above MDT is $(f_1.f_3.f_7, c_8)(f_1.f_3,c_7)(f_1.f_6,c_6)(f_1,c_5)\dots(1,c_1)$}
	\label{fig:2}
\end{figure}
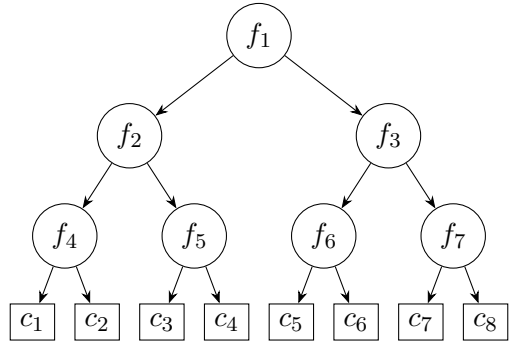

\vspace{2mm}	

As an example, if $L=(f_1,c_1)(f_2,c_2)(f_3,c_3)(f_4,c_4)(f_5,c_5)(f_6,c_6)(f_7,c_7)(1, c_8)$ , the constructed tree $T$ would be as shown in Figure~\ref{fig:1}. It can be seen that the height of $T$ would be $\lceil\log k\rceil$, since at each query the nodes of the residual decision list are nearly distributed equally onto left and right sides. Finally, we note that the queries in $T$ are identical to the queries in $L$, and hence are monotone. 
\end{proof}
%	Therefore, $DTm(f)\le \Theta(\log s)$.
This completes the proof of the relation between monotone decision list complexity and monotone decision tree complexity.
\end{proof}

\subsection{Characterizing Adaptive Decision Tree Height} 
We will now prove the main theorem of this section which characterizes $\DTm(f)$ (and en route $\DLm(f)$ too) in terms of $\alt(f)$.
\begin{theorem}
\label{thm:mdt-alt}
For any Boolean function $f$,
%$alt(f)\leq \DLm(f)\leq alt(f)+1$ and 
$\DTm(f) = \lceil\log (\alt(f)+1)\rceil$.
\end{theorem}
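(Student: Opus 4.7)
My plan is to prove the intermediate identity $\DLm(f) = \alt(f)+1$ and then apply Theorem~\ref{thm:mdt-mdl} (which states $\DTm(f) = \lceil \log \DLm(f) \rceil$) to obtain $\DTm(f) = \lceil \log(\alt(f)+1) \rceil$. The two directions of the intermediate identity are handled separately, and both are essentially immediate from tools already developed in the paper.

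For the upper bound $\DLm(f) \le \alt(f)+1$, I would first handle the case $f(0^n) = 0$. The Monotone Decomposition Lemma (Lemma~\ref{lem:alternation-decomp}) supplies monotone functions $f_1 \Rightarrow f_2 \Rightarrow \cdots \Rightarrow f_k$ with $k = \alt(f)$ such that $f = f_1 \oplus \cdots \oplus f_k$. Proposition~\ref{prop:decomp-dl} then translates this decomposition into an MDL of the form $(f_1,c_1)\cdots(f_k,c_k)(\mathbf{1},0)$ with alternating constants, of size exactly $k+1 = \alt(f)+1$. For the case $f(0^n) = 1$, I would apply this construction to $\neg f$: since $\alt(\neg f) = \alt(f)$ and $(\neg f)(0^n) = 0$, one obtains an MDL for $\neg f$ of size $\alt(f)+1$, and flipping every output constant $c_i$ yields an MDL for $f$ of the same size.

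For the lower bound $\DLm(f) \ge \alt(f)+1$, I would take any MDL $L = (g_1,c_1)\cdots(g_k,c_k)$ computing $f$ (assuming $g_k = \mathbf{1}$ WLOG so that every input activates some node). Define $\alpha(x) := \min\{i : g_i(x) = 1\}$, so that $L$ outputs $c_{\alpha(x)}$ on $x$. Monotonicity of each $g_i$ gives that if $x \prec y$, then every query satisfied at $x$ is also satisfied at $y$, hence $\alpha(y) \le \alpha(x)$. Consequently, along any chain $x^{(1)} \prec \cdots \prec x^{(\ell)}$, the values $\alpha(x^{(j)})$ form a non-increasing sequence in $[k]$, taking at most $k$ distinct values and so exhibiting at most $k-1$ changes. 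Since $f(x^{(j)}) = c_{\alpha(x^{(j)})}$, the induced sequence of function values along the chain has at most $k-1$ alternations. Taking the supremum over chains gives $\alt(f) \le k-1$, i.e., $k \ge \alt(f)+1$.

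Combining the two bounds yields $\DLm(f) = \alt(f)+1$, whence Theorem~\ref{thm:mdt-mdl} gives $\DTm(f) = \lceil \log(\alt(f)+1) \rceil$. The only mild wrinkle to verify is the boundary-case bookkeeping in the upper bound: when $f(0^n)=1$, the decomposition has $\alt(f)+1$ components rather than $\alt(f)$, which would naively give an MDL of size $\alt(f)+2$; passing to $\neg f$ resolves this cleanly so that Proposition~\ref{prop:decomp-dl} produces exactly $\alt(f)+1$ nodes in both cases. I do not foresee a genuine obstacle — the hardest conceptual work (the monotone decomposition and its correspondence with MDLs) has already been done in Lemma~\ref{lem:alternation-decomp} and Proposition~\ref{prop:decomp-dl}.
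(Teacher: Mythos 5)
Your proposal is correct and follows essentially the same route as the paper: both reduce to showing $\DLm(f)=\alt(f)+1$ via Theorem~\ref{thm:mdt-mdl}, obtain the upper bound from the Monotone Decomposition Lemma (Lemma~\ref{lem:alternation-decomp}) turned into an MDL, and prove the lower bound by observing that the index of the activated node is non-increasing along any chain, so an MDL of length $k$ allows at most $k-1$ alternations. Your only deviations are cosmetic: you route the upper-bound conversion through Proposition~\ref{prop:decomp-dl} and handle the $f(0^n)=1$ case by negating $f$ and flipping the constants, whereas the paper writes out the MDLs (with the extra $\oplus 1$) directly — both give exactly $\alt(f)+1$ nodes.
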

\begin{proof}
%We first establish a relation between $\DLm(f)$ and $\DTm(f)$. %More specifically, we show that:
%\begin{lemma}
%\label{lem:dtdl}
%$\DTm(f) = \lceil\log\DLm(f)\rceil$. \footnote{Using the same constructions, we also observe that $\MDT{\calC}=\MDL{\calC}$ for any circuit complexity class $\calC$ with appropriate closure properties.}
%\end{lemma} 
%To go from a decision list to a decision tree - first we convert the decision list into the normal form pointed out in Sub-section~\ref{subsec:normal-forms}, hence we ensure that $\exists i$ such that for every $x \in \zo^n$, $f_j(x) = 1$ if $j \ge i$ and $0$ otherwise. The monotone decision tree then is designed to find this $i$ by a binary search by querying the monotone functions. 
%The other direction $ \DLm(f) \le 2^{\DTm(f)}$ uses an idea due to \cite{Blu92} which was introduced for converting general decision trees to term decision lists. Observe that if all the queries of the decision tree pass on an input, the output will be the label of the rightmost leaf, say $c$. Thus, we would have $(A,c)$, where $A$ is the $\land$ of all queries that lead to the rightmost leaf. Extending this, we construct a monotone decision list with number of nodes same as the no. of leaves in the decision tree.

It suffices to show that $\DLm(f) = \alt(f)+1$ because of Theorem~\ref{thm:mdt-mdl}.

\begin{description}
\item{\sf $\DLm(f) \le \alt(f)+1$:}
First, suppose $f(0^n)=0$. Then by Lemma~\ref{lem:alternation-decomp} there are $k=\alt(f)$  many monotone functions such that $f=f_1 \oplus f_2 \oplus \dots \oplus f_k$, and $\forall i, f_i \Rightarrow f_{i+1}$. It can be shown easily that the monotone decision list $(f_1,0)(f_2,1)(f_3,0)(f_4,1)\dots (f_k,1)(\textbf{1},0)$ or $(f_1,1)(f_2,0)(f_3,1)(f_4,0)\dots (f_k,1)(\textbf{1},0)$ computes $f$, when $k$ is even or odd respectively. On the other hand, if $f(0^n)=1$, we have $f=f_1 \oplus f_2 \oplus \dots \oplus f_k \oplus 1$, which gives the monotone decision lists $(f_1,1)(f_2,0)\dots (\textbf{1},1)$ or $(f_1,0)(f_2,1)\dots (\textbf{1},1)$ computing $f$ depending on whether $k$ is even or odd respectively.
%%
%%Thus, $\DL_m(f) \le k+1=\alt(f)+1$ in any case. 
%%By Lemma~\ref{lem:alt-decomp}, we know that for any Boolean function there are $alt(f)$ (call $k$) many monotone functions such that $f=f_1 \oplus f_2 \oplus \dots \oplus f_k$, and $\forall i, f_i \Rightarrow f_{i+1}$. It can be shown easily that the monotone decision list $L_1=(f_1,0)(f_2,1)(f_3,0)(f_4,1)\dots (\textbf{1},0)$ or $L_2=(f_1,1)(f_2,0)(f_3,1)(f_4,0)\dots (\textbf{1},0)$ computes $f$, when $k$ is even or odd respectively. Thus, $\DL_m(f) \le k+1=\alt(f)+1$.
\item{\sf $\DLm(f) \ge \alt(f)+1$:} We claim that 
%\begin{lemma}
if a Boolean function $f$ on $n$ variables can be computed by a monotone decision list $L=(f_1,c_1)(f_2,c_2)\dots (f_\ell,c_\ell)$ of length $\ell$, we have $\alt(f) \le \ell - 1$.
%\end{lemma}
%\begin{proof}
To show this, it suffices to argue that for any chain $ x^{(1)} \prec x^{(2)} \prec \ldots \prec x^{(s)} $ in the Boolean hypercube, where $1\le s\le n+1$; the number of alternations of the function $f$ along the chain is at most $\ell - 1$. 
%Let $x\prec y$ be two consecutive input strings on the chain. Let $(f_t,c_t)$ be the node activated by $x$ when inputted to $L$. So, $f_t(x)=1$, which means $f_t(y)=1$ too. So, the index of the node in $L$ that the input $y$ activates would precede $t$.

%We say that the input $x$ activates index $i$ if $i$ is the least index in $\{1,2,\dots l\}$ such that $f_i(x) = 1$. Let $i$ be the index activated by $x$. Since $x \prec y$, and $f_i(x) = 1$, this implies that $f_i(y)=1$ as well. Thus, the node that $y$ activates is $\le i$.

%Consider any chain $(0^n = x^{(0)},x^{(1)},\ldots, x^{(n-1)},x^{(n)}=1^n)$ such that for all $i \in [0,n]$, $x^{(i)} \in \{0,1\}^n$ and $x^{(i-1)} \prec x^{(i)}$. 
Consider the sequence $S$ of length $s$ where for $1\le i\le s$, the integer $S[i]$ is the index of the node activated on inputting $x^{(i)}$ to $L$. Hence, $1\leq S[i] \le \ell$ for every $i$. By definition of the activated node, observe that for any $1\le i<s$, $f_{S[i]}(x^{(i)})=1$, which implies $f_{S[i]}(x^{(i+1)})=1$ too, since $x^{(i)} \prec x^{(i+1)}$. This implies that the node that $x^{(i+1)}$ activates cannot be after $f_{S[i]}$. That is, $S[i+1]\le S[i]$ for all $1\le i<s$. If two consecutive elements in chain activate the same node, $L$ outputs the same value on these assignments and hence there is no alternation at that point of the chain. Thus, the number of alternations of the function $f$ along this chain is upper bounded by the number pairs $(S[i],S[i+1])$ such that $S[i] \ne S[i+1]$. Since $1\leq S[i]\leq \ell$ and $S[i]\geq S[i+1]$ for all $i$, we get $\mathsf{alt}(f) \le \ell-1$.
%the number of alternations of $f$ is at most $\ell$; i.e, 

%\end{proof}
\end{description}
%\vspace{2mm}
%We therefore have $alt(f)\leq DL_m(f)\leq alt(f)+1$ and $\lceil\log alt(f)\rceil \leq MDT(f) \leq\lceil\log alt(f)+1\rceil$.
\end{proof}
%We first show that $D_m(f) \le \log(alt(f))$. To do this, let $f$ be a Boolean function with alternation $k$. By the characterization, we know that there exists monotone functions $f_1, f_2, \ldots f_k$ each from $n$ bits to $1$ bit, such that $f=f_1\oplus f_2 \oplus \ldots \oplus f_k$ with the additional property that $\forall i, f_i(x) \Rightarrow f_{i+1}(x)$. To get the monotone decision tree of height $h$, we simply write down the tree for the parity function on $k$ variables $y_1 \oplus y_2 \oplus \ldots y_k$ of size $k$ and height $\bigO(\log k)$ and then replace each query $y_i$ with the monotone function $f_i$. This monotone decision tree computes $f$ by definition.

%Now we prove $D_m(f) \ge \log(alt(f))$. Suppose that we are given a monotone decision tree of height $k$. We need to show that the function has alternation at most $2^k$. We first convert the decision tree of height $k$ to a decision list of length $2^k$ and then show that any function computed by a decision list of length $\ell$ can have alternation at most $\ell$. 

%For the first step, we convert it to a monotone decision list with length bounded by exponential in the height of the decision tree. This is a known construction.

\subsection{Constructing Adaptive MDTs from Negation Limited Circuits}

The above theorem provides a characterization for decision tree height in terms of alternation $\alt(f)$ of the Boolean function. A classical result by Markov\cite{Markov58}, implies that any Boolean function can be computed by Boolean circuits that use at most $\lceil \log(\alt(f)+1)\rceil$ many negation gates. Since the number of negation gates in the circuit can be logarithmically smaller, the following result is interesting.
%we show the following \textit{almost} non-adaptive decision tree of height $k+1$ that can be constructed from any circuit $C$ having $k$ negations.

%However, when the size of the circuit is restricted to be polynomial, Fischer\cite{} has shown that any polynomial size circuit can be transformed into a polynomial size circuit with at most $\bigO(\log n)$ negations. But Fischer's construction requires at least $\bigO(\log n)$ depth. Hence for constant depth polynomial size circuit classes, the gap between $\alt(f)$ and the optimum number of negations can be much larger. In particular, Santha and Wilson\cite{} showed that there is a function in $\AC^0$ which requires $\Omega\left(\frac{n}{\log n}\right)$ negations for any constant depth polynomial size circuit to compute it. 
%In light of this disparity between the number of negations and $\alt(f)$, we show the following construction:

\begin{theorem}
\label{prop:anadt-neg}
Let $f$ be a Boolean function computed by a circuit $C$ using $k$ negations. Then there is a monotone decision tree of height $k+1$ computing $f$.
\end{theorem}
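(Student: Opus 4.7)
My plan is to induct on the number of negations $k$ in the circuit $C$ computing $f$. The key observation driving the construction is that any negation gate appearing earliest in a topological ordering of $C$ has all of its input wires computed by a negation-free subcircuit; consequently the input to such a gate is a monotone function of the input bits, which is a legal query for a monotone decision tree. Querying this input determines the output of that negation, so we can hard-wire it to a constant and obtain a residual circuit with only $k-1$ negations.

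For the base case $k = 0$, the circuit $C$ is itself monotone, so $f$ is monotone; a single-node MDT that queries $f$ (realized by $C$) and branches to the leaves $0$ and $1$ has height $1 = k+1$. For the inductive step with $k \ge 1$, fix a negation gate $g$ that comes first in some topological ordering of the gates of $C$, and let $h$ be the monotone function computed by the subcircuit feeding $g$. Form circuits $C_0$ and $C_1$ by replacing the output of $g$ (which equals $\neg h$) with the constants $1$ and $0$ respectively; each has exactly $k-1$ negations and computes some Boolean function $f_0$, $f_1$. By the inductive hypothesis there exist monotone decision trees $T_0, T_1$ of height at most $k$ computing $f_0, f_1$. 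The tree $T$ for $f$ then queries $h$ at its root, uses $T_0$ on the $0$-branch and $T_1$ on the $1$-branch, so its height is $1 + k = k+1$.

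For correctness, observe that on any input $x$ with $h(x) = 0$ we have $\neg h(x) = 1$, so substituting $1$ for the output of $g$ does not change the circuit's value at $x$; hence $T(x) = T_0(x) = f_0(x) = f(x)$. The symmetric argument handles $h(x) = 1$. I do not foresee any real obstacle: the only point that needs care is verifying that the ``earliest'' negation has a monotone input, which is immediate from the topological ordering, and that the substituted subcircuits $C_0, C_1$ compute the correct values on the relevant branch, which follows by direct inspection.
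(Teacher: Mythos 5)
Your proof is correct and follows essentially the same construction as the paper: query the monotone function feeding the earliest (bottom-most) negation gate, replace that gate by the appropriate constant on each branch, and recurse until the negation-free circuit can be queried directly, giving height $k+1$. Your inductive phrasing versus the paper's iterative description is only a cosmetic difference.
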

\begin{proof}
Call the bottom-most negation gate in $C$ as $g$. The input to $g$ would be computed by a monotone circuit (call it $C_+$). The first query in our MDT shall be the function computed by $C_+$. Let $C_0$ and $C_1$ be the circuits obtained upon replacing\footnote{using 0 and 1 (respectively) whenever the output of $g$ is required} $g$ by the constants 0 and 1 respectively. Note that these circuits have $k-1$ negation gates. On the $C_+(x)=0$ branch of the decision tree, we will use $C_1$ and on the other side we use $C_0$ instead of $C$ to obtain the queries at the second level of the decision tree. We keep applying the same principle as we did to $C$ until there are no negations left, when we can directly query the entire function to obtain the return value. In any branch of the tree, at each query, the height of the tree increases by 1 while the number of negations in the residual circuit decreases by 1. Therefore, the height of the constructed monotone decision tree is $k+1$.
\end{proof}

%The above theorem, in particular implies that, even if we limit the adaptive queries to be allowed in the bottom layer of a decision tree, it is exponentially more powerful than the non-adaptive decision trees. To see this, consider a function with alternation $n$, by Theorem~\ref{thm:nadt-alt} any non-adaptive decision tree requires height $\Omega(n)$ but by Theorem~\ref{thm:anadt-neg} implies that there is an almost non-adaptive decision tree of height $\bigO(\log n)$ for computing $f$.\\[-2mm]
%

%\section{Non-Adaptive Monotone Decision Trees}

\section{Deterministic Non-Adaptive Monotone Decision Trees}
\label{sec:d-namdt}

We establish a relation between the non-adaptive monotone decision tree and alternation:

\begin{theorem}
\label{thm:nadt-alt}
For any Boolean function, $\alt(f)=k$ if and only if $f$ can be computed by a non-adaptive monotone decision tree of height $k$.
\end{theorem}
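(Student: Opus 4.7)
The plan is to unfold the meaning of a non-adaptive monotone decision tree (NAMDT) of height $k$. Since the queries on a given level do not depend on prior answers, such a tree is equivalent to fixing $k$ monotone functions $g_1,\dots,g_k$ on $\{0,1\}^n$, applying them to the input, and then computing an arbitrary Boolean post-processing $h:\{0,1\}^k\to\{0,1\}$ on the resulting vector. With this reformulation, the theorem reduces to showing: $\alt(f)\le k$ iff there exist $k$ monotone $g_1,\dots,g_k$ and some $h$ such that $f(x)=h(g_1(x),\dots,g_k(x))$.

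For the ($\alt(f)=k \Rightarrow$ height $k$ NAMDT) direction, I would invoke the Monotone Decomposition Lemma (Lemma~\ref{lem:alternation-decomp}). When $f(0^n)=0$, it directly supplies monotone $f_1,\dots,f_k$ with $f=f_1\oplus f_2\oplus\cdots\oplus f_k$, so querying these $k$ functions non-adaptively and XORing the answers yields the tree. When $f(0^n)=1$, one cannot apply the lemma directly (it would give length $k+1$), so I would instead apply it to $\neg f$: since $\alt(\neg f)=\alt(f)=k$ and $\neg f(0^n)=0$, we get monotone $f_1,\dots,f_k$ with $\neg f=f_1\oplus\cdots\oplus f_k$, i.e.\ $f=1\oplus f_1\oplus\cdots\oplus f_k$. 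The post-processing $h$ simply XORs the $k$ answers (and flips the result). In both cases we use exactly $k$ monotone queries, so the NAMDT has height $k$.

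For the converse (NAMDT of height $k \Rightarrow \alt(f)\le k$), fix the $k$ monotone queries $g_1,\dots,g_k$ and the post-processor $h$. Take any chain $x^{(1)}\prec x^{(2)}\prec\cdots\prec x^{(s)}$ and look at the vector-valued sequence $v^{(i)}:=(g_1(x^{(i)}),\dots,g_k(x^{(i)}))\in\{0,1\}^k$. By monotonicity of each $g_j$, the coordinatewise sequence $v^{(1)}\preceq v^{(2)}\preceq\cdots\preceq v^{(s)}$ is non-decreasing, and each coordinate flips from $0$ to $1$ at most once along the chain. Hence the number of positions $i$ where $v^{(i)}\neq v^{(i+1)}$ is at most $k$. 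Since $f(x^{(i)})=h(v^{(i)})$, the value of $f$ can change at most at those positions, giving at most $k$ alternations on the chain. Taking the max over chains yields $\alt(f)\le k$.

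The main obstacle is the mild mismatch created by the $f(0^n)=1$ case of Lemma~\ref{lem:alternation-decomp}, which naively costs an extra query; the dualization trick above fixes it without increasing height. Everything else is essentially direct bookkeeping once NAMDTs are recast as ``apply $k$ monotone queries, then an arbitrary combining function.''
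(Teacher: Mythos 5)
Your proposal is correct and follows essentially the same route as the paper: the forward direction queries the monotone components from the decomposition lemma non-adaptively and lets the leaf labels (your post-processor $h$) compute the (possibly negated) parity, and the converse bounds alternation along a chain by the at most $k$ coordinatewise $0\to 1$ flips of the fixed monotone query vector. Your dualization trick for the $f(0^n)=1$ case is just a repackaging of the negated-parity form already provided by Lemma~\ref{lem:alt-decomp}, so there is no substantive difference.
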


\begin{proof}
	To outline the idea used here -- for the forward implication we use Lemma~\ref{lem:alternation-decomp} to design the decision tree. For the reverse implication, since the tree is non-adaptive, the query function at each level of the decision tree will be the same. Using this fact, we argue that any chain  must have alternation at most $k$ with respect to $f$. To be more detailed,
	
%\begin{proof}
($\Rightarrow$): Suppose $\alt(f)=k$. 
%Then by Lemma~\ref{lem:alt-decomp}, we have that there exists $k$ monotone functions $f_1, \dots, f_k$ each from
%	$\{0,1\}^n$ to $\{0,1\}$ such that  $f(x) = \oplus_{i=1}^k f_i$ if $f(0^n) = 0$ and $f(x) = \oplus_{i=1}^k f_i$ if $f(0^n) = 1$.
Applying Lemma~\ref{lem:alternation-decomp}, we describe the non-adaptive monotone decision tree of height $k$ is as follows. At level $i$ (root being level 1), all nodes will query the function value $f_i$. For labeling the leaves, simply label each leaf by the parity of the results of the queries along the path from root of the tree to that leaf. This way, it is non-adaptive by definition and in each path, the values of all $f_i(x)$'s are known and can compute the value of $f(x)$ as described above.

($\Leftarrow$):
Let $f$ be computed by decision tree $T$ of height $k$. Since $T$ is non-adaptive, the function queried at a certain height is independent of the earlier queries, let the function queried at height $i$ be $f_i$. As $T$ is non-adaptive it is also a complete binary tree. Represent each leaf of the binary tree by a string $s=s_1\dots s_k\in \{0, 1\}^k$. The string $s$ can be thought of as the results of the queries to $f_i$. That is, to reach a given leaf (represented by $s$) of $T$, the query functions $f_1, f_2, \dots f_k$ must evaluate to $s_1, s_2, \dots s_k$ respectively.

Consider any chain of inputs $\mathcal{X} = x^{(1)} \prec x^{(2)} \prec \dots x^{(\ell)}$ to $f$ and let the corresponding binary  respectively. We will argue that the alternation along this chain is bounded by $k$. Consider what happens when we move along the chain (say $x^{(j)}$ to $x^{(j+1)}$). Suppose that on inputs $x^{(j)}$ and $x^{(j+1)}$, $T$ evaluates to the leaves represented by the $k$ bit strings $s^{(j)}$ and $s^{(j+1)}$ respectively. Since $f_i$'s are monotone functions and $x^{(j)} \prec x^{(j+1)}$, we have $f_i(x^{(j)}) \leq f_i(x^{(j+1)})$ for all $i$. Thus, $s^{(j)} \prec s^{(j+1)}$ or $s^{(j)} = s^{(j+1)}$. As the value of $f$ is completely determined by the values of $f_i$'s, if $s^{(j)} = s^{(j+1)}$ then $f(x^{(j)})=f(x^{(j+1)})$. Hence, the number of alternations of $f$ along $\mathcal{X}$ is at most the number of strings $s^{(1)} \prec s^{(2)} \prec \dots$, which is at most the length of each string i.e., $k$.
\end{proof}

The above theorem along with Theorem~\ref{thm:mdt-alt} finishes the proof of Theorem~\ref{thm:intro-all-dtm} from the introduction.\\[-3mm]

\introthm{thm:intro-all-dtm}
{
For any Boolean function $f$, $\DTm(f) = \lceil \log(\mathsf{alt}(f) + 1) \rceil$, and $\DTmna(f) = \mathsf{alt}(f)$.
}
\vspace{-6mm}
\subsection{Monotone Certificate Complexity}		
We now discuss a characterization of non-adaptive monotone decision tree complexity through a generalization of certificate complexity of the function.
%The generalization replaces variables in the original definition with monotone functions.
\begin{definition}[{\bf Monotone certificate complexity}]
For an input $x\in \{0,1\}^n$ of a Boolean function $f$, we call a set $S_x=\{f_1,f_2\dots f_k\}$ of monotone Boolean functions on $n$ variables as a {\em monotone certificate} (set) if for any input $y\in \{0,1\}^n$, we have that  $[\forall_{i=1}^k~ f_i(y)=f_i(x)]\Rightarrow [f(y) = f(x)]$. The monotone certificate complexity of $x$, denoted $\Cm(f,x)$ is defined as the minimum size $|S_x|$ of a monotone certificate $S_x$ of $x$. The monotone certificate complexity of the function $f$ itself is defined as $\Cm(f):=max_x \{\Cm(f,x)\}$.
\end{definition}

Interestingly, there is a constant upper bound of the size monotone certificate set for any function $f$. We show that, $\Cm(f)$ is at most $2$.

\begin{proposition}
\label{prop:ub-cmf}
$\Cm(f) \le 2$.
\end{proposition}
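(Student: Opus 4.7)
The plan is to exhibit, for every input $x \in \{0,1\}^n$, an explicit monotone certificate of size at most $2$ that actually pins down $x$ uniquely. Once the certificate forces $y = x$, the implication $f(y) = f(x)$ is trivial, so we do not even need to use any structure of $f$.

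The two functions I would use are the canonical "upper" and "lower" monotone indicators of $x$:
\[
U_x(y) \;=\; \bigwedge_{i\,:\,x_i = 1} y_i, \qquad D_x(y) \;=\; \bigvee_{i\,:\,x_i = 0} y_i,
\]
with the convention that an empty AND is $1$ and an empty OR is $0$ (so the degenerate cases $x = 1^n$ and $x = 0^n$ cause no trouble). Both are monotone by construction. The key observation is that $U_x(y) = 1$ holds iff $y \succeq x$ (the support of $y$ contains all the $1$-coordinates of $x$), and $D_x(y) = 0$ holds iff $y \preceq x$ (the support of $y$ is contained in the $1$-coordinates of $x$). In particular, $U_x(x) = 1$ and $D_x(x) = 0$.

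Hence if $y$ agrees with $x$ on both $U_x$ and $D_x$, then $y \succeq x$ and $y \preceq x$ simultaneously, which forces $y = x$, and therefore $f(y) = f(x)$ trivially. This shows $S_x = \{U_x, D_x\}$ is a monotone certificate for $x$, so $\Cm(f,x) \le 2$; taking the maximum over $x$ gives $\Cm(f) \le 2$.

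There is essentially no obstacle here --- the only point that needs care is the handling of the two extreme inputs $0^n$ and $1^n$, where one of $U_x$ or $D_x$ degenerates to a constant. In those cases, the non-degenerate function alone already determines $y$, so the certificate still works (and one could even claim $\Cm(f,x) \le 1$ for these boundary inputs). I would present the argument uniformly by adopting the empty-AND/empty-OR convention above, then state the uniqueness of $y$ in one line, and conclude.
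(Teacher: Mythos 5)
Your proof is correct and matches the paper's argument essentially verbatim: the paper also uses the certificate $\{\bigwedge_{i: x_i=1} y_i,\ \bigvee_{i: x_i=0} y_i\}$ and concludes $y = x$ from the partition of coordinates. Your explicit treatment of the degenerate cases $0^n$ and $1^n$ is a minor (and harmless) addition.
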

\begin{proof}
For any input $x$, let $I_x$ be the set of all variables set to 1 in $x$, and $J_x$ be the set of remaining indices. The set $S_x = \{f_1:=\bigwedge I_x, f_2:=\bigvee J_x \}$ is a valid monotone certificate for an input $x$. To show that this is indeed a certificate, we argue that $[(f_1(y)=f_1(x)) \wedge (f_2(y)=f_2(x))]\Rightarrow [f(y) = f(x)]$. First, note that $f_1(x)=1$ and $f_2(x)=0$ by definition of $f_i$'s. Now suppose the LHS of the above targeted implication is true. That is, $f_1(y)=1$ and $f_2(y)=0$. Hence, $\bigwedge I_x (y)=1$ and $\bigvee J_x (y)=0$, which means that $I_y \subseteq I_x$ and $J_y \subseteq J_x$. But note that $(I_y,J_y)$ (and $(I_x,J_x)$) is a partition of the total set of variables. Therefore, it must be the case that $I_x=I_y$ and $J_x=J_y$, meaning $x=y$. Then the RHS of the desired implication follows trivially.
\end{proof}

If the monotone certificate sets are constrained to be the same for all inputs, we call such a measure as the non-adaptive monotone certificate complexity of the function $f$, denoted by $\Cmna(f)$. By a simple argument, we have:
%\vspace{-2mm}
\begin{proposition}
\label{prop:cna-dtna}
$\Cmna(f)=\DTmna(f)=\alt(f)$.
\end{proposition}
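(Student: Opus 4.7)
The plan is to observe that Theorem~\ref{thm:nadt-alt} already gives the identity $\DTmna(f) = \alt(f)$, so the entire task reduces to proving the two inequalities $\Cmna(f) \le \DTmna(f)$ and $\DTmna(f) \le \Cmna(f)$. Both directions are constructive and essentially translate between a fixed set of $k$ monotone queries and a complete non-adaptive monotone decision tree of depth $k$ over those queries.

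For $\Cmna(f) \le \DTmna(f)$, I would take an optimal non-adaptive monotone decision tree $T$ for $f$ of height $k = \DTmna(f)$. By non-adaptivity the query at level $i$ is the same monotone function $f_i$ regardless of the path, so the tree is determined by the tuple $(f_1, \ldots, f_k)$. If two inputs $x, y$ satisfy $f_i(x) = f_i(y)$ for every $i$, they follow the same root-to-leaf path and so are assigned the same output $f(x) = f(y)$. Hence $\{f_1, \ldots, f_k\}$ is a non-adaptive monotone certificate (in the sense of the definition, the same set works for every $x$), giving $\Cmna(f) \le k$.

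For $\DTmna(f) \le \Cmna(f)$, take an optimal non-adaptive monotone certificate $\{f_1, \ldots, f_k\}$ with $k = \Cmna(f)$. Build the complete binary tree of depth $k$ in which every node at level $i$ queries $f_i$. For each leaf $\ell$, consider the bitstring $(b_1, \ldots, b_k) \in \{0,1\}^k$ describing the path from the root to $\ell$. If there exists an input $x$ with $f_i(x) = b_i$ for all $i$, label $\ell$ with $f(x)$; by the defining property of a non-adaptive monotone certificate this label is well-defined independent of the choice of such $x$. Leaves whose bitstring corresponds to no input may be labelled arbitrarily. By construction every input $x$ reaches the leaf labelled by $f(x)$, so the tree computes $f$ and has height $k$, giving $\DTmna(f) \le \Cmna(f)$.

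The argument has no real obstacle: the only point requiring care is the well-definedness of the leaf labels in the second direction, which is exactly what the certificate property buys us, and the observation that a non-adaptive tree is fully described by one query per level, which gives the certificate in the first direction. Combining the two inequalities with Theorem~\ref{thm:nadt-alt} yields $\Cmna(f) = \DTmna(f) = \alt(f)$.
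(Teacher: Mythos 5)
Your proof is correct and follows essentially the same route as the paper: extract the per-level queries of an optimal non-adaptive tree as a common certificate set for one inequality, and build a complete depth-$k$ tree over a certificate set with leaf labels made well-defined by the certificate property for the other, then invoke Theorem~\ref{thm:nadt-alt} for the equality with $\alt(f)$. No gaps.
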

\begin{proof}
\textbf{($\le $)} Let $h=\DTmna(f)$ denote the minimum height of a non-adaptive monotone decision tree (call it $T$) computing the Boolean function $f$ on $n$ variables. Let the query functions from root to any leaf be $f_1,f_2\dots f_h$ in order, where each $f_i$ is a monotone function on $n$ variables. We claim that $S:=\{f_1,f_2,\dots f_h\}$ is a monotone certificate for any input. Then immediately, $\Cmna(f)\le |S| = h =\DTmna(f) $. To observe that $S$ indeed is a certificate set, consider two inputs $x$ and $y$ for which all the functions in $S$ evaluate identically. Then in the tree $T$, both these inputs reach the same leaf, and hence return the same value. We obtain $[\forall_{i=1}^h~ f_i(y)=f_i(x)]\Rightarrow [f(y) = f(x)]$, and therefore $S$ is a monotone certificate for any input. 
		
\textbf{($\ge $)} Given that $S=\{f_1,f_2\dots f_k\}$ is a monotone certificate set, our goal is to design a non-adaptive monotone decision tree $T$ for $f$ of height $k$. The idea is to use the same functions that are in $S$ as queries in $T$. We fix an arbitrary ordering of $S$ and query the functions in the same order in all the paths. To get the labels of the leaves of $T$, we fix the label of a leaf $\ell$ as $f(x_\ell)$, where $x_\ell$ is any arbitrary input that reaches $\ell$ in its computation by $T$. If no such $x_\ell$ exists, then $\ell$ may be labeled arbitrarily. Let some input $x$ reach a leaf $\ell$ in $T$. As $x_\ell$ also reaches $\ell$, both $x$ and $x_\ell$ must evaluate identically on all the functions in $S$. By the certificate property of $S$, this implies that $f(x)=f(x_\ell)$, which is exactly what is returned by $T$ on input $x$.
\end{proof}  

\section{Non-deterministic Monotone Decision Trees}
\label{sec:ndt}
Inspired by the definitions of a non-deterministic decision tree and certificate complexity of a Boolean function, we study a non-deterministic variants of monotone decision trees as well. We define
%\footnote{We could instead define another natural variant, wherein at each internal node,  there is a monotone function query and multiple outgoing edges labeled with 0's or 1's. Both these models are shown to be equivalent upto a constant factor increase in height in Appendix~\ref{appsec:nmdt-nmdt}.} 
a non-deterministic monotone decision tree as a tree where there can be single or multiple outgoing edges at each internal node, and each edge in the tree is labeled by a monotone function or the negation of a monotone function, and the leaves are labeled 0 or 1. An input is said to be accepted if there is at least one path from the root to a leaf labeled $1$ along which all the functions appearing as labels on the edges evaluate to $1$. 

\subsection{Two equivalent definitions} We define two variants of a non-deterministic monotone decision tree. The first variant is the one that we will use in the later part of this section. 
\begin{itemize}
	\item {\sf Model1}: A tree where there can be any number of outgoing edges at each internal node, and each edge is labeled by a monotone or negation of a monotone function. The leaves are labeled with 0's and 1's. 
	
	An input $x$ to the tree is said to be accepted (same as outputting 1 on $x$) iff there exists a path from the root of the tree to any leaf labeled 1 along which all the edges are \emph{active}. Here we say an edge labeled $f_i$ is \emph{active} on $x$ when $f_i(x)=1$.
	\item {\sf Model2}: A tree where there can be any number of outgoing edges at each internal node, and each edge is labeled 0 or 1. The leaves are labeled with 0's and 1's as usual. In addition to these labels, the internal nodes are also labeled, with monotone functions. 
	
	An input $x$ to the tree is said to be accepted (same as outputting 1 on $x$) iff there exists a path from root of the tree to  any leaf labeled 1 along which all the edges are \emph{active}. Here, for an edge $e$ outgoing from a node labeled $f_i$, we say $e$ is \emph{active} on input $x$, if the label of $e$ is equal to the value $f_i(x)$.
\end{itemize}

We say a tree $T$ (could be of type {\sf Model1} or {\sf Model2}) is said to compute a Boolean function $f:\{0,1\}^n\to \{0,1\}$ iff for all $x\in \{0,1\}^n$, $T$ outputs $f(x)$ on $x$.

The above defined models can be shown to equivalent. That is, we show that if a Boolean function $f:\{0,1\}^n\to \{0,1\}$ can be computed by a tree of type {\sf Model2}, then there is also a tree of type {\sf Model1} of same height (upto a constant factor) computing $f$; and vice-versa. 

To prove the first part, let $T$ be a {\sf Model2} tree computing $f$. We simply re-label $T$ as follows, to obtain a {\sf Model1} tree still computing $f$. At each internal node labeled $f_i$; if an outgoing edge is labeled 0, replace the label with $\overline{f_i}$, if an outgoing edge is labeled 1, replace the label with $f_i$. Finally, remove all the labelings at the internal nodes. Retain the labels of the leaves. Thus, we get a tree $T'$. Note that an edge in $T$ is active if and only if the corresponding edge in $T'$ is active for the same input. Therefore, any path that is active in $T$ remains active in $T'$, meaning $T\equiv T' \equiv f$.

The other direction is proved as the following lemma.

\begin{lemma}
	\label{lem:model-12}
	If a Boolean function $f:\{0,1\}^n \to \{0,1\}$ is computed by a tree $T$  of type {\sf Model1}, there is also a tree $T'$ of type {\sf Model2} of height twice that of $T$, computing $f$.
\end{lemma}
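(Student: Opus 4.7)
The plan is to build $T'$ from $T$ by subdividing each edge of $T$ with a fresh internal node that will carry the monotone function previously attached to that edge. Concretely, I would label every original internal node $u$ of $T$ with the (trivially monotone) constant function $\textbf{1}$. For each edge $e=(u,v)$ of $T$ labeled either by a monotone $f$ or by its negation $\overline{f}$, insert a new internal node $w_e$ along $e$, label $w_e$ by $f$, label the new edge $u\to w_e$ with $1$, and label the new edge $w_e\to v$ with $1$ if the original label on $e$ was $f$ itself and with $0$ if it was $\overline{f}$. Leaves are left unchanged. The result is a legal {\sf Model2} tree, since every internal node carries a monotone function and every edge carries a $0/1$ label.

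For correctness I would check, edge by edge along any root-to-leaf path in $T'$, that this path is active on an input $x$ if and only if the corresponding root-to-leaf path in $T$ is active on $x$. Each edge of the form $u\to w_e$ is always active because its label $1$ matches the constant value $\textbf{1}(x)=1$ of $u$'s function. An edge of the form $w_e\to v$ is active precisely when $f(x)$ equals its $0/1$ label, which by construction is exactly the condition that the original {\sf Model1} edge-label (either $f$ or $\overline{f}$) evaluates to $1$ on $x$. Hence a root-to-leaf path in $T'$ is active exactly when the corresponding path in $T$ is active, and the leaf labels are identical. Therefore $T'$ accepts the same set of inputs as $T$ and so computes $f$.

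Finally, each original edge on a root-to-leaf path of $T$ is replaced by two consecutive edges of $T'$ separated by one inserted internal node, so the number of internal nodes on any root-to-leaf path exactly doubles, giving the claimed height bound of $2\cdot \mathsf{height}(T)$. The only mild conceptual point, and not really an obstacle, is to notice that using the constant $\textbf{1}$ as a monotone label at the original nodes is permissible and allows us to defer every non-trivial branching decision from $u$ to the newly inserted $w_e$ nodes; this is what lets the per-edge tests of {\sf Model1} be re-encoded cleanly as per-node tests of {\sf Model2}. Beyond this the construction is essentially a syntactic translation, and I do not anticipate any further difficulty.
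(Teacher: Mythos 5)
Your construction is correct and is essentially identical to the paper's own proof: both subdivide every edge of $T$ with a new node carrying the underlying monotone function, relabel the original nodes with the constant $\textbf{1}$, set the edge into the new node to $1$ and the edge out of it to $1$ or $0$ according to whether the original label was the function or its negation, and observe that active paths (and hence accepted inputs) are preserved while the height exactly doubles. No issues.
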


\begin{proof}
	To construct $T'$, we perform the following operation over all the edges of $T$ from the top to bottom. Let $e$ be an outgoing edge in $T$, from node $i$ labeled $f_i$ (could be a monotone or negation of a monotone function) to a node/leaf called $j$. A new node called $k$ is introduced in $T'$ between the nodes $i$ and $j$, and the edges $i\to k$ and $k\to j$ are added. The label at the node $i$ is changed to the constant function \textbf{1} from $f_i$, the $i\to k$ edge is labeled 1, and $k$ is labeled with the function $f_i$. If $f_i$ is monotone, $k\to j$ edge is labeled 1, Otherwise, $k\to j$ edge is labeled 0. By this labeling principle, notice that for any input $x$, the edge $i\to k$ is always active and the edge $k\to j$ is active if and only if $x$ activates the edge $e$ in $T$. Therefore, after the complete construction of $T'$, there is an active path to a leaf if and only if there is an active path to that same (corresponding) leaf in $T$. Since, the existence of active path to a leaf with label 1 is the criterion of acceptance in either tree, both trees output the same value for the same input, and hence $T'$ computes $f$. Finally, we notice that the height of $T'$ is twice the height of $T$, as each edge in $T$ produces two edges in $T'$ by our design.
\end{proof}

\subsection{Characterization using Alternation}

Analogous to the normal monotone decision trees, we prove a bound on the height and the size as the complexity measures of the non-deterministic monotone decision tree (height denoted by $\DT_m^n(f)$) in the following Theorem.

\begin{theorem}
\label{prop:nmdt}
For any Boolean function $f$, $\DT_m^n(f) \le 2$ and size of the optimal non-deterministic monotone decision tree is $\lceil{\frac{{\sf  alt}(f)}{2}}\rceil$.
\end{theorem}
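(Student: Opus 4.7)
The plan is to establish both claims by invoking the Monotone Decomposition Lemma (Lemma~\ref{lem:alternation-decomp}) together with the DNF reformulation from Proposition~\ref{prop:decomp-dl}, and to match the construction by a chain-based lower bound.

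For the constructive direction, I would start with the boundary decomposition $f = f_1 \oplus f_2 \oplus \cdots \oplus f_k$ with $f_1 \Rightarrow f_2 \Rightarrow \cdots \Rightarrow f_k$ guaranteed by Lemma~\ref{lem:alternation-decomp}, padding by constants if necessary so the length is even (as permitted in Proposition~\ref{prop:decomp-dl}). Proposition~\ref{prop:decomp-dl} then rewrites $f$ as the disjunction $\bigvee_{i=1}^{\lceil \alt(f)/2 \rceil}\; \big(\overline{f_{2i-1}} \wedge f_{2i}\big)$ of conjunctions of one negated-monotone and one monotone label. I would realize each clause as a length-$2$ accepting path in a {\sf Model1} non-deterministic MDT: the $i$-th outgoing edge from the root carries the monotone label $f_{2i}$ and leads to an intermediate node, from which a single edge labeled by the negation of the monotone function $f_{2i-1}$ reaches a $1$-leaf. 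Correctness is immediate from Proposition~\ref{prop:decomp-dl}: an input $x$ admits an accepting path iff some clause fires iff $f(x)=1$. This simultaneously yields the height bound $\DT_m^n(f) \le 2$ and a size upper bound of $\lceil \alt(f)/2 \rceil$ accepting paths (equivalently, $1$-leaves).

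For the matching lower bound, I would fix a chain $\mathcal{X} = \langle x^{(1)} \prec \cdots \prec x^{(m)}\rangle$ witnessing $\alt(f,\mathcal{X}) = \alt(f)$. For any accepting path $P$ in a non-deterministic MDT computing $f$, with edge labels $g_1, \ldots, g_h$ (each either monotone or the negation of a monotone function), the key claim is that the set $I_P := \{ i \in [m] : g_1(x^{(i)}) = \cdots = g_h(x^{(i)}) = 1 \}$ is a contiguous interval in $[m]$. This follows because a monotone label being $1$ picks out a suffix of $\mathcal{X}$ while a negated-monotone label being $1$ picks out a prefix, and the intersection of such half-chains is again an interval. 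Consequently, a single accepting path can cover $1$-inputs from at most one maximal $1$-block of $f$ along $\mathcal{X}$. Choosing $\mathcal{X}$ so that its $1$-valued entries form exactly $\lceil \alt(f)/2 \rceil$ maximal blocks, each of which must be touched by at least one accepting path, yields the claimed size lower bound.

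The principal subtlety to expect is handling parity and the boundary value $f(0^n)$ uniformly. Depending on whether $\alt(f)$ is even or odd and on $f(0^n)$, some clauses in the disjunction above degenerate into a single monotone (or single negated-monotone) label, corresponding to a length-$1$ accepting path rather than length-$2$; symmetrically, the chain for the lower bound must be selected carefully (in particular, not necessarily anchored at $0^n$) so that its maximal $1$-blocks number exactly $\lceil \alt(f)/2 \rceil$. Carrying out the case analysis across the four parity/boundary combinations to confirm that the upper and lower bounds meet is, I anticipate, the most delicate part of the argument.
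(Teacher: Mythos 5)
Your upper-bound construction is essentially the paper's: take the boundary decomposition from Lemma~\ref{lem:alternation-decomp}, rewrite $f$ via Proposition~\ref{prop:decomp-dl} as $\overline{f_1}f_2 \vee \overline{f_3}f_4 \vee \dots$, and realize each term as a length-two branch from the root ending in a $1$-leaf (the paper puts the negated label on the first edge and the monotone one on the second, which is immaterial), giving $\DT_m^n(f)\le 2$ and size at most the number of terms. Where you genuinely diverge is the lower bound: the paper's proof stops at the construction and gives no argument that this many $1$-leaves are necessary, whereas your chain argument --- along a fixed chain a monotone label selects a suffix and a negated-monotone label a prefix, so the set of chain points on which a given accepting path is active is an interval and hence lies inside a single maximal $1$-block --- correctly supplies the optimality direction and is a real addition. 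Two refinements: first, no careful choice of chain is needed for the inequality $\ge \lceil\alt(f)/2\rceil$, since any chain attaining $\alt(f)$ alternations already has at least $\lceil\alt(f)/2\rceil$ maximal $1$-blocks (exactly $(\alt(f)+1)/2$ when $\alt(f)$ is odd; either $\alt(f)/2$ or $\alt(f)/2+1$ when it is even, as the two end values then agree). Second, the exact matching you defer to a case analysis genuinely fails in one boundary case: when $f(0^n)=1$ and $\alt(f)$ is even, every maximum-alternation chain begins and ends at value $1$, so your own interval argument forces at least $\alt(f)/2+1$ accepting paths, which agrees with what the decomposition-based construction produces there (the decomposition has length $\alt(f)+1$ and must be padded) but exceeds the stated $\lceil\alt(f)/2\rceil$; carrying out your plan would thus surface a small boundary correction to the theorem's claimed value rather than confirm it, while the height bound and the $\lceil\alt(f)/2\rceil$ value in the case $f(0^n)=0$ go through exactly as you describe.
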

\begin{proof}
%\begin{aproof}{Proof of Proposition~\ref{prop:nmdt}}{appsec:nmdt}
We know that any Boolean function $f$ has a monotone decomposition of length at most $k:={\sf alt}(f)+1$ (Lemma~\ref{lem:alternation-decomp}), so let $f=f_1 \oplus f_2 \oplus \dots \oplus f_k$. Due to the implication property of $f_i$'s, we have the following equivalent evaluation for $f$, that is $f= \overline{f_1} f_2 \vee \overline{f_3} f_4 \vee \dots \overline{f_{k-1}} f_k$  \footnote{If $k$ is odd, we can just make $f_1$ as the constant function $\textbf{0}$.} (Proposition~\ref{prop:decomp-dl}). We now construct the non-deterministic monotone decision tree $T$ of height 2 size as much as the number of ``terms'' in the above representation of $f$, which is $s:=\lceil k/2 \rceil$. The root of $T$ would contain $s$ edges with labels $\overline{f_1},\overline{f_3},\dots ,\overline{f_{k-1}}$, each for one edge. Then for each of these children of root, we give one child with edge label $f_{i+1}$, corresponding to $\overline{f_i}$. Finally we attach a leaf labeled 1 to all the leaves of the tree. There is an active path from root to a leaf if and only if $f$ evaluates to 1 on that input, thereby showing that of $T$ computes $f$. 
\end{proof}

\section{Randomized Monotone Decision Trees}
\label{sec:rdt}
We also study randomized monotone decision trees. In this model, monotone query nodes in the decision tree, random bit choices are also allowed at the internal nodes of the tree and each of the random choice nodes also has two outgoing edges to children one with labeled $0$ and the other labeled $1$. We say the tree computes a Boolean function $f$ if for any input $x$, the probability (over the choice of the settings for the random bit choices in the tree) of the computation reaching a leaf with label $f(x)$ is at least %\footnote{can be any constant $\epsilon \in (\frac{1}{2},1)$ by incurring a constant factor blow up in the height.} 
$2/3$. By $\DTmr(f)$, we denote the minimum height of a RMDT computing a Boolean function $f$. The following theorem implies that randomization does not help when the monotone queries are unrestricted.

\begin{theorem}
\label{thm:rmdt-main}
For any Boolean function $f$, $\DTmr(f) = \DTm(f) = \lceil\log(\alt(f)+1) \rceil$.
\end{theorem}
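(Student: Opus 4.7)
The upper bound $\DTmr(f) \le \DTm(f)$ is immediate: any deterministic MDT is an RMDT that uses no coin flips, so $\DTmr(f) \le \DTm(f) = \lceil\log(\alt(f)+1)\rceil$ by Theorem~\ref{thm:mdt-alt}. The substance of the theorem is the matching lower bound.

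My plan for the lower bound is to set $h=\DTmr(f)$ and view an optimal RMDT as a distribution $\mathcal{D}$ over deterministic MDTs of depth at most $h$ (obtained by fixing the outcomes of the random-coin nodes), satisfying $\Pr_{T\sim\mathcal{D}}[T(x)=f(x)]\ge 2/3$ for every input $x$. By Lemma~\ref{lem:alternation-decomp}, fix a chain $\mathcal{X}: x^{(0)} \prec \cdots \prec x^{(k)}$ with $k=\alt(f)$ alternations and, by passing to a sub-chain, assume $f$ alternates at every consecutive pair along $\mathcal{X}$.

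The central structural step, paralleling the non-adaptive analysis in Section~\ref{sec:d-namdt}, is that along the totally ordered chain $\mathcal{X}$ every monotone query acts as a threshold, returning $0$ on a prefix and $1$ on the corresponding suffix. Hence any deterministic $T$ of depth at most $h$ partitions $\mathcal{X}$ into at most $2^h$ contiguous intervals, each receiving a fixed label; since $f$ alternates at every step, an interval of length $L$ forces at least $\lfloor L/2\rfloor$ errors, summing to at least $((k+1)-2^h)/2$ errors on $\mathcal{X}$. Combining with the bounded-error guarantee $\mathbb{E}_{T\sim\mathcal{D}}[\text{errors on }\mathcal{X}]\le (k+1)/3$ yields $2^h \ge (k+1)/3$, which already recovers the lower bound up to a small additive slack of $\log 3$ in $h$.

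The main obstacle I expect is sharpening this estimate to the exact $h\ge\lceil\log(\alt(f)+1)\rceil$. The right strengthening should be the following structural claim: any RMDT of height $h$ computing $f$ with bounded error must satisfy $\alt(f)\le 2^h-1$. A plausible route is to set $p(j):=\Pr_{T\sim\mathcal{D}}[T(x^{(j)})=1]$, observe that the total variation $\sum_j |p(j+1)-p(j)|$ is at most the expected alternations of $T$ along $\mathcal{X}$, which is at most $2^h-1$, and then replace the pointwise amplitude bound $|p(j+1)-p(j)|\ge 1/3$ at each $f$-alternation by a pigeonhole/rearrangement argument across the $k$ alternation locations that extracts a full unit of variation rather than a $1/3$ amplitude per alternation. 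Once this is established, $\alt(f)\le 2^h-1$ immediately gives $h\ge\lceil\log(\alt(f)+1)\rceil$, matching the deterministic bound and completing the proof.
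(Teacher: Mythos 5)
Your easy direction and the approximate lower bound (with the additive $\log 3$ slack) are fine, but the sharpening you defer to --- extracting ``a full unit of variation'' per alternation --- is not merely missing; it cannot be carried out in the framework you set up. Once you replace the height-$h$ RMDT by an arbitrary distribution $\mathcal{D}$ over deterministic height-$h$ MDTs, you have discarded the fact that the random-coin nodes themselves count toward the height, and in that relaxed (free-randomness) model the exact bound is false. Concretely, take $f(x)=1$ iff $x\notin\{0^n,1^n\}$, so $\alt(f)=2$ and $\lceil\log(\alt(f)+1)\rceil=2$: the uniform mixture of the three height-$1$ trees computing $[x\ne 0^n]$ (an OR query), $[x\ne 1^n]$ (an AND query with flipped leaf labels), and the constant $0$ is correct with probability exactly $2/3$ on \emph{every} input of the cube. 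The same obstruction appears already on your chain: for the target pattern $0,1,0$, the mixture of the interval labelings $011$, $110$, $000$ (each with at most one alternation, hence realizable at height $1$) has every marginal correct with probability $2/3$. So no argument that uses only the marginals $p(j)$ and the expected number of alternations of $T\sim\mathcal{D}$ along the chain can beat the factor-$3$ loss; the $1/3$-amplitude-per-alternation estimate is genuinely tight for distributions over trees.

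The paper's proof of $\DTm(f)\le\DTmr(f)$ accordingly takes a different route that keeps track of where the randomness sits inside the tree. Writing $c_i$ for the characteristic function of the $i$-th $1$-labeled leaf and $r_i$ for the number of random nodes on its root-to-leaf path, it shows $f(x)=1$ iff $\sum_i 2^{h-r_i}c_i(x)\ge 2^{h-1}$; each $c_i$ needs only one negation, all these negations are produced simultaneously by Fischer's inverter using $\lceil\log(k+1)\rceil\le h-1$ negation gates (the number $k$ of $1$-leaves being at most $2^{h-1}$ after possibly complementing $f$), and Theorem~\ref{prop:anadt-neg} converts the resulting negation-limited circuit into a deterministic MDT of height $h$. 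To salvage your approach you would have to charge the coin flips against the height --- e.g., weight each leaf by $2^{-r_i}$ as the paper does --- rather than treat the RMDT as an unstructured mixture of height-$h$ trees.
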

\begin{proof}
	Since any monotone decision tree is trivially also a RMDT, we directly have $\DTmr(f) \le \DTm(f)$. To prove the reverse inequality, we will construct a circuit for $f$ with at most $\DTmr(f)-1$ negation gates. Then by using Theorem~\ref{prop:anadt-neg}, we get a (deterministic) MDT of height $\DTmr(f)$ computing $f$.
	
	Let $T$ be a RMDT of minimum height $h=\DTmr(f)$, computing $f$. Let the number of leaves in $T$ labeled with 1 and 0 be $k$ and $k'$ correspondingly. We have $k+k'=\text{(total no. of leaves in $T$)}\le 2^h$. We will give the proof only when $k\le k'$. Otherwise, we can consider the RMDT obtained by flipping all the leaf labels of $T$, which would then compute $\overline{f}$. As $\DTm(f)=\DTm(\overline{f})$ and $\DTmr(f)=\DTmr(\overline{f})$\footnote{On flipping the leaf labels in a MDT or RMDT, it computes the complement function.}, the required result follows. So, we may assume $k\le k'$.
	%Therefore, notice that it suffices\footnote{Strictly speaking this doesn't suffice. We need ${\sf dec}(f)+{\sf inc}(f)\le k+k'-1 $, which can be obtained if we observe that the $c_i$ corresponding to the rightmost leaf the doesn't need a negation gate.} to show that the decrease of the function ${\sf dec}(f) \le k$, and increase ${\sf inc}(f) \le k'$ . Here, we will only argue the first part, as the second part follows using a similar argument. 
	
	We will first derive a characterization for the function $f$, which will be helpful in other proofs too.
	
	Consider a fixed input $x$ to $T$. By the definition of RMDT, $f(x)=1$ iff the probability of the computation of $T$ (on $x$) reaching a 1-labeled leaf is at least half \footnote{Although a stronger bound of 2/3 is implied, the bound of 1/2 will be simpler to work with.}. Let us now express this probability in terms of the structure of $T$. 
	
	Let $\ell_1,\ell_2,\dots \ell_k$ be all the 1-labeled leaves, $r_1,r_2,\dots r_k$ be the number of random nodes from root to the corresponding leaf; and let $c_i:=(\bigwedge f_{pi}) \wedge(\bigwedge \overline{f_{qi}})$ denote the \emph{characteristic function} corresponding to $\ell_i$, where the $f_{pi}$'s are the monotone queries to be passed and $f_{qi}$'s to be failed in the root to $\ell_i$ path. First we calculate the probability that a specific leaf $\ell_i$ is reached upon the computation. Once we get this value, since in any given circumstance, the computation reaches a unique leaf, the above events for various $i$'s are mutually exclusive, meaning the desired probability is simply the sum of probabilities that the computation reaches a particular $\ell_i$. 
	
	Now, coming to the probability of reaching a particular $\ell_i$, it is zero when $c_i(x)=0$. Indeed when $c_i(x)=0$, by its definition, we can observe that it means at least of the monotone functions that is supposed to pass has failed or at least one that is supposed to fail has passed, either of which is a contradiction. Now for the case $c_i(x)=1$, the probability solely depends on $r_i$: As all the intermediate monotone queries support the computation to reach $\ell_i$ (i.e, $c_i(x)=1$), for any of the random nodes in the path, there is exactly one result that will keep the computation in the right path to $\ell_i$. Hence the probability then is equal to $(1/2)^{r_i}$. The probabilities in both these cases can be compacted as $(1/2)^{r_i}c_i(x)$. We can see that it is 0 when $c_i(x)=0$, and $(1/2)^{r_i}$ when it is equal to 1. 
	
	Going back to the overall probability, it is equal to the sum $\sum_i (1/2)^{r_i}c_i(x)$. We thus have the characterization: 
	\begin{align}f(x)=1 \text{~~iff~~} \sum_{i=1}^k 2^{h-r_i}c_i(x)\ge 2^{h-1}.
		\label{eqn:1}
	\end{align}
	
	Notice that to compute any $c_i$, at most one negation is needed, since $c_i=(\bigwedge f_{pi}) \wedge (\neg(\bigvee f_{qi}))$.
	Now, to construct a circuit for $f$ with $h-1$ negations as stated in the beginning, we observe that: since the threshold function is monotone, if we can obtain all the $c_i$'s using a (multi-output) circuit using at most $h-1$ negations, we are done. We make use of Fischer's construction \cite{Fis75} for this, where a multi-output circuit using $\lceil\log(m+1) \rceil$ negations is designed to compute the inverting function $I(z_1,z_2,\dots ,z_m)=(\overline{z_1},\overline{z_2}\dots,\overline{z_m})$. Taking $m=k$ and $z_i=\bigvee f_{qi}$; the number of negations used in the circuit would be ${\sf neg}:=\lceil \log(m+1) \rceil=\lceil \log(k+1) \rceil$. 
	
	If $k<k'$, then ${\sf neg}=\lceil \log(k+1) \rceil \le \lceil \log(2^{h-1}-1+1) \rceil=h-1$. 
	
	Otherwise, $k=k'=2^{h-1}$. Suppose the right-most leaf in $T$ is labeled 1. Then note that there is no negation in $c_k$ and so, we do not have to negate $z_k$; making the number of inputs used in the Fischer's construction only $k-1=2^{h-1}-1$. Hence, ${\sf neg}=\lceil \log(2^{h-1}-1+1) \rceil=h-1$. Now, if the right-most leaf in $T$ is instead labeled 0, then we flip all the leaf labels in $T$ to obtain a RMDT for $\overline{f}$, which then falls into the above case (of the right-most label being 1).
	
	By Theorem~\ref{prop:anadt-neg}, we can now using this circuit, obtain an MDT of height $h-1+1=h=\DTmr(f)$ computing $f$. Hence, $\DTmr(f)=\DTm(f)$. 
\end{proof}

\noindent{ \bf A variant of Randomized Monotone Decision Tree Model:}
We also study a more powerful variant of the randomized model where each node is allowed to have a multi-set of $w$ monotone functions associated with it (which we call the query set) and on an input $x$ to the decision tree, at each node, one of the query functions is chosen uniformly at random from the corresponding query set. Again, we say that the tree computes a Boolean function $f$ if for any input $x$, the probability of the computation reaching a leaf with label $f(x)$ is at least $2/3$. We denote by $\DTmrbag{w}(f)$, the minimum height of such a randomized decision tree that computes $f$. It can be observed that any RMDT can be implemented in this model as well (query sets are of size $w$): a monotone query $f_i$ can be replaced with the query set $\{f_i,\dots,f_i (w \text{ times})\}$, and a node with a random bit choice by the query set $\{\textbf{0},\dots,\textbf{0}(w/2 \text{ times}),\textbf{1},\dots,\textbf{1}(w/2 \text{ times})\}$. This gives $\DTmrbag{w}(f) \le \DTmr(f)$ for any even $w \ge 2$. Even if $w$ is odd, the relation still holds up to a constant factor. For the other direction, we show the following:

\begin{theorem}
\label{thm:w-RMDT}
For any Boolean function $f$, $\DTmr(f)\le (1+k).\DTmrbag{w}(f)$, if $w=2^k$ for an integer $k$.
\end{theorem}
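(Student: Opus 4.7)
The plan is to prove Theorem~\ref{thm:w-RMDT} by a direct simulation: from a $\DTmrbag{w}$ tree of height $h := \DTmrbag{w}(f)$ computing $f$, I will build a standard RMDT of height $(1+k)h$ computing $f$ with the same success probability. Since $w = 2^k$, picking one of the $w$ query functions uniformly at random is equivalent to tossing $k$ fair coins and using the outcome as an index in $[w]$; the random-bit nodes of an RMDT are precisely designed for this.

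Concretely, I would process the given $\DTmrbag{w}$ tree $T$ top-down. Let $v$ be any internal node of $T$ with query multi-set $\{g_1, g_2, \ldots, g_w\}$ of monotone functions and with two children (reached on query-answer $0$ and $1$, respectively). In the constructed RMDT $T'$, I replace $v$ by a \emph{gadget} consisting of a complete binary tree of depth $k$ whose $2^k = w$ internal vertices at depths $0, 1, \ldots, k-1$ are random-bit nodes. The $w$ leaves of this random gadget are then each replaced by a monotone-query node labeled by a distinct $g_i$ from the multi-set; the two children of such a $g_i$-query node are then linked (recursively) to the RMDT gadgets corresponding to the two children of $v$ in $T$. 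The leaves of $T$ retain their $\{0,1\}$ labels in $T'$. Each original node of $T$ thus contributes $k$ random-bit internal nodes along any root-to-leaf path, plus a single monotone query, so paths in $T'$ have length at most $(1+k)h$, as required.

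For correctness, fix any input $x \in \{0,1\}^n$ and track the induced distribution over leaves. At node $v$ of $T$, the model selects index $i \in [w]$ uniformly at random, evaluates $g_i(x)$, and descends accordingly. In the gadget that replaces $v$ in $T'$, the $k$ random bits along the path within the gadget determine a uniformly random index $i \in [w]$; the monotone-query node reached then evaluates exactly $g_i(x)$ and descends to the corresponding next gadget. By induction on the depth of the subtree rooted at $v$, the probability that $T'$ reaches any given leaf $\ell$ from $v$'s gadget equals the probability that $T$ reaches the corresponding leaf from $v$. In particular, $\Pr[T' \text{ outputs } f(x)] = \Pr[T \text{ outputs } f(x)] \ge 2/3$, so $T'$ is a valid RMDT for $f$.

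I do not foresee a serious obstacle: the multi-set structure of the query sets is handled trivially since the gadget leaves are indexed by $[w]$ rather than by distinct functions, and the factor $(1+k)$ (as opposed to $k$) accounts for the one monotone-query node that each gadget contributes on top of its $k$ levels of random bits. The only mild care needed is that the theorem's hypothesis requires $w$ to be an exact power of two so that the random gadget is a balanced binary tree with exactly $w$ leaves; otherwise a small amount of rejection-style padding would be necessary, losing a constant factor.
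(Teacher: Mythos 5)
Your proposal is correct and is essentially the paper's own argument: replace each query-set node by a depth-$k$ complete binary tree of random-bit nodes whose $w=2^k$ leaves query the individual $q_i$'s, note that the probability of descending to either child is preserved, and conclude the height grows by a factor of $1+k$. (The paper performs the replacement bottom-up rather than top-down, but this is immaterial; also, your gadget has $2^k-1$ random-bit vertices, not $2^k$, a harmless slip.)
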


\begin{proof}
Let $T$ be the tree corresponding to the stronger model with query sets of size $w=2^k$ of minimum height computing $f$. To transform $T$ into an RMDT, we perform the following transformation at each internal node from bottom to top. 

Let $u$ be an internal node of $T$ with left branch going to the sub-tree $u_L$, right branch to $u_R$, and with possible query function at $u$ being from $\{q_1,q_2,\dots q_w\}$. Consider a complete binary tree $B$ with nodes in the first $k$ levels making random queries to 0 or 1, and the $2^k=w$ nodes in the bottom-most level making corresponding queries to the $q_i$'s. 

We replace the subtree of $T$ rooted at $u$ with the following tree: $B$ is introduced at the top (in place of $u$) and each of the bottom-most nodes of $B$ are branched into (copies of) the sub-trees $u_L$ on left and $u_R$ on right. Clearly, upon applying this procedure for all the internal nodes of $T$ bottom-up, we get an RMDT. 

To justify that this transformation still computes $f$, it is sufficient to argue that for an input $x$, due to the transformation corresponding to a node $u$, the probabilities that $u_L$ and $u_R$ are taken by the computation remain the same after the transformation. This is indeed true as these quantities are equal to $|\{q_i~|~q_i(x)=0\}|/w$ and $|\{q_i~|~q_i(x)=1\}|/w$ respectively, before and after the transformation. 

Since each original node is replaced with a complete binary tree $B$, the height increases by a factor equal to the height of $B$ (including bottom-most level), i.e, $1+k$. Hence, $\DTmr(f)\le (1+k).\DTmrbag{w}(f)$.
\end{proof}

\section{Monotone Decision Trees with Query Restrictions}
\label{sec:mdt-query-restrict}

In this section, we study the power of monotone decision trees under restricted monotone query functions. We define $\MDT{\calC}$ as the set of Boolean functions (rather Boolean function families) that admit decision trees of height $\bigO(\log n)$, where $n$ is the number of variables and all the query functions are from $\textit{mon-}\calC$. We could instead consider polynomial size decision trees, but both formulations turn out to be equivalent for interesting classes $\calC$. Similarly, we define $\MDL{\calC}$ as the set functions computed by MDLs of polynomial size where the queries are in $\textit{mon-}\calC$.

We first justify our reason to consider the height bound as $\bigO(\log n)$. We show that for any $h=\omega(\log n)$, there is a function $f$ on $n$ variables that has a decision tree of height $h$ with query functions computed by monotone polynomial sized circuits, but $f$ cannot be computed by a polynomial size circuit. In contrast, for any $h= o(\log n)$, if a function $f \in \calC$ on $n$ variables has alternation $\Omega(n)$, then $f$ does not have a  decision tree of height $h$, with query functions computable by monotone circuits in $\calC$. Hence, the question of whether $\MDT{\calC}$ is equal to $\calC$ is well-motivated only when the height is $\bigO(\log n)$. With this background, we will then study $\MDT{\calC}$ as defined above.
	
\subsection{Height Constraints on $\MDT{\calC}$}

\begin{proposition}
\label{prop:omega-logn}
For any $h=\omega(\log n)$, there is a Boolean function $f$ on $n$ variables that has a decision tree of height $h$ with query functions computed by monotone polynomial sized circuits, but $f$ cannot be computed by polynomial size circuits. 
%In particular, $\MDT^{\omega(\log n)}(\AC^0) \nsubseteq \AC^0$.
\end{proposition}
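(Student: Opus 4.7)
The plan is to exhibit a concrete family $\{f_n\}$ whose hardness comes from very few ``active'' variables, where the chosen number of active variables matches the allowed height $h(n)$. Since $h(n) = \omega(\log n)$, the quantity $2^{h(n)}/h(n)$ grows faster than any polynomial in $n$, so there is enough room to hide a circuit-hard function on just $h(n)$ bits, while still allowing a monotone decision tree that queries these bits one by one.

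More concretely, first I would invoke the Shannon counting lower bound: for every $k$ there exists a Boolean function $g_k : \zo^k \to \zo$ requiring Boolean circuits of size $\Omega(2^k/k)$. Setting $k = h(n)$, this gives, for each $n$, a function $g_{h(n)}$ requiring size $\Omega(2^{h(n)}/h(n)) = n^{\omega(1)}$. Define $f_n : \zon \to \zo$ by
\[
f_n(x_1,\ldots,x_n) \defn g_{h(n)}(x_1,\ldots,x_{h(n)}),
\]
so $f_n$ simply ignores the last $n - h(n)$ coordinates. Any polynomial-size circuit for $f_n$ would yield, by fixing the ignored inputs arbitrarily, a polynomial-size circuit for $g_{h(n)}$, contradicting the counting bound. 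Hence $f_n \notin \mathsf{P/poly}$.

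Next I would build the decision tree. Let $T_n$ be the complete binary tree of height $h(n)$ whose internal node at depth $i$ (for $1 \le i \le h(n)$) queries the monotone function $q_i(x) = x_i$, i.e.\ the single-variable projection. Each $q_i$ is monotone and is trivially computed by a constant-size monotone circuit (a single wire). The $2^{h(n)}$ leaves of $T_n$ correspond bijectively to assignments $a \in \zo^{h(n)}$ to $(x_1,\ldots,x_{h(n)})$, and I label each such leaf by $g_{h(n)}(a)$. By construction, on input $x$ the tree follows the path determined by $(x_1,\ldots,x_{h(n)})$ and outputs $g_{h(n)}(x_1,\ldots,x_{h(n)}) = f_n(x)$, so $T_n$ is a monotone decision tree of height $h(n)$ computing $f_n$ with monotone, polynomial-size query circuits.

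There is no real obstacle here; the two potential worries are (a) that the counting lower bound produce super-polynomial size in $n$ rather than in $h(n)$, and (b) that the query functions remain monotone with small monotone circuits. For (a), the assumption $h(n) = \omega(\log n)$ gives $2^{h(n)} = n^{\omega(1)}$ directly. For (b), single-variable queries are the simplest monotone functions. Combining both pieces yields the desired $f_n$, completing the proof.
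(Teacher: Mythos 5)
Your proposal is correct and follows essentially the same route as the paper: a Shannon-counting-hard function on $h(n)$ bits padded with dummy variables, computed by a height-$h(n)$ tree that queries the relevant variables directly (trivially monotone queries). The only cosmetic difference is that you assert $2^{h(n)}/h(n)$ is superpolynomial directly from $h=\omega(\log n)$, whereas the paper spells out the short logarithmic calculation; both are fine.
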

%\begin{aproof}{Proof of Proposition~\ref{prop:omega-logn}}{app:omegalogn}
\begin{proof}
We will actually show the existence of a function that has a `simple decision tree' (all queries are to input variables) of height $h=\omega(\log n)$ but not any polynomial sized circuit. By Shannon's counting argument (see \cite{Juk12}) we know that there exists a function on $h$ variables which requires a circuit of size $\Omega(2^h/h)$ to compute it. Our function $f$ would be the same function but with an additional $(n-h)$ many dummy variables. Since $f$ depends on only $h$ variables, we can obtain a (non-adaptive) decision tree where the queries are the bits that the function depends on. Its height clearly is $h$. By definition, any circuit computing $f$ has size $\Omega(2^h/h) \ge c_1.(2^h/h)$. \footnote{$c_1,c_2,c_3$ are some fixed positive constants and the inequalities are asymptotic i.e, for sufficiently large $n$.}
		
For the sake of contradiction, suppose that $f$ does have a polynomial sized circuit. It means $c_1.(2^h/h) \le n^{c_2}$. Taking logarithm, we get $c_3+h-\log h \le c_2 \log n$, so we have, $h \le 2h-2\log h\le 2.(c_3+h-\log h)\le 2c_2 \log n=\mathcal{O}(\log n)$, which contradicts $h=\omega(\log n)$. Therefore, $f$ cannot be computed by a polynomial size circuit family.
\end{proof}
		
\begin{proposition}
\label{prop:o-logn}
Let $\calC$ be any circuit complexity class which contains a function $f$ with $\alt(f) = \Omega(n)$. For any $h= o(\log n)$, there is a function $f \in \calC$ on $n$ variables that does not have a  decision tree of height $h$, with query functions computable by monotone circuits in $\calC$. 
%In particular, this implies that $\AC^0 \nsubseteq \MDT^{o(\log n)}(\AC^0)$.
\end{proposition}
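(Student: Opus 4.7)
The plan is to reduce this directly to the characterization proved in Theorem~\ref{thm:mdt-alt}, which says $\DTm(f) = \lceil \log(\alt(f)+1) \rceil$ for any Boolean function $f$, \emph{regardless} of what monotone queries are used. By hypothesis, $\calC$ contains some function $f$ on $n$ variables with $\alt(f) = \Omega(n)$. This $f$ will itself serve as the witness for the proposition.

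First, I would observe that since $\alt(f) = \Omega(n)$, Theorem~\ref{thm:mdt-alt} gives $\DTm(f) \ge \lceil \log(\alt(f)+1)\rceil = \Omega(\log n)$. Crucially, this lower bound holds for \emph{any} monotone decision tree computing $f$, with no restriction on the complexity of the queries, since the characterization is tight purely in terms of alternation. Therefore, if we insist the query functions come from the restricted class $\textit{mon-}\calC$, the required height can only be larger (or infeasible), not smaller.

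Now suppose for contradiction that $f$ can be computed by a monotone decision tree of height $h = o(\log n)$ whose queries are in $\textit{mon-}\calC$. Then in particular $\DTm(f) \le h = o(\log n)$, contradicting the $\Omega(\log n)$ lower bound from the previous paragraph. Hence no such decision tree of height $h$ exists, which is exactly the statement of the proposition.

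The main obstacle here is not really technical — Theorem~\ref{thm:mdt-alt} does all the heavy lifting. The only subtlety worth flagging explicitly is that the proposition restricts the queries to $\textit{mon-}\calC$ while Theorem~\ref{thm:mdt-alt} allows arbitrary monotone queries; one should make clear that a height lower bound in the more permissive model transfers automatically to the more restricted model. Everything else is a direct invocation of the characterization together with the hypothesis that $\calC$ contains a function of linear alternation (which holds for $\AC^0, \TC^0, \NC^1$, etc., as noted in the introduction).
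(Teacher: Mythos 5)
Your proof is correct and follows essentially the same route as the paper: both take the assumed function $f \in \calC$ with $\alt(f) = \Omega(n)$ and invoke Theorem~\ref{thm:mdt-alt} to derive a contradiction with height $o(\log n)$ (the paper phrases it as $\alt(f) \le 2^h = o(n)$, you phrase it as $\DTm(f) = \Omega(\log n)$, which is the same bound read contrapositively). Your explicit remark that the lower bound in the unrestricted monotone-query model transfers to the query-restricted model is a fine clarification of a step the paper leaves implicit.
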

%\begin{aproof}{Proof of Proposition~\ref{prop:o-logn}}{app:ologn}
\begin{proof}
Let $f\in \calC$ be a function such that $\alt(f)=\Omega(n)$.
For the sake of contradiction, assume that $f$ does have a monotone decision tree of height $h=o(\log n)$. We know that $\alt(f)\le 2^{h}$ where $h$ is the height of the decision tree (recall Theorem~\ref{thm:mdt-alt}). Then, $\alt(f) \le 2^{o(\log n)} = o(n)$, contradicting $\alt(f)=\Omega(n)$. 
%For obtaining the corollary about $\AC^0$, observe that the function $f(x)=x_1\overline{x_2} \vee x_3\overline{x_4} \dots \vee x_{n-1}\overline{x_n}$ is in $\AC^0$ and has alternation $n$.
\end{proof}

\vspace{1mm}

\subsection{Deterministic MDTs with Query Restrictions: $\MDT{\calC}$ vs $\calC$}
As mentioned in the introduction, we ask: How much can monotone decision tree computation, with query functions computable by monotone circuits in the class $\calC$, simulate general computation in the class $\calC$. In this direction, we first show that $\MDT{\calC} \subseteq \calC$ when $\calC$ has reasonable closure properties.

\begin{lemma}
\label{lem:mdtc-c}
For a circuit complexity class $\calC$ closed
% \footnote{all operations with polynomially many bits} 
under polynomially many $\lnot, \land, \lor$ operations, $\MDT{\calC} \subseteq \calC$.
\end{lemma}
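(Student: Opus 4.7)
The plan is to show directly that any function computed by a height-$O(\log n)$ monotone decision tree with queries in $\textit{mon-}\calC$ admits a $\calC$-circuit by expanding the tree into a DNF-like formula over the query functions. Given $f \in \MDT{\calC}$, let $T$ be a monotone decision tree computing $f$ of height $h = O(\log n)$ where every internal node $v$ is labeled by a function $f_v \in \textit{mon-}\calC$. Since the underlying tree is binary and of height $h$, the total number of internal nodes and leaves is at most $2^{h+1} - 1 = \mathrm{poly}(n)$.

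For each leaf $\ell$ whose label is $1$, associate the characteristic function
\[
c_\ell(x) \;=\; \Bigl(\bigwedge_{v \in P_\ell^{+}} f_v(x)\Bigr) \,\wedge\, \Bigl(\bigwedge_{v \in P_\ell^{-}} \lnot f_v(x)\Bigr),
\]
where $P_\ell^{+}$ (resp.\ $P_\ell^{-}$) is the set of internal nodes on the root-to-$\ell$ path whose outgoing edge is the $1$-edge (resp.\ $0$-edge). Then for any input $x$, exactly one leaf is reached by the computation, and $c_\ell(x) = 1$ precisely for that leaf; hence
\[
f(x) \;=\; \bigvee_{\ell : \mathrm{label}(\ell)=1} c_\ell(x).
\]

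To conclude $f \in \calC$, first note that each query function $f_v$ has a $\calC$-circuit of polynomial size (since $\textit{mon-}\calC \subseteq \calC$), and there are at most $\mathrm{poly}(n)$ distinct internal nodes. By the stated closure of $\calC$ under polynomially many $\lnot$ operations, each $\lnot f_v$ is also computable in $\calC$. Each $c_\ell$ is then the conjunction of at most $h = O(\log n)$ such circuits, which lies in $\calC$ by closure under polynomially many $\land$'s; and $f$ is the disjunction of at most $\mathrm{poly}(n)$ such $c_\ell$'s, which again lies in $\calC$ by closure under polynomially many $\lor$'s. Taken together, we have built a single circuit in $\calC$ computing $f$.

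There is essentially no obstacle: the main (and only) thing to verify is the bookkeeping that the overall construction uses only polynomially many $\lnot, \land, \lor$ operations layered above polynomially many base circuits from $\calC$, which is immediate once the height is $O(\log n)$. This is exactly why the definition of $\MDT{\calC}$ requires logarithmic height; under the $o(\log n)$ and $\omega(\log n)$ regimes handled in Propositions~\ref{prop:o-logn} and~\ref{prop:omega-logn}, either the simulation becomes degenerate or the blow-up becomes super-polynomial and the same argument no longer places $f$ inside $\calC$.
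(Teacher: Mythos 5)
Your proof is correct: the path-characteristic DNF $f=\bigvee_{\ell:\,\mathrm{label}(\ell)=1}c_\ell$ is valid because each input activates exactly one root-to-leaf path, the tree has $2^{O(\log n)}=\poly(n)$ leaves, each $c_\ell$ is a conjunction of $O(\log n)$ queries or negated queries from $\textit{mon-}\calC\subseteq\calC$, and the stated closure under polynomially many $\lnot,\land,\lor$ operations then puts $f$ in $\calC$. The paper takes a different route: it first invokes the equivalence $\MDT{\calC}=\MDL{\calC}$ (from the proof of Theorem~\ref{thm:mdt-mdl}), puts the resulting polynomial-size decision list into the normal form of Sub-section~\ref{subsec:normal-forms} (alternating constants and forward firing, so the queries satisfy $f_1\Rightarrow f_2\Rightarrow\dots\Rightarrow f_k$), and then uses Proposition~\ref{prop:decomp-dl} to write $f=\overline{f_1}f_2\vee\overline{f_3}f_4\vee\dots\vee\overline{f_{k-1}}f_k$, which is a circuit in $\calC$ by the same closure properties. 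Your argument is more self-contained and elementary, needing neither the MDL conversion nor the normal forms, and it makes transparent exactly where the $O(\log n)$ height bound is used (to keep the number of leaves polynomial). The paper's detour buys uniformity with the rest of the development: the MDL viewpoint and the implication-chain formula are reused for the reverse containment $\calC\subseteq\MDT{\calC}$ (via Lemma~\ref{lem:tc0-decomp}), for Theorem~\ref{introthm:ne-ac0} where the sortedness of the bits $\overline{f_i}$ coming from forward firing is essential, and for the candidate-function argument, none of which the raw path DNF provides. So both proofs are sound; yours is the shorter route to this particular containment, while the paper's sets up structure it needs later.
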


\begin{proof}
The proof of Theorem~\ref{thm:mdt-mdl} also establishes that $\MDT{\calC}=\MDL{\calC}$ as log-height MDTs and poly-size MDLs are inter-convertible, it suffices to show that $\MDL{\calC} \subseteq \calC$. Let a Boolean function $f$ belong to $\MDL{\calC}$ via the decision list $L=(f_{1},c_1)(f_{2},c_2)\dots (f_{k},c_k)$ where $k=\poly(n)$; and each query function $f_{i}$ has a (monotone) circuit $C_i$ from the class $\calC$. Using the normal form for the decision lists for circuit classes with the above property, we will assume that the $c_i$'s are alternating; and the query functions $f_i$ are forward firing, i.e $f_1 \Rightarrow f_2 \Rightarrow \dots \Rightarrow f_{k}$. As we can always prepend a $(\textbf{0},0)$ node at the beginning or append a $(\textbf{1},1)$ node at the end of $L$ while still maintaining the normal form; w.l.o.g we may assume that $c_1=0$ and $c_k=1$ (hence $k$ is even). Due to the alternating constants property, this means $c_{2i}=1$ and $c_{2i-1}=0$. 

%We will now show that the Boolean function $g:=\overline{f_1}f_2 \vee \overline{f_3}f_4 \dots \vee \overline{f_{k-1}} f_k$ is equivalent to $f$. To observe this, we will argue that for any input $x$, $g(x)$ is equal to the output of $L$ on $x$. Suppose $x$ activates an even indexed node, say $(f_{2i},c_{2i})$. By definition of activated node, it means that $f_{2i}(x)=1$ and $f_{2i-1}(x)=0$, which means that the term $\overline{f_{2i-1}} f_{2i}$ in $g$ evaluates to 1 for $x$. Thus, $g(x)=1=c_{2i}=f(x)$. A similar argument can be made when $x$ activates an odd indexed node (say $(2i-1) ^{th}$) in $L$ to show that $g(x)=0=c_{2i-1}=f(x)$. 
We know from Proposition~\ref{prop:decomp-dl} that the Boolean function $g:=\overline{f_1}f_2 \vee \overline{f_3}f_4 \dots \vee \overline{f_{k-1}} f_k$ is equivalent to $f$. Finally, we note that because of the closure properties of $\calC$ and $k=\poly(n)$, the expression $g$ can be used to obtain a circuit in $\calC$ computing $f$.
%sA similar construction was used in the context of threshold decision lists in \cite{KS04} (see also \cite{UT15}).
\end{proof}

If the class $\calC$ is rich enough to include monotone circuits for the threshold functions, for example say the class $\TC^0$ itself, then we can actually prove equality: Note that the Monotone Decomposition given in Lemma~\ref{lem:tc0-decomp} can be easily transformed into a MDL with the same functions being queries using Proposition~\ref{prop:decomp-dl}. Thus, we get $\calC \subseteq \MDL{\calC}$, which when combined with the fact that $\MDT{\calC}=\MDL{\calC}$ and Lemma~\ref{lem:mdtc-c} completes the proof of Theorem~\ref{introthm:mdtc-c}.

\introthm{introthm:mdtc-c}
{
	For any circuit complexity class $\calC$ such that $\textit{mon-}\TC^0 \subseteq \textit{mon-}\calC$, 
	$\MDT{\calC} = \calC$.
}
%Thus, by applying Lemma~\ref{lem:mdtc-c} and Lemma~\ref{lem:tc0-decomp}. we can derive Theorem~\ref{introthm:mdtc-c}.
%For simplicity, we present the proof for the case $\calC=\TC^0$.
%\begin{theorem}
%$\MDT{\calC}=\calC$.
%\end{theorem}
%\begin{proof}
%This directly follows by combining Proposition 20 and Theorem 9.
%\end{proof}

\subsection{Monotone Decision Trees and $\AC^0$}	
\label{subsec:mdt-ac0}
We attempt to address the question $\MDT{\AC^0}$ vs $\AC^0$. We know that $\MDT{\AC^0} = \MDL{\AC^0}$ is contained in $\AC^0$ by Lemma~\ref{lem:mdtc-c}. An interesting challenge is to prove or disprove the reverse containment. As a warm-up, we show that $\MDT{\AC^0}$ is more powerful than polynomial sized term decision lists (which is a strict subset of $\AC^0$).

\begin{proposition}
\label{prop:MDT-TDL}
$\MDT{\AC^0} \nsubseteq \mathsf{TDL}$.
\end{proposition}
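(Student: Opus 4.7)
The plan is to exhibit a monotone $\AC^0$ function that lies trivially in $\MDT{\AC^0}$ but cannot be computed by any polynomial-size term decision list, using the classical $\AC^0$ depth hierarchy. The key observation driving everything is that every polynomial-size TDL can be flattened into a polynomial-size $\Sigma_3$ circuit, after which any monotone function living strictly above $\Sigma_3$ in the $\AC^0$ hierarchy witnesses the separation.

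First, I would show $\mathsf{TDL} \subseteq \Sigma_3$ with only polynomial blow-up. Given a TDL $(t_1,c_1)(t_2,c_2)\cdots(t_k,c_k)$ of size $k=\poly(n)$ computing $f$, one can rewrite $f=\bigvee_{i:\,c_i=1}\bigl(t_i \wedge \bigwedge_{j<i}\neg t_j\bigr)$. Since each $t_i$ is an AND of literals and each $\neg t_j$ is a disjunction of literals (a clause), for every index $i$ the bracket is a CNF of size $O(kn)$, and OR-ing these over $i$ gives a $\Sigma_3$ (OR-AND-OR) circuit of size $O(k^2 n)$, which is polynomial in $n$.

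Next, I would invoke the $\AC^0$ depth hierarchy (H\aa stad's switching lemma applied to the Sipser function): there is a monotone function $g \in \AC^0$ computable by a polynomial-size depth-$4$ monotone circuit (an alternating AND/OR tree with appropriate fan-ins, free of negations and therefore monotone by construction) that requires super-polynomial size to be computed by any $\Sigma_3$ circuit. Being monotone $\AC^0$, the function $g$ lies in $\MDT{\AC^0}$ via the trivial single-query MDL $(g, 1)(\mathbf{1}, 0)$, whose only non-trivial query is $g$ itself. On the other hand, if $g \in \mathsf{TDL}$, then by the first step $g$ would admit a polynomial-size $\Sigma_3$ circuit, contradicting the lower bound.

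The only modest obstacle is verifying that the depth-hierarchy separation can be realized by a \emph{monotone} function; this is handled by the fact that the canonical witnesses for the $\AC^0$ depth hierarchy (Sipser-style alternating AND/OR trees) are built out of AND and OR gates only, so the super-polynomial $\Sigma_3$ lower bound applies directly to an explicit monotone $\AC^0$ function without any further modification.
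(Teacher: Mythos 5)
Your proof is correct and follows essentially the same route as the paper: collapse every polynomial-size TDL to a small constant-depth circuit, take a monotone Sipser-type function from the $\AC^0$ depth hierarchy that escapes that depth, and place it in $\MDT{\AC^0}$ via a trivial single-query tree/list. The only cosmetic difference is that you flatten TDLs directly into $\Sigma_3$ circuits, whereas the paper reuses the construction of Lemma~\ref{lem:mdtc-c} to get a depth-$4$ upper bound for $\mathsf{TDL}$ and then takes the hierarchy witness one level higher.
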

\begin{proof}
We show that $\MDL{\AC^0} \nsubseteq \mathsf{TDL}$. 
%We start with the following corollary of the proof of Proposition~\ref{prop:mdtc-c}. If $\AC^0$ is contained in $\MDL{\AC^0}$ then it implies a structural property for functions in $\AC^0$. That is, all functions in $\AC^0$ are computable by depth 3 circuits of the form of $\lor$ of terms of the form $\overline{f_i} \land f_{i+1}$ for a sequence of constant depth monotone circuits $f_1, f_2, \ldots f_k$ such that $\forall i,~f_i \Rightarrow f_{i+1}$.
Using the construction in Lemma~\ref{lem:mdtc-c} we know that all functions in $\mathsf{TDL}$ have polynomial sized circuits of depth $4$ since the query functions are depth $2$ monotone circuits. So, if we can show the existence of a function $f$ that is in $\MDL{\AC^0}$ but has no polynomial sized circuit of depth 4, we are done. It is known for any depth $d$, there is a monotone function $f$ which can be computed by a monotone circuit of depth $d$ but cannot be computed by any polynomial size (even non-monotone) circuits of depth $d-1$ (See \cite{Sip83}). As the decision list $L=(f,1)(1,0)$ computes $f$ and the query function $f$ has a monotone $\AC^0$ circuit, we have $f\in \MDL{\AC^0}$. And $f \notin \mathsf{TDL}$, because all functions in $\mathsf{TDL}$ have $\AC^0$ circuits of depth 4 and $f$ by definition has none.
\end{proof}
%
%	\begin{theorem}
%		There is a function that is in $TDL$ but not in $MDL(\AC^0)$.
%	\end{theorem}
%	
%	\begin{proof}
%		Is this true? For $f=x_1\overline{•}2}\vee x_3\overline{x_4}\vee \dots \vee  x_{n-1}\overline{x_n}$ ?
%	\end{proof}
%	
%\section{Negations and Almost Non-adaptive Monotone decision trees.}

%\subsubsection{Monotone Decision Lists ($\AC^0$) vs $\AC^0$}

Moving towards comparing the class with $\AC^0$,
we first apply Propositions~\ref{prop:o-logn}\footnote{To apply Prop.~\ref{prop:o-logn}, take  $f(x)=x_1\overline{x_2} \vee x_3\overline{x_4} \dots \vee x_{n-1}\overline{x_n}$.} and~\ref{prop:omega-logn} to $\AC^0$, and conclude that:
%to obtain the following corollary.
%\\[-5mm]
%\begin{corollary}
%\label{thm:logn-justify}
For any $g(n) = o(\log n)$, and $h(n) = \omega(\log n)$,~
$\MDTTT{g(n)}{\AC^0} \subsetneq \AC^0$ and $\MDTTT{h(n)}{\AC^0} \nsubseteq \AC^0$.
%\end{corollary}
%\begin{proof}
%We can see that Proposition~\ref{prop:omega-logn} directly applies to this situation. To apply Proposition~\ref{prop:o-logn}, we observe that the function $f(x)=x_1\overline{x_2} \vee x_3\overline{x_4} \dots \vee x_{n-1}\overline{x_n}$ is in $\AC^0$ and has $\alt(f) = n$.
%\end{proof}
%\vspace{-2mm}
In contrast to this, we show that the whole of $\AC^0$ can be computed by monotone decision trees with some sub-linear height. By using a theorem from due to Santha and Wilson~(See Theorem~4.1 of \cite{SW93}), which reduces the number of negations in the circuit to $\frac{n}{\log^r n}$, and then applying Theorem~\ref{prop:anadt-neg}, we show:

\begin{theorem}
\label{thm:sublinear-ac0}
For any constant $r$, $\AC^0 \subseteq \MDTTT{d(n)}{\AC^0}$ where $d(n) = {\Omega}\left(\frac{n}{\log^r n}\right)$.
\end{theorem}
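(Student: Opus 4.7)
The plan is to chain two ingredients: the Santha–Wilson negation reduction for $\AC^0$, and our own Theorem~\ref{prop:anadt-neg} which converts a circuit with $k$ negations into a monotone decision tree of height $k+1$ whose queries are the monotone sub-circuits feeding the negation gates.

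Concretely, the first step will be: given any $f \in \AC^0$, take its polynomial-size constant-depth circuit and invoke Theorem~4.1 of Santha–Wilson~\cite{SW93} to obtain an equivalent $\AC^0$ circuit $C$ for $f$ of polynomial size and constant depth that uses only $k = \bigO(n/\log^{r} n)$ negation gates (for the given constant $r$). The second step will be to feed $C$ into the construction from Theorem~\ref{prop:anadt-neg}: repeatedly locate a bottom-most negation gate $g$, query the monotone sub-circuit $C_{+}$ that feeds $g$, and branch into the two residual circuits obtained by replacing $g$ with $0$ and with $1$. This yields an MDT of height $k+1 = \bigO(n/\log^{r} n)$, which is the bound $d(n)$ claimed in the theorem.

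The only additional verification I need, beyond plugging in the two theorems, is that the queries are in $\textit{mon-}\AC^0$ rather than merely monotone. For this I would observe that at every recursive step the current circuit is an $\AC^0$ circuit (constant substitutions never increase depth or size), and the sub-circuit feeding its bottom-most negation contains no negation gate on any path from its inputs to its output, so it is monotone and still $\AC^0$, i.e.\ it admits a monotone $\AC^0$ circuit. Thus every query function encountered in the construction lies in $\textit{mon-}\AC^0$.

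I do not foresee a genuine obstacle; the proof is essentially a two-line composition of existing results and the main work is citing Theorem~4.1 of~\cite{SW93} correctly and pointing out that the queries inherited from Theorem~\ref{prop:anadt-neg} remain in $\textit{mon-}\AC^0$. The only point requiring mild care will be ensuring that the intermediate circuits $C_0$ and $C_1$ produced by substituting $0/1$ for the removed negation gate remain polynomial size and constant depth so that the monotone sub-circuits extracted in subsequent recursive steps are indeed in $\textit{mon-}\AC^0$; this is immediate because constant substitution is a purely structural simplification.
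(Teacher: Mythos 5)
Your proposal is correct and follows essentially the same route as the paper's proof: invoke Theorem~4.1 of Santha--Wilson to reduce the number of negations to $\bigO(n/\log^r n)$, then apply Theorem~\ref{prop:anadt-neg}, observing that the queried sub-circuits are negation-free sub-circuits of an $\AC^0$ circuit and hence lie in $\textit{mon-}\AC^0$. Your extra remark about constant substitutions preserving size and depth is exactly the (implicit) justification the paper relies on.
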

\begin{proof}
We use the following theorem from due to Santha and Wilson~(See Theorem~4.1 of \cite{SW93}): For any constant $r$, every $\AC^0$ circuit family can be translated to a constant depth polynomial size circuit that uses at most $\bigO\left(\frac{n}{\log^r n}\right)$ negations. We can directly combine this with Theorem~\ref{prop:anadt-neg} to get a decision tree of height at most $\bigO\left(\frac{n}{\log^r n}\right)$ by observing that since the queries used in the proof of Theorem~\ref{prop:anadt-neg} are monotone sub-circuits\footnote{By sub-circuit, we mean the circuit obtained by fixing the values of some intermediate gates to 0 or 1.} of the original circuit, in this case, they are computable in monotone $\AC^0$.
\end{proof}

\noindent 
{\bf A negation-limited computation of $\MDT{\AC^0}$:} We show that the functions in $\MDT{\AC^0}$ have $\AC^0$ circuits with `limited' negation gates.

%Then we will show (see appendix~\ref{appthm:ne-ac0} for details) by the following lemma that all functions in $\AC^0$ have $\AC^0$ circuits with $\bigO(n^\epsilon)$ negations for any constant $\epsilon>0$. 
%But unfortunately, the lower bound proven by \cite{SW93} for single output $\AC^0$ functions is only $\Omega(n^\epsilon)$.

%\introthm{introthm:ne-ac0}
%If a Boolean function $f$ on $n$ variables is in $\MDT{\AC^0}$, then for any positive constant $\epsilon\le 1$, there is an $\AC^0$ circuit for $f$ with $\bigO(n^\epsilon)$ negation gates.
\introthm{introthm:ne-ac0}
{ 
If a Boolean function $f$ on $n$ variables is in $\MDT{\AC^0}$, then for any positive constant $\epsilon\le 1$, there is an $\AC^0$ circuit for $f$ with $\bigO(n^\epsilon)$ negation gates.
}
\begin{proof}
As $f\in \MDL{\AC^0}$, by assuming that the MDL is in normal form and applying Proposition~\ref{prop:decomp-dl}, we can write $f=\overline{f_1}f_2 \vee \overline{f_3}f_4 \vee \dots \overline{f_{\ell-1}}f_\ell$, where $\ell= \bigO(n^k)$ for some constant $k$. In addition, all the $f_i$'s have monotone $\AC^0$ circuits and $\forall i \in [\ell-1], f_i \Rightarrow f_{i+1}$.
Thus, it suffices to produce $\overline{f_i}$ for every $i \in [\ell]$ which is odd, from $f_1, \ldots f_n$, using a constant depth polynomial size circuit that uses $\bigO(n^\epsilon)$ negations. Indeed, the trivial circuit uses $\ell = \bigO(n^k)$ negations.
%To control the use of negations in the circuit for $f$, we focus on constructing $\overline{f_1},\overline{f_3}\dots$ from $f_1,f_3\dots$ (called inverter). As $f$ can be evaluated with no more extra negations once we have the inverted values, all we need to do is to design an inverter in $\AC^0$ \footnote{Strictly speaking the circuit doesn't classify as an $\AC^0$ circuit since it has to have multiple output gates. But we will stick to this notation for simplicity} with $\bigO(n^\epsilon)$ negations. 
%
%Note that a natural $\AC^0$ inverter is to use one negation each for $f_1,f_3,\dots$, but this takes up $\bigO(n^k)$ negations which is far from the desired quantity of $\bigO(n^{\epsilon})$. 
		 
The main observation is that the bits (the outputs of $f_i$ where $i$ is odd) we need to invert are already in sorted order, since $\forall i, f_i \Rightarrow f_{i+1}$.
%Now let us bring down the exponent from $k$ to $\epsilon$ or ${k-\epsilon}$. Note that the output bits of the functions ${f_1},{f_3} \dots$ which are to be inverted are in sorted order for any input $x\in \{0,1\}^n$. 
Let this bit-string be $s = 0^j1^{m-j}$, where $m:=\lceil \ell /2 \rceil$. We need to output $\overline{s} =1^j0^{m-j}$. 

%We now use a construction due to \cite{SW93} where we divide $s$ into $t=n^\epsilon$ contiguous blocks $B_1B_2\ldots B_t$ each of length $p:= m/n^\epsilon = \bigO(n^{k-\epsilon})$. Observe that the negation of  block $B_i$ is of the form $1^p$ or $0^p$ or $1^j0^{p-j}$ for some $0 \le j \le p$, based on whether $B_i$ witnesses switching from 0s to 1s in $s$. 
As proved in \cite{SW93} (Theorem 3.6 in \cite{SW93}), this can be implemented using an iterative construction which uses only $\bigO(n^{\epsilon})$ negations. In the proof of Theorem 3.6 in \cite{SW93}, the authors also observe, this part of their construction uses only polynomially many $\neg$, and (unbounded) $\land$ and $\lor$ gates. Hence the final circuit is within $\AC^0$.
We present the construction in our context in Appendix~\ref{appsec:Inverter} for reference.
\begin{aproof}{$\AC^0$ Inverter Construction for Sorted Inputs - Adaptation from \cite{SW93}}{appsec:Inverter}

We elaborate on the construction used in the proof of Theorem~\ref{introthm:ne-ac0}.
We present the construction due to \cite{SW93} in our simpler notation and setting. Divide $s$ into $t=n^\epsilon$ contiguous blocks $B_1,B_2,\ldots ,B_t$ each of length $p:= m/n^\epsilon = \bigO(n^{k-\epsilon})$. Observe that the negation of  block $B_i$ is of the form $1^p$ or $0^p$ or $1^j0^{p-j}$ for some $0 \le j \le p$, based on whether $B_i$ witnesses switching from 0s to 1s in $s$. Our objective is to construct an $\AC^0$ circuit using only $\bigO(n^{\epsilon})$ negations.

Let $B_r$ denote the block which contains this index where the switch happens. Call such a block \textit{special} for a given $s$. Note that there is at most one such block. For each block $B_i$, define $b_i = B_i[1] \oplus B_i[p]$ -- i.e., the parity of the first and last bits of the block $B_i$. Notice that the bit $b_i$ indicates whether $B_i$ is special or not. Thus, we can obtain the special block as follows $B_r = \lor_{i=1}^t \left((b_i)^p \land B_i\right)$ where $\land$ and $\lor$ are bit-wise, and $(b_i)^p$ is the bit $b_i$ repeated $p$ times. 

Once we have the bits of the special block $B_r$, we naively invert all its individual bits carried in the wires produced above to get $B_{neg}:=\overline{B_r}$. What remains is to identify which type each block $B_i$ is, and appropriately wire to produce $1^p$ or $0^p$ or $B_{neg}$ as their inversions.

This is done as follows: the inversion of $B_i$ is $(\left(b_i\right)^p \land B_{neg}) \lor  (\left(\overline{b_i}\right)^p \land (\overline{c_i})^p)$ where $c_i$ is the first bit of $B_i$. To check this, note that the output of the above expression is $B_{neg}$ if $b_i=1$; otherwise it is $0^p$ if $c_i=1$, and is $1^p$ if $c_i=0$. The number of negations used is $\bigO(n^\epsilon)+\bigO(n^{k-\epsilon})$, the first part for producing $\overline{b_i}$'s and $\overline{c_i}$'s, and the second part to produce $B_{neg}$, all of which can be wired commonly for all the blocks. If $\epsilon > \frac{k}{2}$, we are done. Otherwise, we have reduced the number of negations to $\bigO(n^{k-\epsilon})$. To reduce the exponent even further, we apply this construction recursively to invert $B_r$, which again, by definition sorted. This reduces the exponent additively by $\epsilon$ by suffering an increase of only constant depth at each level. Thus, in $\lceil k/\epsilon \rceil$ levels we would get an inverter with only $\bigO(n^{\epsilon})$ negations. The size of the circuit is polynomial in $n$ as the number of gates introduced in each level is only linear in the number of sorted bits.
\end{aproof}
\end{proof}
In contrast, we note that certain $\AC^0$ circuits require a lot of negations.
%So if we have a $\omega(n^\epsilon)$ lower bound for the number of negations required in $\AC^0$ circuits of an $\AC^0$ function, this separates $\MDL{\AC^0}$ from $\AC^0$. Unfortunately, we are able to show only $\Omega(n^\epsilon)$ lower bound for function $f$ such that $\alt(f)=n$. That is, we show that:
%\paragraph{\bf A depth lower bound when $n^\epsilon$ negations are allowed:}
%if $f$ be a function with alternation $k$. We will show that any circuit for $f$ consisting of $n^\epsilon$ negations will have at least $k/\epsilon$ depth.
%\vspace{-2mm}
\begin{theorem}[Theorem 3.2 of \cite{SW93}]
\label{thm:neg-ac0-lb}
For every $f \in \AC^0$ with $\alt(f) = \Omega(n)$, and for every $\epsilon > 0$,
% is a Boolean function in  $AC^0$ with alternation $n$, 
any $\AC^0$ circuit computing $f$ will have at least $\Omega(n^\epsilon)$ negation gates for some positive constant $\epsilon$ (that can depend on the circuit). 
\end{theorem}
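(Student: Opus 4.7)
The plan is to prove the contrapositive: any $\AC^0$ circuit of depth $d$ and polynomial size with $t$ negation gates computes a function $f$ with $\alt(f) \le (2t)^{O(d)}$. Consequently, $\alt(f) = \Omega(n)$ forces $t = \Omega(n^{1/O(d)}) = \Omega(n^\epsilon)$ for some $\epsilon > 0$ that depends on the depth of the circuit, which is exactly what the theorem asserts.

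First, I would normalize the circuit by pushing negations toward the inputs using DeMorgan's laws, while tracking the $t$ ``logical'' negations: for each negation gate $\neg_i$ in $C$, record the subcircuit $h_i$ sitting immediately below it. This produces a representation $f(x) = g\bigl(x, \neg h_1(x), \ldots, \neg h_t(x)\bigr)$, where $g$ is a monotone $\AC^0$ function of $n + t$ inputs, and each $h_i$ is computable in $\AC^0$ with strictly fewer than $t$ negations and depth at most $d-1$ (since $\neg_i$ has been extracted upward). This is essentially the same kind of ``monotone hull plus a few negated wires'' decomposition that Lemma~\ref{lem:alternation-decomp} and the construction behind Theorem~\ref{prop:anadt-neg} exploit, except now the hull is required to be shallow and small.

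The key combinatorial step is to bound $\alt(f)$ along any maximal chain $0^n = x^{(0)} \prec \cdots \prec x^{(n)} = 1^n$. Because $g$ is monotone, within any contiguous segment of the chain on which the $t$-tuple $(h_1(x), \ldots, h_t(x))$ is constant, $f$ can alternate at most once. Hence $\alt(f)$ is bounded by twice the number of such segments, which in turn is at most $O\bigl(\sum_{i=1}^t \alt(h_i)\bigr)$. Inducting on depth with this recurrence (base case: monotone $\AC^0$ functions have $\alt = 1$), one obtains $\alt(f) \le (2t)^{O(d)}$, from which the claim on $t$ follows by setting $\alt(f) = \Omega(n)$ and solving.

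The main obstacle is the bookkeeping in the recursion---in particular, shared subcircuits among the $h_i$'s and the fact that the depth may not strictly decrease by one per recursion level after DeMorgan pushing. A cleaner alternative route is to use H{\aa}stad's switching lemma: apply a random restriction with free-variable fraction $p = n^{-\delta}$, which (whp) collapses $C$ to a shallow decision tree, while the negation count cannot increase; the restricted function has expected alternation at least $\Omega(p \cdot \alt(f))$, whereas any depth-$h$ decision tree with at most $t$ negations along its paths has alternation at most $2^{O(t+h)}$. Balancing the two estimates then yields the required $t = \Omega(n^\epsilon)$ lower bound for some $\epsilon = \epsilon(d) > 0$, matching the theorem.
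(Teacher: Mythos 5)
The paper does not actually reprove this statement: its ``proof'' is a citation of Theorem~3.2 of \cite{SW93}, quoting the bound that a depth-$d$ circuit computing a function of alternation $k$ needs at least $d(k+1)^{1/d}-d$ negations, and then setting $d=O(1)$, $k=\Omega(n)$. Your first argument is therefore a genuinely different, self-contained route, and its core is sound: cutting the circuit at every negation gate does give $f(x)=g\bigl(x,\neg h_1(x),\dots,\neg h_t(x)\bigr)$ with $g$ monotone (no DeMorgan normalization is needed for this, and pushing negations to the leaves would in fact change the negation count, so drop that remark), and the segment-counting step along a chain $\mathcal{X}$ correctly yields $\alt(f,\mathcal{X})\le 2\bigl(1+\sum_i\alt(h_i,\mathcal{X})\bigr)$. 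The two obstacles you flag are real but both fixable, and the fix should be stated: (i) the induction should be on the \emph{nesting depth of negation gates}, not on circuit depth --- an $h_i$ can have the same AND/OR-depth as $C$ (e.g.\ when the output gate is a negation), but after cancelling adjacent negations any input--output path in a depth-$d$ circuit carries at most $d+1$ NOT gates, since consecutive NOTs on a path must be separated by an AND/OR gate; (ii) sharing of subcircuits among the $h_i$ is harmless, because the recurrence only uses that each subcircuit has at most $t$ negation gates immediately below its cut, so unrolling gives $B(\ell)\le 2(1+t\,B(\ell-1))$, $B(0)\le 1$, hence $\alt(f)\le(2t+2)^{O(d)}$ irrespective of multiplicities. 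This gives $t=\Omega(n^{1/(d+1)})$, i.e.\ the theorem with $\epsilon$ depending on the depth, exactly as the statement (and the quantitatively sharper bound cited by the paper) asserts.

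Your ``cleaner alternative route'' via the switching lemma, however, is broken and should be discarded. The claim that a $p$-random restriction preserves alternation $\Omega(p\cdot\alt(f))$ in expectation is false: for the paper's own candidate $f=\overline{x_1}x_2\vee\dots\vee\overline{x_{n-1}}x_n$, which has alternation $\Omega(n)$ and is in $\AC^0$, a restriction with $p=n^{-\delta}$ fixes some term to $1$ with overwhelming probability, so the restricted function is the constant $1$ and has alternation $0$. Moreover, the negation count of the original circuit plays no role after the collapse (a decision tree over variables has no negations), so no lower bound on $t$ could emerge from that comparison; if the argument worked it would ``refute'' the existence of high-alternation functions in $\AC^0$, which do exist. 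Keep the first argument with the nesting-depth induction, and delete the second.
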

\begin{proof}
This follows directly from \cite{SW93} where the authors show that\footnote{Although this was stated for multi-output functions in~\cite{SW93}, it holds for single-output functions as well.}, if any Boolean function $f$ of alternation $k$ is computed by a circuit $C$  of depth $d$, then the number of negations in $C$ is at least $d(k+1)^{1/d} - d$. This is $\Omega(n^\epsilon)$ as $d$ is constant and $k=\Omega(n)$.
\end{proof}

Thus, if Theorem~\ref{thm:neg-ac0-lb} can be improved asymptotically for some $f \in \AC^0$ and fixed $\epsilon$, then we can show that $\MDT{\AC^0}$ is strictly contained in $\AC^0$.\\

\noindent {\bf A candidate function for $\MDT{\AC^0}$ vs $\AC^0$ question:}
We now show that there is a simple function that can be computed by depth two $\AC^0$ circuits, which if shown to be in $\MDT{\AC^0}$ will imply that $\MDT{\AC^0}=\AC^0$. This in particular gives a potential candidate function for the separation of the two classes.

\begin{theorem}
    If the family of functions $f^{n}(x)=\overline{x_1}x_2 \vee \overline{x_3}x_4 \vee \dots \overline{x_{n-1}}x_n$ is in $\MDT{\AC^0}$, then $\MDT{\AC^0}=\AC^0$.
\end{theorem}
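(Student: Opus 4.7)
The plan is to show that, assuming the family $\{f^n\}$ lies in $\MDT{\AC^0}$, the class $\MDT{\AC^0}$ is closed under complement, polynomial fan-in OR, and polynomial fan-in AND. Combined with $\MDT{\AC^0} \subseteq \AC^0$ from Lemma~\ref{lem:mdtc-c}, an induction on the depth of an $\AC^0$ circuit will then yield $\MDT{\AC^0} = \AC^0$. The preliminary step is a uniform normal form: by Theorem~\ref{thm:mdt-mdl}, the transformations of Section~\ref{subsec:normal-forms}, and Proposition~\ref{prop:decomp-dl}, every $g \in \MDT{\AC^0}$ can be written as $g = f^k(q_1,\ldots,q_k) = \bigvee_{j=1}^{k/2}\overline{q_{2j-1}}q_{2j}$ for some even $k = \mathrm{poly}(n)$ with $q_1 \Rightarrow \cdots \Rightarrow q_k$ and each $q_j \in \textit{mon-}\AC^0$ (allowing the constants $\textbf{0}$ and $\textbf{1}$ among the $q_j$ as padding). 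The case $g(0^n)=1$ (equivalently, an MDL in normal form with $c_1=1$) is absorbed by the identities $\overline{f^k(q)} = f^{k+1}(q_1,\ldots,q_k,\textbf{1})$ when $k$ is odd and $\overline{f^k(q)} = f^{k+2}(\textbf{0},q_1,\ldots,q_k,\textbf{1})$ when $k$ is even; both preserve forward firing and keep every query in $\textit{mon-}\AC^0$.

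Complement closure is immediate from flipping the $c_i$'s of an MDL. For polynomial fan-in OR, suppose $g = \bigvee_{i=1}^{s} g_i$ with $s = \mathrm{poly}(n)$ and each $g_i \in \MDT{\AC^0}$. Using the normal form, write $g_i = \bigvee_j \overline{q^i_{2j-1}}q^i_{2j}$ with every $q^i_{\cdot} \in \textit{mon-}\AC^0$. Concatenating the $P := \sum_i k_i/2 = \mathrm{poly}(n)$ pairs $(q^i_{2j-1},q^i_{2j})$ in any order into a single sequence $(R_1,R_2,\ldots,R_{2P})$ of $\textit{mon-}\AC^0$ functions (which need not itself be forward firing) gives
\[
g \;=\; \bigvee_{p=1}^{P}\overline{R_{2p-1}}R_{2p} \;=\; f^{2P}(R_1,R_2,\ldots,R_{2P})
\]
pointwise in $x$, directly from the definition of $f^{2P}$. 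The hypothesis provides an MDT for $f^{2P}$ of height $O(\log(2P)) = O(\log n)$ whose queries are $\textit{mon-}\AC^0$ functions of its $2P$ formal inputs; substituting $R_j(x) \in \textit{mon-}\AC^0$ for the $j$-th formal input turns each query into a monotone $\AC^0$ function of $x$, since composition of $\textit{mon-}\AC^0$ with $\textit{mon-}\AC^0$ is monotone, of constant depth, and of polynomial size. Hence $g \in \MDT{\AC^0}$. Closure under polynomial fan-in AND then follows via De Morgan, $\bigwedge_i g_i = \overline{\bigvee_i \overline{g_i}}$, using complement and OR closures.

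With these closures in hand, take any $g \in \AC^0$ and fix a polynomial-size constant-depth circuit for $g$; pushing negations to the input layer via De Morgan preserves constant depth and polynomial size. The input literals $x_i,\overline{x_i},\textbf{0},\textbf{1}$ lie in $\MDT{\AC^0}$ via two-node MDLs. Inducting bottom-up, each AND/OR gate of polynomial fan-in keeps the value in $\MDT{\AC^0}$ by the closure properties above; because the depth $d$ is a constant, the MDL length at the root grows by at most a polynomial factor a constant number of times and hence remains $\mathrm{poly}(n)$. Therefore $g \in \MDT{\AC^0}$, which combined with Lemma~\ref{lem:mdtc-c} gives $\MDT{\AC^0} = \AC^0$. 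The step I expect to be the main obstacle is the clean handling of the complement case in the normal form via constant padding, since that is precisely what lets the OR-closure step go through by a single concatenation of $\overline{q}q'$-pairs rather than a case split on whether $g_i(0^n)$ equals $0$ or $1$.
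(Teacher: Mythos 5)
Your proposal is correct and takes essentially the same route as the paper: put each subfunction into the forward-firing $\bigvee_j \overline{q_{2j-1}}q_{2j}$ normal form, observe that an OR of such expansions is literally $f^{2P}$ composed with \textit{mon-}$\AC^0$ functions (forward firing across blocks being unnecessary since the identity defining $f^{2P}$ holds pointwise), substitute into the hypothesized representation of $f^{2P}$, and induct over the constant depth with polynomial bookkeeping. The only cosmetic differences are that the paper inducts on De-Morgan formula depth and handles AND gates via the ``CNF form'' of $f$ from Proposition~\ref{prop:decomp-dl}, whereas you phrase it as closure of $\MDT{\AC^0}$ under complement and polynomial fan-in OR and dispatch AND by De Morgan.
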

\begin{proof}
Suppose that the function $f^n$ is in $\MDT{\AC^0} = \MDL{\AC^0}$. Assuming the MDL is in normal form, by Proposition~\ref{prop:decomp-dl}, $f^n$ can be expressed in various forms as follows:
\begin{align}
	f^n & = (f_1^n,0)(f_2^n,1)\dots (f_{p(n)}^n,1)(f_{p(n)+1}^n,0) \nonumber\\
	& = f_1^n \oplus \dots \oplus f_{p(n)+1}^n\nonumber\\
	& =\overline{f_1^n}f_2^n \vee \overline{f_3^n}f_4^n \vee \dots \vee \overline{f_{p(n)-1}^n}f_{p(n)}^n \nonumber\\
	& =({f_2^n} \vee \overline{f_3^n}) \wedge ({f_4^n} \vee \overline {f_5^n}) \wedge \dots \wedge ({f_{p(n)}^n} \vee \overline{f_{p(n)+1}^n}),
	\label{eqn:2}
\end{align}
where for all $i$, $f_i^n \in $ {\em mon-}$\AC^0$, and $f_i^n \Rightarrow f_{i+1}^n$, and $p(n)$ is even and some polynomial in $n$. Suppose $p(n)=n^{c_1}$ and each $f_i^n$ has a monotone $\AC^0$ circuit of depth $d$ and size at most $n^{c_1}$ for some constants $d \ge 1,c_1 \ge 2$.
    
Now we will show that $\AC^0 \subseteq \MDT{\AC^0}$. Consider an arbitrary function $g \in \AC^0$ computed by a De-Morgan formula\footnote{$\AC^0$ is equivalent to polynomial size constant depth De-Morgan formulas i.e., Boolean formulas in which the negations are only at input variables. W.l.o.g., we also assume that the AND and OR gates are in alternate levels of the formula.} $G$ of depth at most $e$ and size at most $n^{c_2}$ for some constants $e$ and $c_2\ge 2$. 
We will show that $g$ has a MDL $L_g$ of size (in its normal form) at most $n^{(c_1c_2)^{2e}}$ in which all the queries have monotone circuits of depth at most $de$ and size at most $n^{(c_1c_2)^{2e}}$. Then, since $e,c_1,c_2$ are constants, the size of the monotone circuits and the number of queries are polynomial in $n$, so $g \in \MDL{\AC^0}$. 

We show the existence of $L_g$ by induction on $e$.
    
\noindent {\bf Base case: $e = 1$.} Depth-1 formulas are just OR or AND of literals. If $g=(\bigvee_i x_i) \vee (\bigvee_j \overline{x_j})$ where $x_i$'s and $x_j$'s are variables, then $L_g = (\bigvee_i x_i, 1)(\bigwedge_j x_j, 0)(\textbf{1},1)$ works. The case of OR gate is similarly handled. The depth, size, and number of queries in $L_g$ are therefore bounded as expected (assuming $n$ is not too small).

%If the root gate for the above stated formula for $g$ is an OR gate -- say $g = \bigvee h_i$, where each $h_i$ is a monomial of the form $\bigwedge l_{i,j}$ where $l_{i,j}$ denotes a literal. Now, rewrite each $h_i$ as $\overline{(\bigvee  l^{-}_{i,j}}) \wedge ({\bigwedge  l^{+}_{i,j}})$ where $l^{-}$ and $l^{+}$ is a partition of the variables appearing in $h_i$ based on whether they appear as negative or positive literals. Now, we can see that the expression for $g(x)$ resembles that of $f(z)$ for $|z|$ = some polynomial($|x|$). Therefore, by appropriately substituting the variables $z$ of the functions $f_i$ for $i \in [p(|z|)+1]$, by $l^{-}$ or $l^{+}$, we get a corresponding monotone decision list for $g(x)$. A similar idea works if the root gate is an AND gate (Make use of the CNF form for $f^m$ instead of DNF form). 
    
\noindent {\bf Induction step: $e \ge 2$.} Again, we assume that the root gate is an OR gate: $g = \bigvee_{i=1}^s h_i$, where each $h_i$ has an $\AC^0$ formula of depth at most $e-1$ and size at most $n^{c_2}$, which means by the induction hypothesis, that it has an MDL $L_{h_i}$ of size at most $n^{(c_1c_2)^{2e-2}}$ such that all its queries have monotone circuits of depth at most $d(e-1)$ and size at most $t:= n^{(c_1c_2)^{2e-2}}$ -- let us say that $L_{h_i} = \overline{f_{i,1}}f_{i,2} \vee \overline{f_{i,3}}f_{i,4}\vee \dots \vee\overline{f_{i,t-1}}f_{i,t}$. Then, we have $g = \bigvee_{i=1}^s h_i= \bigvee_{i=1}^s  (\overline{f_{i,1}}f_{i,2} \vee \overline{f_{i,3}}f_{i,4}\vee \dots \vee \overline{f_{i,t-1}}f_{i,t})$. The trick now is to notice that this expression for $g$ looks exactly like the Boolean function $f^n$, except the variables are replaced by some monotone functions $f_{i,j}$'s, each of which has monotone circuits of depth at most $d(e-1)$ and size at most $t$. That is, $g = f^{st}(f_{1,1},\dots ,f_{1,t}, f_{2,1},\dots ,f_{2,t}, \dots , f_{s,1},\dots ,f_{s,t} )$. Now, using the MDL (family) we have for $f$ from Equation~\eqref{eqn:2}, we get an MDL $L_g$ for $g$ by substituting the variables $x_i$'s with functions $f_{i,j}$'s. The number of queries is $p(st)=(st)^{c_1}$, each query has a monotone formula of depth at most $d+d(e-1)=de$ and size at most $(st)^{c_1} + st.t \le (st)^{c_1 + 2} \le (n^{c_2}n^{(c_1c_2)^{2e-2}})^{c_1+2} \le n^{(c_2 + (c_1c_2)^{2e-2})(c_1+2)} \le n^{(c_1c_2)^{2e}}$ (using $e,c_1,c_2 \ge 2$).

%it is in $\MDT{\AC^0}$ resulting in the expansion $h_i = \overline{f_{i,1}}f_{i,2} \vee \overline{f_{i,3}}f_{i,4} \vee \dots \overline{f_{i,p-1}}f_{i,p}$ with $f_{i,j} \in {mon}-\AC^0$ and $f_{i,j} \Rightarrow f_{i,{j+1}}$. 
   %Therefore, by substituting these expressions for $h_i$'s in the formula for $g$, we obtain $g=\overline{g_1}g_2 \vee \overline{g_3}g_4 \vee \dots \overline{g_{q-1}}g_q$ for appropriately defined functions $g_i$'s and $q=poly(n)$. Again, we can see that this expression for $g(x)$ resembles that of $f(z)$ for $|z|$ = some polynomial($|x|$). Therefore, by substituting the variables $z_j$'s of the functions $f_i$ for $i \in [p(q)+1]$, by $g_j$'s, we get a corresponding monotone decision list for $g(x)$. 
    
    A similar construction works if the root gate is an AND gate, in which case we can make use of the ``CNF form'' for $f$ instead of the ``DNF form''. 
  
\end{proof}

\subsection{Randomized MDTs with Query Restrictions}
Similar to the deterministic case, when the height is restricted to $\bigO(\log n)$, we can define $\RMDT{\calC}$ for a circuit complexity class $\calC$. By using threshold gates to compute the probability bounds, we show that $\RMDT{\calC} = \calC$ if $\textit{mon-}\TC^0 \subseteq \textit{mon-}\calC$.
%\begin{theorem}
%if a Boolean function $f$ on $n$ variables is computed by an MRDT of height $O(\log n)$ with all queries having monotone $\TC^0$ circuits, then $f\in \TC^0$. 
By using a carefully constructed normal form for randomized monotone decision trees 
%and using a more careful analysis 
we then show that $\RMDT{\AC^0} \subseteq \AC^0$.
%We discuss details of the both the proofs in 
%	\end{theorem}

\begin{theorem}
	For any circuit complexity class $\calC$ such that $\textit{mon-}\TC^0\subseteq \textit{mon-}\calC$ and closed under polynomial $\vee,\wedge,\neg$, we have $\RMDT{\calC} = \calC$.
\end{theorem}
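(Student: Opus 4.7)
The plan is to establish the two containments separately, leveraging the already-proven machinery. For the easier direction $\calC \subseteq \RMDT{\calC}$, I would observe that every deterministic monotone decision tree is trivially a randomized monotone decision tree (one that makes no random choices), so $\MDT{\calC} \subseteq \RMDT{\calC}$. Combining this with Theorem~\ref{introthm:mdtc-c}, which gives $\calC = \MDT{\calC}$ under the stated hypothesis on $\calC$, we get $\calC \subseteq \RMDT{\calC}$.

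For the harder direction $\RMDT{\calC} \subseteq \calC$, I would reuse the probabilistic characterization derived in the proof of Theorem~\ref{thm:rmdt-main}. Given $f \in \RMDT{\calC}$ via a randomized monotone decision tree $T$ of height $h = \bigO(\log n)$ with all monotone queries computable in $\textit{mon-}\calC$, the number of leaves is at most $2^h = \poly(n)$. Let $\ell_1, \ldots, \ell_k$ be the $1$-labeled leaves, let $r_i$ be the number of random nodes on the root-to-$\ell_i$ path, and let $c_i$ be the characteristic function of that path (a conjunction of some monotone queries and negations of monotone queries). By the characterization in Equation~\eqref{eqn:1}, $f(x) = 1$ iff $\sum_{i=1}^{k} 2^{h - r_i} c_i(x) \ge 2^{h-1}$.

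From here I would argue circuit-class membership in two steps. First, each $c_i$ is an AND of at most $h = \bigO(\log n)$ functions from $\textit{mon-}\calC$ and their negations; since $\calC$ is closed under polynomial AND and negation, each $c_i$ has a $\calC$ circuit (and there are only $\poly(n)$ of them). Second, the weighted threshold test $\sum_i 2^{h-r_i} c_i(x) \ge 2^{h-1}$ has integer weights bounded by $2^h = \poly(n)$ and can be rewritten as an unweighted threshold over $\poly(n)$ bits by repeating each $c_i$ with multiplicity $2^{h-r_i}$. Since the unweighted threshold function lies in $\textit{mon-}\TC^0 \subseteq \textit{mon-}\calC$, composing the monotone threshold circuit on top of the $\poly(n)$ copies of the $c_i$ circuits yields a $\calC$ circuit computing $f$.

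The only substantive step is checking that this composition actually fits in $\calC$: we need $\calC$ to be closed under composing a $\textit{mon-}\calC$ circuit on top of polynomially many $\calC$ subcircuits. This follows from the closure properties of $\calC$ under polynomial $\vee, \wedge, \neg$ (essentially, expanding the top monotone circuit and substituting in the subcircuits only increases size polynomially and depth by an additive constant for the relevant classes such as $\TC^0$ or $\NC^1$). I do not anticipate a serious obstacle here; the main care is in writing the weighted threshold as an unweighted one with only $\poly(n)$ total inputs, which is immediate since $2^h = \poly(n)$.
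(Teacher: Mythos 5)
Your proposal is correct and follows essentially the same route as the paper: both directions rest on $\MDT{\calC}\subseteq\RMDT{\calC}$ together with Theorem~\ref{introthm:mdtc-c}, and on the characterization $f(x)=1$ iff $\sum_i 2^{h-r_i}c_i(x)\ge 2^{h-1}$ from the proof of Theorem~\ref{thm:rmdt-main}, with the weighted threshold (polynomially bounded weights, since $h=\bigO(\log n)$) absorbed into $\textit{mon-}\TC^0\subseteq\textit{mon-}\calC$ and the $c_i$'s handled by the closure properties of $\calC$. Your explicit reduction of the weighted threshold to an unweighted one by duplicating inputs is just a spelled-out version of the paper's remark that polynomial weighted threshold is in $\TC^0$.
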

\begin{proof}
	To show $\RMDT{\calC} \subseteq \calC$, let $f:\{0,1\}^n\to \{0,1\}$ be any function in $\RMDT{\calC}$.
	By definition, there is an RMDT called $T$ with height $h=\bigO(\log n)$. Let $\ell_1,\ell_2,\dots \ell_k$ be all the 1-labeled leaves; $r_1,r_2,\dots r_k$ be the number of random nodes from root to the corresponding leaf; and let $c_i:=\bigwedge f_{pi} \bigwedge \overline{f_{qi}} $ denote the characteristic function corresponding to $\ell_i$, where the $f_{pi}$'s are the monotone queries to be passed and $f_{qi}$'s to be failed in the root-$\ell_i$ path. Notice that since the queries have circuits in $\calC$, so do the $c_i$ functions. By the characterization in the proof of Theorem~\ref{thm:rmdt-main} (Equation~\eqref{eqn:1}), we have $f(x)=1$ iff $\sum_i 2^{h-r_i}c_i(x)\ge 2^{h-1}$.
	
	The RHS may be written as $\sum_i P_i(n) c_i(x)\ge Q(n)$, where the coefficients $P_i$'s and $Q$ are polynomial in $n$, since $h=\bigO(\log n)$. 
	
	We have $f(x)=1 \iff \sum_i P_i(n) c_i(x)\ge Q(n)$. As polynomial weighted threshold can be done in $\TC^0$ (and hence has a circuit in $\calC$) and so can be the $c_i$'s, we can construct a circuit in class $\calC$ for the task on the RHS, which clearly is the function $f$ on the LHS.

	We therefore have $\RMDT{\calC} \subseteq \calC$. Since $\MDT{\calC} \subseteq \RMDT{\calC}$, and we already proved that $\MDT{\calC}=\calC$ (Theorem~\ref{introthm:mdtc-c}), we get $\RMDT{\calC}=\MDT{\calC}=\calC$.
\end{proof}

We have a partial result for query restricted randomized MDTs in case of $\AC^0$:

\begin{theorem}
	\label{rmdt-ac0}
	$\RMDT{\AC^0} \subseteq \AC^0$.
\end{theorem}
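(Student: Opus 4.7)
The plan is to derandomize the randomized monotone decision tree by unfolding its random choices into a single upfront random seed of length $h$, thereby producing a family of polynomially many deterministic monotone decision trees, and then aggregating their outputs using the $\AC^0$ circuit for approximate majority due to Ajtai.

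Let $T$ be an RMDT of height $h = \bigO(\log n)$ with every query in $\textit{mon-}\AC^0$ that computes $f$. For each seed $s \in \{0,1\}^h$, I would define the deterministic monotone decision tree $T_s$ by walking through $T$ and, at every random node encountered along the walk on input $x$, branching according to the next unused bit of $s$ (monotone query nodes still branch on $x$). Any root-to-leaf walk in $T$ passes through at most $h$ random nodes, so $T_s(x)$ is well defined. Because each random bit of $T$ is independent and uniform, the acceptance probability of $T$ on $x$ equals $2^{-h}\sum_{s \in \{0,1\}^h} T_s(x)$, since bits of $s$ that are not consulted on a given walk simply marginalize out. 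Each $T_s$ has depth at most $h = \bigO(\log n)$ with queries in $\textit{mon-}\AC^0$, so $T_s \in \MDT{\AC^0}$, and by Lemma~\ref{lem:mdtc-c} each $T_s(x)$ is computable by a polynomial-size $\AC^0$ circuit. Because the queries appearing across the different $T_s$'s are all drawn from the $\poly(n)$-many queries of $T$, sharing these query circuits yields a single $\AC^0$ circuit of size $\poly(n)$ that outputs the entire family $\{T_s(x)\}_{s \in \{0,1\}^h}$ simultaneously.

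The RMDT promise now translates into: if $f(x) = 1$ then $\sum_s T_s(x) \geq (2/3)\,2^h$, while if $f(x) = 0$ then $\sum_s T_s(x) \leq (1/3)\,2^h$. Since $2^h = \poly(n)$, and since Ajtai's theorem places approximate majority on $N$ bits (under the $\leq N/3$ versus $\geq 2N/3$ promise) inside $\AC^0$, feeding the bits $T_s(x)$ into the approximate-majority circuit yields the required $\AC^0$ circuit for $f$, establishing $\RMDT{\AC^0} \subseteq \AC^0$.

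The main obstacle is setting up the ``upfront random seed'' formulation cleanly so that the acceptance probability really equals $2^{-h}\sum_s T_s(x)$; this amounts to verifying that marginalizing over the bits of $s$ that go unused on a given walk faithfully reproduces the ``sample-on-demand'' distribution that defines the RMDT. A secondary bookkeeping point is that each $T_s$ may visit a different subset of queries of $T$ and have a different effective depth, so the final circuit is not one $\AC^0$ circuit for a fixed $T_s$ but $2^h$ separate $\AC^0$ sub-circuits wired in parallel; sharing the $\textit{mon-}\AC^0$ query circuits across all sub-circuits is what keeps the total size polynomial in $n$.
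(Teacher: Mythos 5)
Your proposal is correct and follows essentially the same route as the paper: the paper's normal-form step that pushes all random nodes to the top of the tree is exactly your ``upfront seed'' unfolding, after which both arguments reduce $f$ to a $1/3$-vs-$2/3$ promise majority over $\poly(n)$ deterministic trees in $\MDT{\AC^0} \subseteq \AC^0$ (via Lemma~\ref{lem:mdtc-c}) and invoke the Ajtai--Ben-Or approximate-majority circuit.
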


Before describing the proof, we will first obtain normal forms for Randomized Monotone Decision Tree model. Let any Boolean function $f$ on $n$ variables be computed by a RMDT called $T$. We will modify $T$ into an RMDT $T'$ so that $T'$ also computes $f$, and additionally the following properties hold for $T'$.

\begin{itemize}
	\item \textbf{The number of random nodes is the same in all the paths from root of $T'$ to the leaves.} To achieve this, let $r$ be the maximum number of random nodes along any root to leaf path in $T$, and let $\ell$ be a leaf whose path to the root of $T$ contains $d(<r)$ many random nodes. Then we replace the leaf $\ell$ with a random node and make both its children leaves with label same as $\ell$. Now, both the newly added leaves can be seen to be at a distance of $d+1$ from the root of $T$. It is important to note that this modification does not change the probability of reaching a correct leaf. If the original leaf $\ell$ was reached in a computation with some probability, it is with the same probability that some child of $\ell$ will be reached (it does not matter which as both are labeled same as label of $\ell$). We perform this operation for all the leaves of $T$ that are at distance less than $h$ from the root of $T$, until no such leaves exist. Thus, in the final tree $T'$, all the paths contain the same number of random nodes (namely $r$).
	
	\item \textbf{$T'$ is a complete binary tree}. If $T$ already satisfies this property, we are done. Otherwise, suppose $\ell$ is some leaf of $T$ whose distance (i.e., number of queries) from the root is $d<h$, where $h$ is the height of $T$. Now consider the following transformation of $T$. We replace $\ell$ with the monotone query \textbf{1}, and label its left leaf arbitrarily and the right leaf with the label of $\ell$. The two newly introduced leaves can be seen to be at a distance of $d+1$ from the root of $T$. Also, $T$  still computes the same function as in any event that the leaf $\ell$ is reached upon computation on an input in the original $T$, the computation in the new $T$ ends at the newly introduced right leaf, whose label is the same as $\ell$. We keep performing this transformation for all the leaves that are at distance less than $h$ from the root, until finally all leaves are at a distance of $h$ from the root.
	
	Note that the above properties can both be made satisfied by $T'$ by making the same transformations; but first make the number of random nodes same in all paths, and only then make all paths the same length.
	
	\item \textbf{All the random nodes are at the top levels of $T'$.} By this, we mean that in any path from root to a leaf, the first few nodes are random, and the remaining are monotone nodes. For this, we first perform the above two transformations on $T$, after which say there are $r$ random nodes in every path and each path length is same as the height $h$ of $T$. 
	
	The idea to achieve this is: instead of making random queries in the process of computation by the monotone queries, we first fix a ``randomness'' and then proceed with the computation by the monotone queries. 
	
	We define $2^r$ many deterministic versions of $T$ as the set $\{T_s~|~\text{$s$ is a binary sequence of length $r$}\}$. A tree $T_s$ has the same structure as $T$ in terms of monotone nodes, leaves and their labels. A random node $v$ is replaced with the monotone function \textbf{0} if the $i^{th}$ bit of $s$ is 0, and with \textbf{1} otherwise, where $i$ is the number of random nodes in the path from $v$ to the root of $T$. 
	
	The new tree $T'$, whose height will be $r+h$ is constructed as follows: The top of $T'$ shall be a complete binary MRDT of height $r$ with all the nodes making random queries. This results in $2^r$ ``dangling ends''. For all binary sequences $s$, the dangling end \textit{addressed} by the binary sequence $s$ shall lead to the root of $T_s$. This completes the construction of $T'$. Immediately observe that $T'$ satisfies the desired property of all the random nodes being on the top. Now, to show that $T'$ is equivalent to $T$, we will argue that for any input $x$, the probability of reaching 1 in $T$ is equal to that of reaching 1 in $T'$. The former is equal to $\sum_i (1/2)^r.c_i(x)$, where $i$ runs over all the 1-labeled leaves $\ell_1,\ell_2,\dots$ of $T$ and $c_i$'s are their respective characteristic functions defined in Theorem 17. As there are $2^r$ leaves in $T'$ corresponding to each leaf $\ell_i$ in $T$, the latter probability is equal to $\sum_{i} \sum_s (1/2)^r.c^s_i(x)$. Here $c^s_i$ denotes the characteristic function of the leaf corresponding to $\ell_i$ in the tree $T_s$. To show the equality of these two probabilities, we will show that for any $i$, $c_i(x) = \sum_s c^s_i(x)$. As the trees $T_s$'s are same as $T$ with random nodes changed to \textbf{0} or \textbf{1}, observe that for any $s$, $c^s_i(x) \Rightarrow c_i(x)$. Now, suppose $c^s_i(x)=1$ for some $s$. In addition to $c_i(x)=1$, this also means that the characteristic product corresponding only to the originally random nodes in the root to $\ell_i$ path of $T_s$ is 1. This means that each of the literals in the product is 1, which translates to the fact that we can determine whether each of the originally random nodes of $T_s$ were \textbf{0} or \textbf{1} (since we know the root to $\ell_i$ path). But this uniquely determines an $s$ due to the definition of $T_s$. Thus at most one of $c^s_i(x)$ would be 1. Using this along with $\forall s, c^s_i(x) \Rightarrow c_i(x)$, we get the desired equation $c_i(x)=\sum_s c^s_i(x)$. Hence $T'$ computes $f$ too.
\end{itemize}

\begin{proof}[Proof of Theorem~\ref{rmdt-ac0}]
	Let a Boolean function $f$ on $n$ variables have an RMDT called $T$ of height $h=\bigO(\log n)$ with all its monotone queries having monotone $\AC^0$ circuits. From the three normal forms proved above, we may assume that the first $r$ levels of $T$ are random nodes, and the below queries all have monotone $\AC^0$ circuits. This is justified, since those normal forms do not increase the tree height beyond $\bigO(\log n)$, and the query functions still have monotone $\AC^0$ circuits.
	
	%\textit{The idea is to simplify the deterministic parts of $T$ into single $\AC^0$ functions (may be non-monotone) using the fact that $MDT(\AC^0) \subseteq \AC^0$. Now, since $Pr[reaching~1] \ge 2/3~if~f(x)=1, and \le 1/3 ~otherwise$, we can just apply Approximate Majority on the above stated simplified $\AC^0$ functions to determine $f(x)$.}
	
	As there are no random queries beyond height $r$, each sub-tree rooted at a node just below a random node in $T$, resembles a (deterministic) MDT. Since we have established $\MDT{\AC^0} \subseteq \AC^0$ (Lemma~\ref{lem:mdtc-c}), each of these sub-trees computes an $\AC^0$ function, say $f_1,f_2,\dots f_{2^r}$. For an input $x$, recall that it means that probability of reaching a leaf with label $f(x)$ in $T$ is at least 2/3, which in other words means that at least 2/3$^{rd}$ of all leaves that are reachable with non-zero probability are labeled $f(x)$. Since the sub-trees are deterministic, there is a unique leaf reached in a sub-tree for a given $x$. Hence, at least 2/3$^{rd}$ of the sub-trees reach $f(x)$ labeled leaf (correspondingly $f_i(x)=f(x)$) on input $x$. Therefore, $f$ is essentially a majority function over $f_i$'s, with the added advantage that the majority bits are at least 2/3$^{rd}$ of total. Ajtai and Ben-Or \cite{AB84} proved the existence of an $\AC^0$ circuit that computes majority in such cases. As all the $2^r=\poly(n)$ many functions $f_i$'s are also in $\AC^0$, we can obtain an $\AC^0$ circuit for $f$. Thus, $f\in \AC^0$ and $\RMDT{\AC^0} \subseteq \AC^0$.
	
\end{proof}

\section{Discussion and Open Problems}

We explore the power of deterministic MDTs (adaptive and non-adaptive) and MDLs with most general monotone queries, and establish the relations between the corresponding complexity measures and alternation (Sections~\ref{sec:d-mdt} and~\ref{sec:d-namdt}). We also introduce NMDTs and RMDTs and understand their power (Sections~\ref{sec:ndt} and~\ref{sec:rdt}). Exploring the case of restricted queries, we show containments between various circuit complexity classes and the corresponding deterministic and randomized MDT classes (Section~\ref{sec:mdt-query-restrict}). By using a construction due to \cite{SW93}, we prove an upper bound on the number of negations in $\AC^0$ circuits, and justify its role in potentially solving $\MDT{\AC^0}=_?\AC^0$ (Sub-section~\ref{subsec:mdt-ac0}).

The question of $\MDT{\AC^0}=_?\AC^0$ is one of the main problems left unanswered by us. It is not even known whether the simple function like $f=\overline{x_1}x_2 \lor \overline{x_3}x_4 \lor \dots \lor \overline{x_{n-1}}x_n$ is in $\MDT{\AC^0}$, or even in $\RMDT{\AC^0}$. In this direction, we first note that the number of negations used in Theorem~\ref{introthm:ne-ac0} cannot be improved asymptotically, for if we can, then we can show that any function in $\TC^0$ can be computed by a $\TC^0$ circuit with $o(n^{\epsilon})$ negations (for any $0<\epsilon<1$), which contradicts Corollary~3.3(2) of \cite{SW93}. We note that if the negations bound in Theorem~\ref{thm:neg-ac0-lb} can be improved for some function, that shows the separation. If it cannot be improved, that means every function in $\AC^0$ can be computed using an $\AC^0$ circuit with ${\cal{O}}(n^{\epsilon})$ negation gates for arbitrarily small constant $\epsilon > 0$, which would result in an improvement of Theorem~\ref{thm:sublinear-ac0}.
% to $\AC^0 \subseteq \MDT^{\Omega(n^{\epsilon})}(\AC^0)$.

We also comment here about an alternative way to define the classes $\MDT{\calC},\RMDT{\calC}$ and $\MDL{\calC}$, where we can restrict the queries by imposing that they be monotone but have possibly non-monotone circuits in $\calC$, rather than monotone circuits. This version results in potentially more powerful classes. With very similar proofs, we would still get $\MDT{\calC}=\RMDT{\calC}={\calC}$ when ${\calC} \supseteq \TC^0$. Further, Proposition~\ref{prop:equal-alt-decomp} implies that for any function $f \in \TC^0$ with uniform alternation, we get $f \in \MDTTT{\lceil{\log (\alt(f)+1)} \rceil}{\TC^0}$. The effect on our other results is unclear.

\bibliographystyle{plain}
\bibliography{ref}

\ifthenelse{\equal{\movetoappendix}{1}}{
        \appendix
        \section{Appendix}
        \includecollection{appendix}
} { }
\end{document}